\newcommand {\exe} {\stackrel{\cdot} {=}}
\newcommand {\bh} {\mbox{\boldmath $h$}}
\newcommand {\bs} {\mbox{\boldmath $s$}}
\newcommand {\bst} {\mbox{\footnotesize\boldmath $s$}}
\newcommand {\bvt} {\mbox{\footnotesize\boldmath $v$}}
\newcommand {\but} {\mbox{\footnotesize\boldmath $u$}}
\newcommand {\bv} {\mbox{\boldmath $v$}}
\newcommand {\bx} {\mbox{\boldmath $x$}}
\newcommand {\by} {\mbox{\boldmath $y$}}
\newcommand {\bA} {\mbox{\boldmath $A$}}
\newcommand {\bE} {\mathbb{E}}
\newcommand {\pr} {\mathbb{P}}
\newcommand {\bH} {\mbox{\boldmath $H$}}
\newcommand {\bI} {\mbox{\boldmath $I$}}
\newcommand {\bpi} {\mbox{\boldmath $\Pi$}}
\newcommand {\bS} {\mbox{\boldmath $S$}}
\newcommand {\bU} {\mbox{\boldmath $U$}}
\newcommand {\calHs} {\mbox{\boldmath $\calH^{\bst}$}}
\newcommand {\bV} {\mbox{\boldmath $V$}}
\newcommand {\bW} {\mbox{\boldmath $W$}}
\newcommand {\bX} {\mbox{\boldmath $X$}}
\newcommand {\bY} {\mbox{\boldmath $Y$}}
\newcommand{\bHtt}{\bH_{\bst}}
\newcommand{\bItt}{\bI_{\bst}}
\newcommand{\calA}{{\cal A}}
\newcommand{\calB}{{\cal B}}
\newcommand{\calD}{{\cal D}}
\newcommand{\calF}{{\cal F}}
\newcommand{\calG}{{\cal G}}
\newcommand{\calH}{{\cal H}}
\newcommand{\calI}{{\cal I}}
\newcommand{\calN}{{\cal N}}
\newcommand{\calO}{{\cal O}}
\newcommand{\calR}{{\cal R}}
\newcommand{\calS}{{\cal S}}
\newcommand{\calV}{{\cal V}}
\newcommand{\define}{\stackrel{\triangle}{=}}
\newcommand{\be}{\begin{equation}}
\newcommand{\ee}{\end{equation}}
\newcommand{\beqna}{\begin{eqnarray}}
\newcommand{\eeqna}{\end{eqnarray}}
\DeclareFontFamily{U}{mathx}{\hyphenchar\font45}
\DeclareFontShape{U}{mathx}{m}{n}{
      <5> <6> <7> <8> <9> <10>
      <10.95> <12> <14.4> <17.28> <20.74> <24.88>
      mathx10
      }{}
\DeclareSymbolFont{mathx}{U}{mathx}{m}{n}
\DeclareMathSymbol{\bigtimes}{1}{mathx}{"91}
\newcommand{\abs}[1]{\left|#1\right|}
\newtheorem{theorem}{Theorem}
\newtheorem{proof}{Proof}
\newtheorem{claim}{Claim}
\newtheorem{lemma}{Lemma}
\newcommand{\p}[1]{\left(#1\right)}
\newcommand{\pp}[1]{\left[#1\right]}
\newcommand{\ppp}[1]{\left\{#1\right\}}
\newcommand{\norm}[1]{\left\|#1\right\|}
\def\bpi{{\mbox{\boldmath $\Pi$}}}
\begin{document}

\title{On Compressive Sensing in Coding Problems: A Rigorous Approach}
\author{Wasim~Huleihel,
				~Neri~Merhav,
				~Shlomo~Shamai~(Shitz)
				\\
        Department of Electrical Engineering \\
Technion - Israel Institute of Technology \\
Haifa 32000, ISRAEL\\
E-mail: \{wh@tx, merhav@ee, sshlomo@ee\}.technion.ac.il
\thanks{$^\ast$The work of Huleihel and Merhav was partially supported by The Israeli Science Foundation (ISF), Grant no. 412/12. The work of Shamai was supported by The Israeli Science Foundation (ISF), the European Commission in the framework of the FP7 Network of Excellence in Wireless COMmunications NEWCOM\# and  by S. and N. Grand Research Fund.}
}
%\date{\color{blue}March 18 2014}
\maketitle

\IEEEpeerreviewmaketitle

\begin{abstract}
\boldmath We take an information theoretic perspective on a classical sparse-sampling noisy linear model and present an analytical expression for the mutual information, which plays central role in a variety of communications/processing problems. Such an expression was addressed previously either by bounds, by simulations and by the (non-rigorous) replica method. The expression of the mutual information is based on techniques used in \cite{Wasim2}, addressing the minimum mean square error (MMSE) analysis. Using these expressions, we study specifically a variety of sparse linear communications models which include coding in different settings, accounting also for multiple access channels and different wiretap problems. For those, we provide single-letter expressions and derive achievable rates, capturing the communications/processing features of these timely models.
\end{abstract}

\begin{IEEEkeywords}
Channel coding, state dependent channels channel, wiretap channel, multiple access channel (MAC), replica method, random matrix theory. 
\end{IEEEkeywords}

\section{Introduction}

\IEEEPARstart{C}{ompressed} sensing \cite{Tao,Donho1} is a collection of signal processing techniques that compress sparse analog vectors by means of linear transformations. Using some prior knowledge on the signal \emph{sparsity}, and by designing efficient encoders and decoders, the goal is to achieve effective compression in the sense of taking a much smaller number of measurements than the dimension of the original signal. Recently, a vast amount of research was conducted concerning sparse random Gaussian signals which are very relevant to wireless communications, see, for example, \cite{Wasim2,WuVerdu,Gastpar1,Tulino} and many references therein. 

A general setup of compressed sensing is shown in Fig. \ref{fig:Moisycompressed}. The mechanism is as follows: A real vector $\bX\in\mathbb{R}^n$ is mapped into $\bV\in\mathbb{R}^k$ by an encoder (or compressor) $f:\mathbb{R}^n\to\mathbb{R}^k$. The decoder (decompressor) $g:\mathbb{R}^k\to\mathbb{R}^n$ receives $\bY$, which is a noisy version of $\bV$, and outputs $\hat{\bX}$ as the estimation of $\bX$. The sampling rate, or the compression ratio, is defined as
\begin{align}
q \define \frac{k}{n}.
\end{align}
In this paper, the encoder is constrained to be a \emph{linear} mapping, denoted by a matrix $\bH\in\mathbb{R}^{k\times n}$, usually called the \emph{sensing matrix} or \emph{measurement matrix}, where $\bH$ is assumed to be a random matrix with i.i.d. entries of zero mean and variance $1/n$. On the decoder side, most of the compressed sensing literature focuses on low-complexity decoding algorithms which are robust to noise, for example, decoders based on convex optimization, greedy algorithms, etc. (see, for example \cite{Gastpar1,cc6,cc7,cc8}). Although the decoding is, of course, an important issue, it is not in the focus of this work. The input vector $\bX$ is assumed to be random, distributed according some probability density that models the sparsity. Finally, the noise is assumed to additive white and Gaussian. 
%Note that this statistical assumption (or, Bayesian formulation) is incompatible to ``usual" compressive sensing models, in which the underlying signal is assumed to be deterministic and the performance is measured on a worst-case basis with respect to $\bX$ (Hamming theory). This statistical approach has been previously adopted in the literature (see, for example, \cite{cc8,Gastpar1,Tanner2,Wu2,Dono,Tulino,WuVerdu,GuoShamaiBaron}). 
\begin{figure}
\centering
\begin{pspicture}(0,-1.2592187)(11.682813,1.2992188)
\psframe[linewidth=0.04,dimen=outer](4.1209373,0.22078125)(1.5009375,-1.2592187)
\psframe[linewidth=0.04,dimen=outer](10.020938,0.22078125)(7.4209375,-1.2592187)
\psellipse[linewidth=0.04,dimen=outer](5.8509374,-0.43921876)(0.53,0.52)
\psline[linewidth=0.04cm,arrowsize=0.05291667cm 2.0,arrowlength=1.4,arrowinset=0.4]{->}(0.3009375,-0.47921875)(1.5609375,-0.49921876)
\psline[linewidth=0.04cm,arrowsize=0.05291667cm 2.0,arrowlength=1.4,arrowinset=0.4]{->}(6.4009376,-0.43921876)(7.3809376,-0.45921874)
\psline[linewidth=0.04cm,arrowsize=0.05291667cm 2.0,arrowlength=1.4,arrowinset=0.4]{->}(4.1209373,-0.45921874)(5.3409376,-0.47921875)
\psline[linewidth=0.04cm,arrowsize=0.05291667cm 2.0,arrowlength=1.4,arrowinset=0.4]{->}(10.060938,-0.43921876)(11.320937,-0.45921874)
\psline[linewidth=0.04cm,arrowsize=0.05291667cm 2.0,arrowlength=1.4,arrowinset=0.4]{->}(5.9009376,0.86078125)(5.9209375,0.08078125)
\usefont{T1}{ptm}{m}{n}
\rput(2.7982812,-0.34921876){Encoder}
\usefont{T1}{ptm}{m}{n}
\rput(2.8323438,-0.86921877){$f_n:\;\mathbb{R}^n\to\mathbb{R}^k$}
\usefont{T1}{ptm}{m}{n}
\rput(8.668906,-0.38921875){Decoder}
\usefont{T1}{ptm}{m}{n}
\rput(8.752344,-0.86921877){$g_n:\;\mathbb{R}^k\to\mathbb{R}^n$}
\usefont{T1}{ptm}{m}{n}
\rput(10.712344,-0.16921875){$\hat{\bX}$}
\usefont{T1}{ptm}{m}{n}
\rput(6.8723435,-0.18921874){$\bY$}
\usefont{T1}{ptm}{m}{n}
\rput(5.8523436,1.1107812){$\bW$}
\usefont{T1}{ptm}{m}{n}
\rput(4.742344,-0.18921874){$\bV$}
\usefont{T1}{ptm}{m}{n}
\rput(1.0023438,-0.24921875){$\bX$}
\psline[linewidth=0.04cm](5.8409376,-0.25921875)(5.8409376,-0.6592187)
\psline[linewidth=0.04cm](5.6409373,-0.45921874)(6.0409374,-0.45921874)
\end{pspicture} 
\centering
\caption{Noisy compressed sensing setup.}
\label{fig:Moisycompressed}
\end{figure}
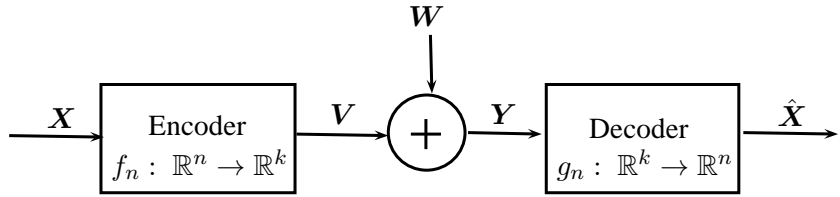

In the literature, there is a great interest in finding asymptotic formulas of some information and estimation measures, e.g., the minimum mean squared error (MMSE), mutual information rates, and other information measures. Finding these formulas is, in general, extremely complicated, and most of the works (e.g., \cite{WuVerdu,Tulino,GuoShamaiBaron}) that deal with this problem resort to using the \emph{replica} method which is borrowed from the field of statistical physics. Although the replica method is powerful, it is non-rigorous. Recently, in \cite{Wasim2} a rigorous derivation of the asymptotic MMSE was carried out, and it was shown that the results obtained support the previously known replica predictions. The key idea in our analysis is the fact that by using some direct relationship between optimum estimation and certain partition functions \cite{Neri1}, the MMSE can be represented in some mathematically convenient form which (due to the previously mentioned input and noise Gaussian statistics assumptions) consists of functionals of the \emph{Stieltjes} and \emph{Shannon} transforms. This observation allows us to use some powerful results from random matrix theory, concerning the asymptotic behavior (a.k.a. deterministic equivalents) of the Stieltjes and Shannon transforms (see e.g., \cite{baisilbook,coulbook} and many references therein). Here, however, we are concerned with some input-output mutual information rates, rather than the asymptotic MMSE. Nonetheless, we show that these information rates are readily obtained from the results of \cite{Wasim2}. It is worthwhile to emphasize that these kind of mutual information rates formulas are useful and important. For example, with relation to this paper, recently, in \cite{TulinoCaire}, the capacity was derived for single-user discrete-time channels subject to both frequency-selective and time-selective fading, where the channel output is observed in additive Gaussian noise. This result is indeed important due to the fact that various mobile wireless systems are subject to both frequency-selective fading and to time-selective fading.

The works cited above focus on uncoded continuous signals, while in this paper, we concentrate on coded communication, similarly to \cite{Peleg}. In other words, we use coded sparse signals, and the objective is to achieve reliable reconstruction of the signal and its support. In \cite{Peleg}, sparse sampling of coded signals at sub-Landau sampling rates was considered. It was shown that with coded and with discrete signals, the Landau condition may be relaxed, and the sampling rates required for signal reconstruction and for support detection can be lower than the effective bandwidth. Equivalently, the number of measurements in the corresponding sparse sensing problem can be smaller than the support size. Tight bounds on information rates and on signal and support detection performance are derived for the Gaussian sparsely sampled channel and for the frequency-sparse channel using the context of state dependent channels. It should be emphasized that part of the coding principles and problems that we will consider in this paper have already appeared in \cite{Peleg}, but relying on bounds. Here, the new results facilitate a rigorous discussion. 

The main goal of this paper is to use the previously mentioned mutual information rates in order to give some new closed-form achievable rates in various channel coding problems, in the wiretap channel model, and in the multiple access channel (MAC). 
Particularity, in the first part of these channel coding problems, we will consider three different cases that differ in the assumptions about the knowledge available at the transmitter and the receivers. For example, in Subsection \ref{Uncontrolled1}, we will consider the case in which the sparsity pattern cannot be controlled by the transmitter, but it is given beforehand. This falls within the well-known framework of state dependent channels \cite{Abbas} (e.g., the Shannon settings \cite{ShannonCaus} and the Gel'fand-Pinsker channel \cite{Gelfand2}). Another interesting result is that when the sparsity pattern is controlled by the transmitter, a memoryless source maximizes the mutual information rate. It is important to comment that this result is attributed to the fact that our mutual information rate formula is valid for sources with memory, which is not the case in previously reported results that were based on the replica method. In the second and third parts of the applications, which deal with the wiretap and the MAC models, respectively, we will consider several cases in the same spirit. For each of these cases, we provide practical motivations and present numerical examples in order to gain some quantitative feeling of what is possible.

The remaining part of this paper is organized as follows. In Section \ref{sec:model}, the model is presented and the problem is formulated. In Section \ref{sec:coding}, the main results concerning channel coding problems are presented and discussed along with a numerical example that demonstrates the theoretical results. In Section \ref{sec:wire}, achievable rates for the wiretap channel model are presented. Then, in Section \ref{sec:MAC}, we present an implication for the MAC, and finally, our conclusions appear in Section \ref{sec:Conclusion}.

\section{Model and Problem Formulation}\label{sec:model}

Consider the following stochastic model: Each component, $X_i$, $1\leq i\leq n$, of $\bX = \p{X_1,\ldots,X_n}$, is given by $X_i = S_iU_i$ where $\ppp{U_i}$ are i.i.d. Gaussian random variables with zero mean and variance $\sigma^2$, and $\ppp{S_i}$ are binary random variables, taking values in $\ppp{0,1}$, independently of $\ppp{U_i}$. Concerning the random vector $\bS = \p{S_1,\ldots,S_n}$ (or, \emph{pattern} sequence), similarly as in \cite{Wasim2}, we postulate that the probability $\pr\p{\bS}$ depends only on the ``\emph{magnetization}"\footnote{The term ``magnetization" is borrowed from the field of statistical mechanics of spin array systems, in which $S_i$ is taking values in $\ppp{-1,1}$. Nevertheless, for the sake of convince, we will use this term also in our problem.}
\begin{align}
m_s \define \frac{1}{n}\sum_{i=1}^nS_i.
\end{align}
In particular, we assume that
\begin{align}
\pr\p{\bS} = C_n\cdot\exp\ppp{nf\p{m_s}}
\label{inputassmeas}
\end{align}
where $f\p{\cdot}$ is a certain function that is independent of $n$, and $C_n$ is a normalization constant. Note that for the customary i.i.d. assumption, $f$ is a linear function. By using the method of types \cite{Cizer}, we obtain\footnote{Throughout this paper, for two positive sequences $\ppp{a_n}$ and $\ppp{b_n}$, the notations $a_n\exe b_n$ and $a_n\approx b_n$ mean equivalence in the exponential order, i.e., $\lim_{n\to\infty}\frac{1}{n}\log\p{a_n/b_n} = 0$, and $\lim_{n\to\infty}\p{a_n/b_n} = 1$, respectively. For two sequences $\ppp{a_n}$ and $\ppp{b_n}$, the notation $a_n\asymp b_n$ means that $\lim_{n\to\infty}\p{a_n-b_n} = 0$.}
\begin{align}
C_n &= \p{\sum_{\bst\in\ppp{0,1}^n}\exp\ppp{nf\p{m_s}}}^{-1}\nonumber\\
&= \p{\sum_{m\in\pp{0,1}}\Omega\p{m}\exp\ppp{nf\p{m}}}^{-1}\nonumber\\
&\exe \exp\ppp{-n\cdot\max_m\ppp{\calH_2\p{m}+f\p{m}}}\label{apriorimag0}\\
& = \exp\ppp{-n\pp{\calH_2\p{m_a}+f\p{m_a}}},
\label{apriorimag}
\end{align} 
where $\Omega\p{m}$ designates the number of binary $n$-vectors with magnetization $m$, $\calH_2\p{\cdot}$ denotes the binary entropy function, and $m_a$ is the maximizer of $\calH_2\p{m}+f\p{m}$ over $\pp{0,1}$. In other words, $m_a$ is the \emph{a-priori} magnetization that \emph{dominates} $\pr\p{\bS}$. Finally, note that in the i.i.d. case, each $X_i$ is distributed according to following mixture distribution (a.k.a. Bernoulli-Gaussian measure)
\begin{align}
P\p{x} = \p{1-p}\cdot\delta\p{x} + p\cdot P_G\p{x}
\label{mes}
\end{align}
where $\delta\p{x}$ is the Dirac function, $P_G\p{x}$ is a Gaussian density function and $0\leq p\leq1$. Then, by the law of large numbers (LLN), $\frac{1}{n}\norm{\bX}_0\stackrel{\mathbb{P}}{\rightarrow}p$, where $\norm{\bX}_0$ designates the number of non-zero elements of a vector $\bX$. Thus, it is clear that the weight $p$ parametrizes the signal sparsity and $P_G$ is the prior distribution of the non-zero entries.

Finally, we consider the following observation model
\begin{align}
\bY = \bA\bH\bX+\bW,
\label{GeneralModel}
\end{align}
where $\bY$ is the observed channel output vector of dimension $n$, $\bA$ is $n\times n$ diagonal matrix with i.i.d. diagonal elements with $\pr\ppp{\bA_{i,i}=1} = q = 1-\pr\ppp{\bA_{i,i}=0}$ where $\bA_{i,i}$ denotes the $i$th diagonal element, $\bH$ is $n\times n$ random matrix, with i.i.d. entries of zero mean and variance $1/n$. The components of the noise $\bW$ are i.i.d. Gaussian random variables with zero mean and unit variance. The matrix $\bA\bH$ is also known as the \emph{sensing matrix}. We will assume that $\bA$ and $\bH$ are available at the receiver, and that $\bA$ is fixed, namely, given some realization, which determines the number of ones on the diagonal, which will be denoted by $k$. We denote by $q \define k/n$ the sampling rate, or the compression ratio. 

In this paper, we are concerned with the following \emph{mutual information rates} 
\begin{align}
\calI_1\define \limsup_{n\to\infty}\frac{I\p{\bY;\bX\vert\bA,\bH}}{n},
\label{calII1}
\end{align}
and
\begin{align}
\calI_2\define \limsup_{n\to\infty}\frac{I\p{\bY;\bU\vert\bA,\bH,\bS}}{n},
\label{calII2}
\end{align}
which are central in a variety of communications and processing models, see \cite{TulinoCaire,Tulino,Peleg}, and references therein. Usually, $\calI_1$ is evaluated using the \emph{replica method} (see, e.g., \cite{Tulino,GuoShamaiBaron}), while for $\calI_2$ a classical closed-form expression exists \cite{Tulino}. Based on the results in \cite{Wasim2}, we provide an analytic expression for $\calI_1$, which is derived rigorously, and is numerically consistent with the replica predictions. The analytic expressions of $\calI_1$ and $\calI_2$ will lead us to the main objective of this paper, which is to explore the various applications of these quantities in some channel coding problems. 

\section{Mutual Information Rates}
In this subsection, we provide the analytic expressions for $\calI_1$ and $\calI_2$. In the following, we first provide a simple formula for $\calI_1$ which is based on the replica heuristics, and is proved in \cite{Tulino}. For i.i.d. sources, where $f\p{\cdot}$ is linear, we have the following result \cite[Claim 1]{Tulino}.
\begin{claim}[$\calI_1$ via the replica method]\label{th:tul}
Let $B_0,X_0,Z$ be independent random variables, with $B_0\sim \text{Bernoulli-}p$, $X_0\sim\calN\p{0,\sigma^2}$, and $Z\sim\calN\p{0,1}$, and define $V_0 \define B_0X_0$. Then, the limit supremum in \eqref{calII1} is, in fact, an ordinary limit, and
\begin{align}
\calI_1 = I\p{V_0;V_0+\eta^{-1/2}Z} + q\pp{\log\frac{q}{\eta}+\p{\frac{\eta}{q}-1}\log e}
\label{I1replica}
\end{align}
where $\eta$ is the non-negative solution of
\begin{align}
\frac{1}{\eta} = \frac{1}{q}\p{1+\text{mmse}\p{V_0\vert V_0+\eta^{-1/2}Z}}.
\label{fixPointRep}
\end{align}
If the solution of \eqref{fixPointRep} is not unique, then we select the solution that minimizes $\calI_1$ given in \eqref{I1replica}.
\end{claim}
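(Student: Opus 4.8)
The plan is to read $\calI_1$ as a quenched free energy and evaluate it by the replica method, which is indeed how the formula is obtained in \cite{Tulino}. Write $I\p{\bY;\bX\vert\bA,\bH}=h\p{\bY\vert\bA,\bH}-h\p{\bY\vert\bX,\bA,\bH}$. The second term is immediate: conditioned on $\p{\bX,\bA,\bH}$ we have $\bY=\bA\bH\bX+\bW$ with $\bW$ standard Gaussian, so $h\p{\bY\vert\bX,\bA,\bH}=\frac n2\log\p{2\pi e}$. Everything nontrivial sits in $h\p{\bY\vert\bA,\bH}=-\bE\log p\p{\bY\vert\bA,\bH}$, where $p\p{\bY\vert\bA,\bH}=\int\p{2\pi}^{-n/2}\exp\p{-\tfrac12\norm{\bY-\bA\bH\bx}^2}\,dP_{\bX}\p{\bx}$ is a partition function whose quenched disorder is $\p{\bA,\bH,\bY}$, with $\bY$ itself generated through the channel from a planted input $\bx^{\p0}$.

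Next I would apply the replica identity $\bE\log Z=\lim_{u\to0}\frac1u\log\bE Z^{u}$. For integer $u$, $\bE Z^{u}$ is an average over $u+1$ i.i.d. copies $\bx^{\p0},\bx^{\p1},\dots,\bx^{\p u}$ (the $0$-th being the planted one); performing the Gaussian average over $\bH$ first — and noting that $\bA$ merely restricts $\bH\bx$ to $k=qn$ coordinates — couples the replicas only through the overlap matrix $Q_{ab}\define\frac1n\langle\bx^{\p a},\bx^{\p b}\rangle$. Introducing $Q$ and a conjugate $\hat Q$ via integral representations and applying Laplace's method yields $\bE Z^{u}\exe\exp\ppp{n\,G\p{Q^\star,\hat Q^\star;u}}$, with $\p{Q^\star,\hat Q^\star}$ a saddle point. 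Imposing the replica-symmetric ansatz collapses $Q,\hat Q$ to a few scalars; the relevant one is an effective inverse noise level $\eta$, and the per-symbol problem reduces to the scalar Gaussian channel $V_0\mapsto V_0+\eta^{-1/2}Z$. Differentiating $G$ at $u=0$ separates into an entropic part, which produces the single-letter term $I\p{V_0;V_0+\eta^{-1/2}Z}$, and an energetic part — a $\log\det$-type, i.e. Shannon-transform, contribution of the limiting spectral distribution of the $q$-thinned Gram matrix $\p{\bA\bH}\p{\bA\bH}^{T}$ — which produces $q\pp{\log\frac q\eta+\p{\frac\eta q-1}\log e}$. Stationarity of $G$ with respect to $\hat Q$ (equivalently $\eta$) is exactly the fixed-point equation \eqref{fixPointRep}, and when it has several roots the dominant saddle is the one of least free energy, i.e. the root minimizing \eqref{I1replica} — which is the selection rule stated in the claim.

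As a rigorous cross-check, and an alternative route that bypasses replicas, I would embed the model in the SNR family $\bY_\gamma=\sqrt\gamma\,\bA\bH\bX+\bW$, use the I-MMSE relation $\frac{d}{d\gamma}\frac1n I\p{\bY_\gamma;\bX\vert\bA,\bH}=\frac12\cdot\frac1n\,\text{mmse}_n\p{\gamma}$ (with $\text{mmse}_n\p\gamma$ the MMSE of estimating $\bA\bH\bX$ from $\bY_\gamma$), integrate from $\gamma=0$ to $\gamma=1$, and insert — after the elementary change of variables relating this MMSE to the MMSE of $\bX$ analyzed in \cite{Wasim2} — the rigorously derived asymptotic MMSE of \cite{Wasim2}, which is expressed through the Stieltjes and Shannon transforms; carrying out the $\gamma$-integral reproduces \eqref{I1replica}–\eqref{fixPointRep} and shows the replica answer is exact. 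This is precisely why, as the paper indicates, $\calI_1$ can be obtained ``readily'' from the results of \cite{Wasim2}.

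The main obstacle in the replica route is the usual one: the identity $\bE\log Z=\lim_{u\to0}\frac1u\log\bE Z^{u}$ is exploited by computing $\bE Z^{u}$ for integer $u$ and then continuing analytically to $u\to0$, and one must also interchange this limit with $n\to\infty$ and argue that the replica-symmetric saddle dominates — none of which is justified within the method itself (hence the phrase ``replica heuristics''). In the rigorous route the difficulty is relocated to establishing the deterministic equivalents for the relevant random-matrix functionals uniformly in $\gamma$, and to controlling the integrand as $\gamma\to0$; those are exactly the technical ingredients supplied by \cite{Wasim2}, which is what makes the rigorous derivation possible.
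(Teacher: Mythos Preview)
The paper does not prove Claim~\ref{th:tul} at all: it is quoted verbatim from \cite[Claim~1]{Tulino} as a non-rigorous replica prediction, and your replica sketch is a faithful outline of how that reference obtains it. In that sense your first route matches the ``paper's proof'' (namely, a citation) exactly.

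Your second route, however, does not match what the paper actually does, and the claim that it ``reproduces \eqref{I1replica}--\eqref{fixPointRep}'' overstates matters. The rigorous computation based on \cite{Wasim2} is carried out in Appendix~\ref{app:1} and leads to the formula of Theorem~\ref{th:1}, not to \eqref{I1replica}. Two points are worth noting. First, the paper does \emph{not} obtain Theorem~\ref{th:1} by integrating the asymptotic MMSE via the I--MMSE relation as you propose; it evaluates $n^{-1}\bE\log\mathscr{Z}_n$ directly by (i) writing the partition function in terms of Stieltjes/Shannon-transform functionals of $\bHtt^T\bA^T\bA\bHtt$, (ii) replacing these by their deterministic equivalents (Lemma~\ref{lem:was1}), and (iii) assessing the remaining sum over $\bs$ by large deviations/Varadhan. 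Second, the paper explicitly says that the resulting expression in Theorem~\ref{th:1} ``appears to be analytically quite different'' from \eqref{I1replica} and supports equivalence only numerically (Fig.~\ref{fig:0}). So asserting that the rigorous route ``shows the replica answer is exact'' goes beyond what is established here; at best you recover Theorem~\ref{th:1} and then appeal to the numerical agreement.
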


The replica method is not rigorous. Nevertheless, based on a recent paper \cite{Wasim2}, where methods from statistical physics and random matrix theory are used, it is possible to derive $\calI_1$ rigorously. 
Before we state the result, we define some auxiliary functions of a generic variable $x\in\pp{0,1}$:
\begin{align}
&b\p{x} \define \frac{-\pp{1+\sigma^2\p{q-x}}+\sqrt{\pp{1+\sigma^2\p{q-x}}^2+4\sigma^2x}}{2\sigma^2x},\label{firstt}\\
&g\p{x} \define 1+\sigma^2xb\p{x},\\
&\bar{I}\p{x} \define \frac{q}{x}\ln{g\p{x}}-\ln{b\p{x}}-\frac{\sigma^2qb\p{x}}{g\p{x}},\label{firs2tt}\\
&V\p{x}\define\frac{\sigma^4b^2\p{x}x^2}{2g^2\p{x}},\\
&L\p{x}\define\frac{\sigma^2b\p{x}}{2g^2\p{x}},
\end{align}
and
\begin{align}
&t\p{x}\define f\p{x}-\frac{x}{2}\bar{I}\p{x}+V\p{x}\pp{m_aq\sigma^2+q}.\label{lastt}
\end{align}
The mutual information rate $\calI_1$ is given in the following theorem.
\begin{theorem}[$\calI_1$ via the results of \cite{Wasim2}]\label{th:1}
Let $Q$ be a random variable, distributed according to
\begin{align}
\pr_Q\p{w} = \frac{1-m_a}{\sqrt{2\pi P_y}}\exp\p{-\frac{w^2}{2P_y}} +\frac{m_a}{\sqrt{2\pi \p{P_y+q^2\sigma^2}}}\exp\p{-\frac{w^2}{2\p{P_y+q^2\sigma^2}}} 
\end{align}
where $m_a$ is defined as in \eqref{apriorimag} and $P_y \define m_a\sigma^2q+q$. Let us define 
\begin{align}
K\p{Q,\alpha_1,\alpha_2} \define \frac{1}{2}\pp{1+\tanh\p{\frac{L\p{\alpha_1}Q^2-\alpha_2}{2}}}
\label{KtermFluc}
\end{align}
where $\alpha_1\in\pp{0,1}$ and $\alpha_2\in\mathbb{R}$. Let $L'\p{m}$ and $t'\p{m}$ designate the derivatives of $L\p{m}$ and $t\p{m}$ w.r.t. $m$, respectively, and let $m_\circ$ and $\gamma_\circ$ be solutions of the system of equations
\begin{subequations}
\begin{align}
%&\gamma^\circ \define-\bE\ppp{K\p{Y,m^\circ,\gamma^\circ}\frac{\partial L\p{m^\circ}}{\partial m^\circ}Y^2}-\frac{\partial}{\partial m^\circ}f\p{m^\circ}+\frac{1}{2}\bar{I}\p{m^\circ}+\frac{m^\circ}{2}\frac{\partial}{\partial m^\circ}\bar{I}\p{m^\circ}\nonumber\\
%&\ \ \ \ \ \ -\frac{\partial}{\partial m^\circ}V\p{m^\circ}P_y,\nonumber\\
&\gamma_\circ \define-\bE\ppp{K\p{Q,m_\circ,\gamma_\circ}Q^2L'\p{m_\circ}}-t'\p{m_\circ},\label{magnetDet1}\\
&m_\circ \define \bE\ppp{K\p{Q,m_\circ,\gamma_\circ}}.
\label{magnetDet}
\end{align}\label{magnetddd}
\end{subequations}
In case of more than one solution, $\p{m_\circ,\gamma_\circ}$ is the pair with the largest value of
\begin{align}
t\p{m_\circ}+\p{m_\circ-\frac{1}{2}}\gamma_\circ+\bE\ppp{\frac{1}{2}L\p{m_\circ}Q^2+\ln \pp{2\cosh\p{\frac{L\p{m_\circ}Q^2-\gamma_\circ}{2}}}}.
\label{criticalPoint}
\end{align}
Finally, define
\begin{align}
h\p{\gamma_\circ,m_\circ} = \gamma_\circ\p{m_\circ-\frac{1}{2}}+\bE\ppp{\frac{1}{2}L\p{m_\circ}Q^2+\ln \pp{2\cosh\p{\frac{L\p{m_\circ}Q^2-\gamma_\circ}{2}}}}.
\label{hhdef}
\end{align}
Then, the limit supremum in \eqref{calII1} is, in fact, an ordinary limit, and
\begin{align}
\calI_1&=\frac{1}{2}\sigma^2m_aq+\calH_2\p{m_a}+f\p{m_a}-t\p{m_\circ}-h\p{\gamma_\circ,m_\circ}.
\label{DasymMMSE}
\end{align}
\end{theorem}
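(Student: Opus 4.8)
The plan is to reduce $\calI_1$ to an asymptotic free energy of a disordered system and then to evaluate that free energy with the statistical--mechanics and random--matrix machinery of \cite{Wasim2}. First I would write $I\p{\bY;\bX\vert\bA,\bH} = \bE\log p\p{\bY\vert\bX,\bA,\bH} - \bE\log p\p{\bY\vert\bA,\bH}$; conditioned on $\p{\bX,\bA,\bH}$ one has $\bY-\bA\bH\bX=\bW\sim\calN\p{\bze,\bI_n}$, so the first term is the deterministic constant $-\frac{n}{2}\log\p{2\pi e}$, whence $\calI_1 = -\frac{1}{2}\log\p{2\pi e} - \lim_{n\to\infty}\frac{1}{n}\bE\log Z_n\p{\bY}$, where the ``partition function'' $Z_n\p{\bY}\define p\p{\bY\vert\bA,\bH}$ is the Gaussian likelihood averaged over $\bX=\bS\odot\bU$. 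Since $\bU\sim\calN\p{\bze,\sigma^2\bI_n}$ is Gaussian, integrating it out for a fixed pattern $\bst$ is a Gaussian integral and yields $Z_n\p{\bY}=\sum_{\bst}\pr\p{\bst}\,\calN\p{\bY;\bze,\bSig_{\bst}}$ with $\bSig_{\bst}=\bI_n+\sigma^2\bA\bH\,\diag\p{\bst}\,\bH^{T}\bA$, using that $\diag\p{\bst}$ is idempotent and $\bA$ is diagonal. Everything is thus reduced to the asymptotics of $\frac{1}{n}\bE\log Z_n\p{\bY}$.

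Next I would perform the saddle--point analysis over the magnetization. Writing $-\frac{1}{n}\log\calN\p{\bY;\bze,\bSig_{\bst}} = \frac{1}{2}\log\p{2\pi}+\frac{1}{2n}\log\det\bSig_{\bst}+\frac{1}{2n}\bY^{T}\bSig_{\bst}^{-1}\bY$ and observing that $\bH\,\diag\p{\bst}\,\bH^{T}$ is a Wishart--type matrix (aspect ratio $m_s=x$) whose Shannon and Stieltjes transforms possess deterministic equivalents, the two $\bst$--dependent terms concentrate, for $m_s=x$, on the explicit quantities built from $b\p{x}$ and $g\p{x}$. Taking in addition the expectation over $\bY$ (true law with magnetization $m_a$ and per--coordinate output power $P_y=m_a\sigma^2q+q$) makes the quadratic form contribute, through its deterministic equivalent, the $\frac{1}{2}\sigma^2m_aq$ term, and, combined with the prior weight $\pr\p{\bst}$ via $C_n$ of \eqref{apriorimag0}--\eqref{apriorimag}, this assembles the entropy term $\calH_2\p{m_a}+f\p{m_a}$ together with the function $t\p{\cdot}$ of \eqref{lastt}. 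To treat the residual fluctuations of the pattern around a fixed magnetization I would introduce an auxiliary field $\gamma$ conjugate to $\frac{1}{n}\sum_iS_i$, which decouples the coordinates and produces a single--site partition function $\sum_{s\in\ppp{0,1}}e^{\gamma s}\p{\cdot}$; after the random--matrix reduction the effective per--coordinate likelihood is governed by $L\p{m}Q^2$ with $Q\sim\pr_Q$ (the two--component Gaussian mixture that accounts for the coordinate being a-priori off, variance $P_y$, or on, variance $P_y+q^2\sigma^2$), so the single--site free energy is $\bE\ln\pp{2\cosh\p{\frac{1}{2}\p{L\p{m}Q^2-\gamma}}}$ up to the linear term, the posterior probability that $S_i=1$ equals exactly $K\p{Q,m,\gamma}$ of \eqref{KtermFluc}, and stationarity in $\p{m,\gamma}$ produces precisely the self--consistent equations \eqref{magnetDet1}--\eqref{magnetDet}. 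Collecting the contributions gives $h\p{\gamma_\circ,m_\circ}$ of \eqref{hhdef}; selecting the dominant saddle by maximizing \eqref{criticalPoint} and subtracting $\frac{1}{2}\log\p{2\pi e}$ then yields \eqref{DasymMMSE}, while convergence of the deterministic equivalents together with uniqueness of the chosen saddle upgrades the $\limsup$ to an ordinary limit.

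The main obstacle is the rigorous justification of this formal scheme: (i) the self--averaging of $\frac{1}{n}\log Z_n\p{\bY}$ around $\frac{1}{n}\bE\log Z_n\p{\bY}$, i.e.\ controlling the fluctuations induced jointly by $\bH$, $\bA$ and $\bY$; (ii) replacing, inside the logarithm and under the expectation, the random Shannon/Stieltjes transforms by their deterministic equivalents with an error that is negligible after normalization, uniformly over the relevant range of $m_s$; and (iii) the validity of the decoupling / Laplace (Varadhan--type) step for the sum over patterns with its nonlinear, $\bY$--dependent exponent. These are exactly the difficulties resolved in \cite{Wasim2} for the MMSE functional; since, after the reduction above, $\frac{1}{n}\bE\log Z_n\p{\bY}$ is, up to the explicit additive constant $\frac{1}{2}\log\p{2\pi e}$ and the readily computed prior term, the same asymptotic free energy analysed there, the bulk of the proof consists in matching the present functional with the estimates of \cite{Wasim2} and rearranging to reach the closed form \eqref{DasymMMSE}.
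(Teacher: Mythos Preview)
Your sketch is correct and follows essentially the same route as the paper's Appendix~\ref{app:1}: partition-function reduction \eqref{relationMIPART2}--\eqref{RelationNeri2}, Gaussian integration over $\bU$ to obtain a pattern sum of Stieltjes/Shannon-transform terms, replacement of those by their deterministic equivalents (Lemma~\ref{lem:was1}), and a Laplace/Varadhan evaluation over the magnetization with conjugate variable $\gamma$, with the concentration and self-averaging estimates deferred to \cite{Wasim2}. The only cosmetic difference is that the paper organises the pattern sum as an explicit two-constraint large-deviations count (the set $\mathcal{F}_\delta(\rho,m)$ with Lagrange multipliers $(\alpha^\circ,\gamma^\circ)$, after which $\alpha^\circ=L(m_s)$ is identified) rather than a direct single-site auxiliary-field decoupling, but these are the same computation.
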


The proof of Theorem \ref{th:1} is a special case of the one in \cite{Wasim2}, where the asymptotic MMSE was considered. Nonetheless, we provide in Appendix \ref{app:1} a proof outline. Comparing Claim \ref{th:tul} and Theorem \ref{th:1}, it is seen that the results appear to be analytically quite different. Nevertheless, numerical calculations indicate that they are, in fact, equivalent. A representative comparison appears in Fig. \ref{fig:0}.
\begin{figure}[!t]
\begin{minipage}[b]{1.0\linewidth}
  \centering
  %\centerline{\includegraphics[width=17cm,height = 10cm]{Figures/Varable_Mis_Freq_Amplitude}}
	\centerline{\includegraphics[width=12cm,height = 9cm]{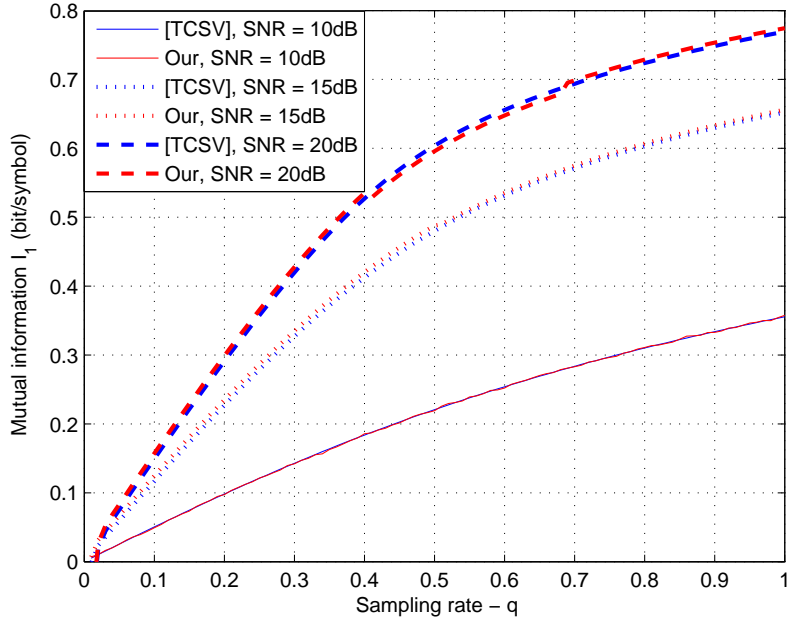}}
	
%  \vspace{2.0cm}
\end{minipage}
\caption{Mutual information rate $\calI_1$ as a function of the sampling rate $q$, for SNR = 10dB, 15dB, 20dB and $p = 0.2$.}
\label{fig:0}
\end{figure}

Contrary to $\calI_1$, the mutual information rate $\calI_2$ can be fairly easily calculated using, again, random matrix theory. Let
\begin{align}
\calF\p{x,y} \define \p{\sqrt{x\p{1+\sqrt{y}}^2+1}-\sqrt{x\p{1-\sqrt{y}}^2+1}}^2.
\end{align}
%To this end, we need the following definition of the \emph{Shannon transform}.
%\begin{definition}
%The Shannon transform of a nonnegative random variable $X$ is defined as
%\begin{align}
%S_X\p{s} = \bE\pp{\log\p{1+sX}}
%\end{align}
%with $s\geq0$. The Shannon transform of the spectral distribution of a sequence of nonnegative-definite $n\times n$ random matrices $\bX$, for $n\to\infty$, is denoted by $S_{\bXt}\p{s}$.
%\end{definition}
The information rate $\calI_2$ is given in the following theorem. 
\begin{theorem}(\cite[Theorem 2]{Tulino})
The information rate $\calI_2$ is given by
\begin{align}
\calI_2 = p\log\pp{1+q\sigma^2-\frac{1}{4}\calF\p{q\sigma^2,\frac{p}{q}}} + q\log\pp{1+p\sigma^2-\frac{1}{4}\calF\p{q\sigma^2,\frac{p}{q}}}-\frac{1}{4\sigma^2}\calF\p{q\sigma^2,\frac{p}{q}}\log e.
\label{calI2}
\end{align}
\end{theorem}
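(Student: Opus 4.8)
The idea is to recognise $I\p{\bY;\bU\vert\bA,\bH,\bS}$ as the mutual information of an ordinary i.i.d.\ Gaussian MIMO channel whose channel matrix is a submatrix of $\bH$, and then to read off \eqref{calI2} from the known large-system limit of the Shannon transform of Marchenko--Pastur-type matrices. First I would peel off the irrelevant coordinates: conditioned on $\bA$ and $\bS$, put $\calK\define\ppp{i:\bA_{i,i}=1}$ (so $\abs{\calK}=k$) and $\calP\define\ppp{i:S_i=1}$; for $i\notin\calK$ we have $Y_i=W_i$, independent of $\bU$, while the components $U_j$ with $j\notin\calP$ do not enter $\bA\bH\bX$. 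Hence $I\p{\bY;\bU\vert\bA,\bH,\bS}=I\p{\bY_{\calK};\bU_{\calP}\vert\bH,\bS}$, the mutual information of $\tilde{\bY}=\tilde{\bH}\tilde{\bU}+\tilde{\bW}$, where $\tilde{\bH}$ is the $k\times\abs{\calP}$ submatrix of $\bH$ obtained by keeping the rows in $\calK$ and the columns in $\calP$, $\tilde{\bU}\sim\calN\p{\bze,\sigma^2\bI}$ and $\tilde{\bW}\sim\calN\p{\bze,\bI}$. Since a fixed submatrix of $\bH$ again has i.i.d.\ zero-mean entries of variance $1/n$, and the input is Gaussian with a known channel, this conditional mutual information equals $\log\det\p{\bI_k+\sigma^2\tilde{\bH}\tilde{\bH}^{\top}}$ (in the normalisation of \cite{Tulino}), so $I\p{\bY;\bU\vert\bA,\bH,\bS}=\bE\log\det\p{\bI_k+\sigma^2\tilde{\bH}\tilde{\bH}^{\top}}$, with $k$ fixed by $\bA$ and $\abs{\calP}$ governed by $\bS$.

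Next I would take the large-$n$ limit. By the law of large numbers $\abs{\calP}/n\to p$ almost surely, so $\tilde{\bH}$ is, asymptotically, a $qn\times pn$ matrix of i.i.d.\ variance-$1/n$ entries. Using Sylvester's identity, $\det\p{\bI_k+\sigma^2\tilde{\bH}\tilde{\bH}^{\top}}=\det\p{\bI_{\abs{\calP}}+\sigma^2\tilde{\bH}^{\top}\tilde{\bH}}$, and rescaling the entries to unit variance turns the matrix inside into $p\,\bG\bG^{\top}$, where $\bG$ has i.i.d.\ variance-$1/\abs{\calP}$ entries and aspect ratio $\to\beta\define q/p$. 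The empirical spectral distribution of $\bG\bG^{\top}$ converges almost surely to the Marchenko--Pastur law of ratio $\beta$ \cite{baisilbook,coulbook}; together with the usual least-singular-value and operator-norm bounds for rectangular Gaussian-type matrices, this gives that $\tfrac1n\log\det\p{\cdot}$ converges, both almost surely and in mean, to $p$ times the Shannon transform of that law at ``SNR'' $\gamma\define p\sigma^2$ --- in particular the $\limsup$ in \eqref{calII2} is a genuine limit.

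Finally I would substitute the closed-form Marchenko--Pastur Shannon transform (the Verd\'u--Shamai formula):
\begin{align*}
\frac1{\abs{\calP}}\log\det\p{\bI+\gamma\bG\bG^{\top}}\longrightarrow\beta\log\p{1+\gamma-\tfrac14\calF\p{\gamma,\beta}}+\log\p{1+\gamma\beta-\tfrac14\calF\p{\gamma,\beta}}-\tfrac{1}{4\gamma}\calF\p{\gamma,\beta}\log e .
\end{align*}
Putting $\gamma=p\sigma^2$, $\beta=q/p$, multiplying by $p$, and using the elementary identity $\calF\p{p\sigma^2,q/p}=\calF\p{q\sigma^2,p/q}$ (obtained from $x\p{1\pm\sqrt y}^2=xy\p{1\pm1/\sqrt y}^2$ inside each radical) makes the three resulting terms collapse exactly into \eqref{calI2}. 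The algebra at the end is routine; the step that actually needs care is the random-matrix passage --- justifying the interchange of the limit with the expectation over $\bH$ (uniform integrability of $\tfrac1n\log\det\p{\bI_k+\sigma^2\tilde{\bH}\tilde{\bH}^{\top}}$) and handling the fact that $\abs{\calP}$ is itself random and merely concentrates around $pn$. When $q<p$ one first flips to the tall case via Sylvester so that the relevant least singular value stays bounded away from $0$.
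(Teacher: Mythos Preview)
The paper does not actually prove this theorem; it merely quotes it as \cite[Theorem 2]{Tulino} and remarks that ``the mutual information rate $\calI_2$ can be fairly easily calculated using, again, random matrix theory.'' So there is no in-paper proof to compare against beyond that one-line pointer.

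Your proposal is precisely the standard derivation behind the cited result, and it is correct: once $\bS$ and $\bA$ are fixed, the model collapses to a Gaussian MIMO channel with an i.i.d.\ $k\times\abs{\calP}$ matrix of variance $1/n$ entries, giving $\bE\log\det\p{\bI_k+\sigma^2\tilde{\bH}\tilde{\bH}^{\top}}$; the Marchenko--Pastur limit for the empirical spectrum of $\tilde{\bH}^{\top}\tilde{\bH}$ (or $\tilde{\bH}\tilde{\bH}^{\top}$ after Sylvester) together with the Verd\'u--Shamai closed form of its Shannon transform yields the three-term expression, and the symmetry $\calF\p{p\sigma^2,q/p}=\calF\p{q\sigma^2,p/q}$ you note puts it in the paper's parametrisation. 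The technical caveats you flag --- uniform integrability of $n^{-1}\log\det\p{\cdot}$ so that almost-sure convergence upgrades to convergence of the mean, and the fact that $\abs{\calP}$ is random but $\abs{\calP}/n\to p$ --- are exactly the points where a fully rigorous write-up needs a sentence of justification (operator-norm bounds and, for $q<p$, the Sylvester flip to avoid the mass at zero); none of them is a gap in the argument.
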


Equipped with closed-from expressions of $\calI_1$ and $\calI_2$, we are now in a position to propose and explore several applications of these information rates.

\section{Channel Coding}\label{sec:coding}
In this section, we consider three different cases that are related to channel coding problems. Generally speaking, the main differences among these cases is in the available knowledge of the transmitter and the receiver about the source. In the following applications, it is assumed that both $\bA$ and $\bH$ are available at the receiver, but are unavailable to the transmitter. Accordingly, the matrix $\bA\bH$ can be considered as part of the channel output, and the mutual information of interest is $I\p{\bY,\bA,\bH;\bX}$. Thus, by using the chain rule of the mutual information and the fact that $\bA$ and $\bH$ are statistically independent of the source $\bX$, we readily obtain that
\begin{align}
I\p{\bY,\bA,\bH;\bX} = I\p{\bY;\bX\vert\bA,\bH},
\label{discsc}
\end{align}
and
\begin{align}
I\p{\bY,\bA,\bH;\bU\vert\bS} = I\p{\bY;\bU\vert\bA,\bH,\bS},
\end{align}
which are simply identified as \eqref{calII1} and \eqref{calII2}, respectively. Keeping these observations in mind, our goal is to provide achievable rates in various channel coding problems, which will only require us to know the mutual information rates $\calI_1$ and $\calI_2$. Finally, note that part of the following coding principles have already appeared in \cite{Peleg}, but relying on bounds.

The input $\bX$ in the previous section was considered as continuous uncoded signal. However, in the following applications, we will deal with coding problems. Accordingly, we use codes and allow the use of the channel \eqref{GeneralModel} for $n$ times as required by the code length. The whole codebook is of size $2^{nR}$ codewords. The transmitter chooses a codeword $\bX$ and transmits it over the channel.
%For example, in case that both $\bS$ and $\bU$ are controlled by the transmitter, we have the following standard scheme: The whole codebook is of size $2^{NR}$ codewords containing pairs $\p{\bS,\bU}$, drawn at random, where $R = R_s+R_u$, and $R_s,R_u$ refer to the coding rates of the pattern $\bS$ and the continuous part $\bU$, respectively. The transmitter chooses a codeword $\p{\bS,\bU}$ corresponding to the support $\bS$ and an information word $W$ (out of $2^{NR_c}$ possible words) and transmits it over the channel. The receiver searches for a support $\bS$ and information word $W$ that ate jointly typical with the received sequence $\bY$.

\subsection{Controlled sparsity pattern}\label{subsec:1}
Here, the sparsity pattern $\bS$, as well as the Gaussian signal $\bU$, are assumed to be controlled and given at the transmitter. The constraints are on the average support power, $\sigma^2$, and the sparsity rate, that is the probability $p\define\pr\p{S_i=1}$. One motivation for this setting is, for example, in case where the transmit antennas (conveying $\bX$) are remote, and ``green" communications constraints enforce shutting off a fraction $\p{1-p}$ of the antennas, corresponding to the sparsity of the pattern $\bS$. Here, since the shut-off pattern can be controlled, it can be used to convey information as well. We have the following immediate result.
\begin{theorem}[reliable coding rate]\label{th:app1}
Assume the source-channel statistics assumptions that are given in Section \ref{sec:model}, and assume that $\bS$ and $\bU$ can be controlled by the transmitter. Then, $\calI_1$ in \eqref{DasymMMSE} (or in \eqref{I1replica}) is an achievable information rate for reliable communication.  
\end{theorem}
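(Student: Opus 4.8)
The plan is to prove Theorem~\ref{th:app1} by a standard random-coding (direct-part) argument, the only real content being the identification of the quantity that governs achievability. Since $\bA$ and $\bH$ are independent of the source and revealed to the receiver, we regard $(\bA,\bH)$ as part of the channel output; the encoder then faces the channel $\bX\mapsto(\bY,\bA,\bH)$ defined by \eqref{GeneralModel}, and the chain rule gives $I(\bX;\bY,\bA,\bH)=I(\bY;\bX\vert\bA,\bH)$ exactly as in \eqref{discsc}. By Theorem~\ref{th:1} (equivalently Claim~\ref{th:tul}) the normalized version $\tfrac1n I(\bY;\bX\vert\bA,\bH)$ converges to $\calI_1$ as in \eqref{DasymMMSE}/\eqref{I1replica}. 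Hence it suffices to exhibit a coding scheme of rate $R$ with vanishing error probability for every $R<\calI_1$, i.e.\ the direct part of the capacity theorem for a state-dependent channel whose ``state'' $(\bA,\bH)$ is known at the decoder, specialized to the prescribed input law.

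Concretely, I would draw a random codebook of $2^{nR}$ codewords, each generated independently from the source law $P_{\bX}$ of Section~\ref{sec:model}: sample a pattern $\bS$ from the magnetization-tilted measure \eqref{inputassmeas}, sample $U_i\sim\calN\p{0,\sigma^2}$ i.i.d., and set $X_i=S_iU_i$. By the law of large numbers $\tfrac1n\sum_i S_i\to m_a$ and $\tfrac1n\norm{\bX}^2\to m_a\sigma^2$, so each codeword meets the sparsity-rate and support-power constraints with probability tending to one (the rare violators can be expurgated). The decoder, knowing $(\bA,\bH)$, applies joint-typicality (or maximum-likelihood) decoding with respect to the induced joint law of $(\bX,\bY,\bA,\bH)$. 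The error analysis is the familiar one: the probability that the transmitted codeword is not jointly typical with the output vanishes, while the probability that a fixed competing codeword is jointly typical is, to first exponential order, $2^{-nI(\bY;\bX\vert\bA,\bH)}$; a union bound over the $2^{nR}-1$ wrong codewords drives the average error probability to zero whenever $R<\tfrac1n I(\bY;\bX\vert\bA,\bH)-\epsilon$. Letting $n\to\infty$ and invoking Theorem~\ref{th:1} (which also turns the $\limsup$ in \eqref{calII1} into a genuine limit) yields achievability of every $R<\calI_1$.

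The one genuinely non-routine point --- and the reason the result is merely ``immediate'' \emph{given} \cite{Wasim2} --- is that \eqref{GeneralModel} is not a product of i.i.d.\ scalar channels: a single use already couples all coordinates through the $n\times n$ matrix $\bH$, so the elementary AEP does not apply verbatim. What is actually needed is information stability, namely that $\tfrac1n\log\frac{p(\bY\vert\bX,\bA,\bH)}{p(\bY\vert\bA,\bH)}$ converges in probability to $\calI_1$, together with the analogous concentration of the ``input'' and ``output'' self-information densities used to define the joint-typicality set. These are precisely the concentration-of-measure statements for the relevant log-partition functionals (and for the Stieltjes/Shannon-transform quantities) that already underlie the existence of the limit in Theorem~\ref{th:1}; quoting them converts the heuristic exponent count above into a rigorous bound. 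A secondary, purely technical, issue is that the alphabets are continuous and Gaussian, so typicality must be set up via quantization (or one uses the Gaussian random-coding bound directly) and one checks the attendant uniform-integrability conditions --- none of which affects the rate. In short, the main obstacle is the information-stability/concentration step, and since Theorem~\ref{th:1} is assumed and its proof already supplies those estimates, the cleanest write-up simply cites them and assembles the random-coding argument around them.
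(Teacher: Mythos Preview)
Your proposal is correct and follows essentially the same approach as the paper: both arguments identify that, since $(\bS,\bU)$ are controlled, the transmitter controls $\bX$, so the channel is $\bX\mapsto(\bY,\bA,\bH)$ with state $(\bA,\bH)$ known at the receiver, whence the achievable rate is $\lim_{n\to\infty}\tfrac1n I(\bX;\bY\vert\bA,\bH)=\calI_1$. The paper's proof is a one-liner that takes the achievability of this mutual-information rate as given, whereas you spell out the random-coding argument and, more carefully than the paper, flag the information-stability issue arising from the non-product structure of the channel and note that the concentration estimates underlying Theorem~\ref{th:1} supply what is needed.
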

\begin{proof}
Since both $\bS$ and $\bU$ are controlled, then $\bX$ is also controlled. Note, however, that $\bS$ is not provided to the receiver beforehand. Thus, this is just a channel with inputs $\p{\bS,\bU}$ and output $\bY$, where the matrices $\bH$ and $\bA$ are provided to the receiver only (the transmitter is aware of the statistics of course). Therefore, an achievable coding rate is given by (recall \eqref{discsc})
\begin{align}
\limsup_{n\to\infty}\frac{I\p{\bS,\bU;\bY\vert\bA,\bH}}{n} = \limsup_{n\to\infty}\frac{I\p{\bX;\bY\vert\bA,\bH}}{n},
\end{align}
which is exactly $\calI_1$. 
\end{proof}

Recall that the information rate $\calI_1$, given in Theorem \ref{th:1}, is valid also for sources that are not necessarily memoryless, as we allowed the model given in \eqref{inputassmeas} with a general function $f$. It is then interesting to check whether optimization over this class of sources can help to increase $\calI_1$. Let
\begin{align}
\mathscr{F}\define \ppp{f:\pp{0,1}\to\left(-\infty,0\right],\;f\in \calA\pp{0,1}}
\label{calFdef}
\end{align}
where $\calA\pp{0,1}$ is the class of analytic functions on the interval $\pp{0,1}$. Then, according to \eqref{inputassmeas}, our class of sources is uniquely determined by the set of functions $\mathscr{F}$. Also, let $f_L$ designate the affine function $f_L\p{m} = am+b$, where $a,b\in\mathbb{R}$, and recall that substitution of $f_L$ in the pattern measure \eqref{inputassmeas} corresponds to a memoryless assumption of the sparsity pattern. We have the following result. Finally, let $\mathscr{P}_s$ be the set of probability distributions of the form of \eqref{inputassmeas}.

\begin{theorem}[memoryless pattern is optimal over $\mathscr{P}_s$]\label{th:app2}
Under the \emph{asymptotic average sparseness} constraint, defined as
\begin{align}
\lim_{n\to\infty}\frac{1}{n}\bE\ppp{\sum_{i=1}^nS_i} = p,
\label{theConstr}
\end{align}
the following holds
\begin{align}
\max_{\mathscr{P}_s}\;\calI_1 \equiv \max_{\mathscr{F}}\;\calI_1 = \left.\calI_1\right|_{f = f_L}.
\end{align}
In words, memoryless patterns give the maximum achievable rate over $\mathscr{P}_s$.
\end{theorem}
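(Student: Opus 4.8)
The plan is to sidestep the intricate formula of Theorem~\ref{th:1} and argue directly from the operational definition $\calI_1=\lim_{n\to\infty}\frac1n I(\bX;\bY\vert\bA,\bH)$. Write $I(\bX;\bY\vert\bA,\bH)=h(\bY\vert\bA,\bH)-h(\bY\vert\bA,\bH,\bX)$; since, conditioned on $(\bX,\bA,\bH)$, we have $\bY=\bA\bH\bX+\bW$ with $\bW$ standard Gaussian, the subtracted term equals $h(\bW)=\frac n2\log(2\pi e)$ \emph{for every} choice of pattern law. Hence $\calI_1=\lim_n\frac1n h(\bY\vert\bA,\bH)-\frac12\log(2\pi e)$, and the theorem reduces to showing that $\lim_n\frac1n h(\bY\vert\bA,\bH)$ depends on the law \eqref{inputassmeas} only through the dominant magnetization $m_a$. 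I would first record that the constraint \eqref{theConstr} is equivalent to $m_a=p$: indeed $\frac1n\bE\{\sum_i S_i\}=\bE\{m_s\}$, and by the type-counting/Laplace estimate behind \eqref{apriorimag0}--\eqref{apriorimag} the law of $m_s$ concentrates on $m_a$, so $\bE\{m_s\}\to m_a$; conversely $f_L(m)=am+b$ with $a=\log\frac{p}{1-p}$ yields an i.i.d.\ $\mathrm{Bernoulli}(p)$ pattern, for which $m_a=p$ and \eqref{theConstr} holds. Thus it remains to prove that $\calI_1$ is constant over $\{f\in\mathscr{F}:\ m_a=p\}$.

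Second, I would condition on the magnetization: $h(\bY\vert\bA,\bH)=I(m_s;\bY\vert\bA,\bH)+h(\bY\vert\bA,\bH,m_s)$, where the first term is at most $H(m_s)\le\log(n+1)$ and so disappears after normalization. The key point is that, because $\pr(\bS)$ in \eqref{inputassmeas} depends on $\bS$ only through $m_s$, the conditional law of $\bS$ given $\{m_s=m\}$ is \emph{uniform} over the binary vectors of Hamming weight $nm$ --- a law carrying no trace of $f$ --- so $a_n(m):=\frac1n h(\bY\vert\bA,\bH,m_s=m)$ is a function of $m$ (and of $q,\sigma^2$) common to all $f\in\mathscr{F}$, coinciding in particular with the same quantity for the i.i.d.\ $\mathrm{Bernoulli}(m)$ source. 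Two structural facts then pin $a_n$ down: (i) $h(\bY\vert\bA,\bH,\bS=\bs)$ is \emph{exactly} invariant under coordinate permutations of $\bs$ (replace $\bH$ by $\bH P^{\top}$, which is equal in law to $\bH$), hence $a_n(m)=\frac1n\bE_{\bH}\!\left[\tfrac12\log\det\!\big(\bI+\sigma^2\bA\bH\,\diag(\bs_m)\,\bH^{\top}\bA\big)\right]+\frac12\log(2\pi e)$ for any single weight-$nm$ vector $\bs_m$, and choosing nested supports shows $a_n(\cdot)$ is nondecreasing and (by a $\log\det\le\tr$ bound) uniformly bounded; (ii) applying Theorem~\ref{th:1} (equivalently \cite{Wasim2}) to the i.i.d.\ $\mathrm{Bernoulli}(m)$ source identifies the pointwise limit $\lim_n a_n(m)=:\phi(m)$ with the i.i.d.\ rate $\calI_1$ evaluated at parameter $m$ (which is continuous in $m$) plus $\frac12\log(2\pi e)$.

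Third, I would average $a_n$ against $\pr(m_s=m)$: using the exponential concentration $\pr(m_s=m)\exe\exp\{-n[(\calH_2(m_a)+f(m_a))-(\calH_2(m)+f(m))]\}$ from \eqref{apriorimag0}, the contribution of magnetizations outside a small neighborhood of $m_a$ is exponentially negligible against the uniform bound on $a_n$, while on that neighborhood the monotonicity of $a_n$ plus its pointwise convergence to the continuous $\phi$ gives uniform convergence. Therefore $\lim_n\frac1n h(\bY\vert\bA,\bH)=\phi(m_a)$, so $\calI_1=\phi(m_a)-\frac12\log(2\pi e)$ for every $f\in\mathscr{F}$; in particular $\calI_1$ takes one and the same value on the admissible set $\{m_a=p\}$, namely its value at $f_L$, which is exactly the asserted chain $\max_{\mathscr{P}_s}\calI_1=\max_{\mathscr{F}}\calI_1=\left.\calI_1\right|_{f=f_L}$ (the first equality being merely the reparametrization of $\mathscr{P}_s$ by $\mathscr{F}$). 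The step I expect to be the main obstacle is precisely this interchange of limit and $m_s$-average: obtaining the uniform-in-$m$ control on $a_n$ --- for which the monotonicity-plus-exponential-tail decomposition above is my intended route --- and, as a minor technical point, handling a non-unique maximizer of $\calH_2+f$ via the ``dominant-solution'' tie-breaking convention already built into Theorem~\ref{th:1}.
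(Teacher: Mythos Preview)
Your route is quite different from the paper's and, in outline, more conceptual. The paper works entirely inside the explicit formula of Theorem~\ref{th:1}: it parametrizes $f$ by polynomial coefficients $\alpha_1,\ldots,\alpha_M$, differentiates $\calI_1$ with respect to each $\alpha_l$, shows that every stationary point satisfies $m_a=m_\circ$ and that $\calI_1$ there depends on $f$ only through $f'(m_a)$ (which the constraint pins down via \eqref{apriorimagnet}), and then invokes concavity of $\calI_1$ in $f$ to conclude that the memoryless choice is maximizing. You instead argue operationally, conditioning on the magnetization and exploiting the clean fact that, given $m_s=m$, the law of $\bS$ is uniform on the type shell and hence $a_n(m):=\frac1n h(\bY\vert\bA,\bH,m_s=m)$ carries no trace of $f$. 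This is a nice structural explanation of why only $m_a$ can matter, and it even aims at the stronger statement that $\calI_1$ is \emph{constant} on $\{m_a=p\}$.

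There is, however, a concrete error at your step~(i). The displayed identity
\[
a_n(m)=\tfrac1n\,\bE\!\left[\tfrac12\log\det\!\big(\bI+\sigma^2\bA\bH\,\diag(\bs_m)\,\bH^{\top}\bA\big)\right]+\tfrac12\log(2\pi e)
\]
is the formula for $\frac1n h(\bY\vert\bA,\bH,\bS=\bs_m)$, \emph{not} for $\frac1n h(\bY\vert\bA,\bH,m_s=m)$. Conditioned only on $m_s=m$, the law of $\bY$ given $(\bA,\bH)$ is a mixture of $\binom{n}{nm}$ Gaussians, and the two quantities differ by $\frac1n I(\bS;\bY\vert\bA,\bH,m_s=m)$, which is nonnegative and of order up to $\calH_2(m)$ --- this is exactly the support-detection information separating $\calI_1$ from $\calI_2$. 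Permutation invariance only says $h(\bY\vert\bA,\bH,\bS=\bs)$ depends on $\bs$ through its weight; it does not collapse the mixture. As a consequence, your nested-support monotonicity of $a_n$ does not follow, and your step~(ii) is not an identification either: Theorem~\ref{th:1} applied to the i.i.d.\ $\mathrm{Bernoulli}(m)$ source yields $\lim_n\bE_{m_s}[a_n(m_s)]$ with binomial $m_s$, not $\lim_n a_n(m)$ at a fixed $m$, so (ii) already presupposes the very limit/average interchange you flag as the main obstacle. The overall strategy is salvageable --- for instance, the uniform law on a type shell is itself (a limit of) a law of the form \eqref{inputassmeas}, so the machinery behind Theorem~\ref{th:1} can in principle be aimed directly at $a_n(m)$ --- but the uniform-in-$m$ control must come from somewhere other than the $\log\det$ formula you wrote.
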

\begin{proof}
See Appendix \ref{app:2}
\end{proof}
Intuitively speaking, Theorem \ref{th:app2} is essentially expected due to the natural symmetry in our model induced by the assumptions on $\bA$ and $\bH$, that are given only at the receiver side (had these matrices been known to the transmitter, this result may no longer be true). Also, note that when $\bS=\p{1,1,\ldots,1}$, namely, the source is not sparse, we obtain a MIMO setting, in which it is well-known that the Gaussian i.i.d. process achieves capacity \cite{Taleter}. 
%Finally, note that the meaning of the above derivation and result is that
%\begin{align}
%\lim_{n\to\infty}\frac{1}{n}\log \pr\p{\bS^*} = -\calH_2\p{m_a}
%\end{align}
%where $\bS^*$ designates the binary sequence that ``dominates" the posterior. This in turn simply means that the optimal source is uniform over the type-class of sequences with fixed sparsity rate $m_a$, and $\calH_2\p{m_a}$ is recognized as the normalized log-number of binary sequences with magnetization $m_a$. 
 %
In the following, we show that the optimal distribution of the pattern sequence must be invariant to permutations.
\begin{theorem}[permutation invariant distribution]\label{th:app3}
Let $\mathscr{S}$ be the set of all probability distributions of $\bS$, and let $\mathscr{S}_\Pi$ denote the set of all probability distributions that are invariant to permutations. Then,
\begin{align}
%\max_{\mathscr{S}}\;\calI_1 = \left.\calI_1\right|_{\pr\p{S_1,\ldots,S_n} = \pr\p{S_1}\times\ldots\times\pr\p{S_n}}.
\max_{\mathscr{S}}\;\calI_1 = \max_{\mathscr{S}_\Pi}\;\calI_1.
\end{align}
\end{theorem}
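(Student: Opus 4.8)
The plan is to show that any distribution of $\bS$ can be symmetrized into a permutation‑invariant one without decreasing $\calI_1$; since the reverse inclusion $\mathscr{S}_\Pi\subseteq\mathscr{S}$ is trivial, the two suprema coincide. Fix an arbitrary $P_{\bS}\in\mathscr{S}$ and let $\pi$ be a uniformly random permutation of $\{1,\dots,n\}$, independent of everything else; define the symmetrized law $\bar P_{\bS}$ as the distribution of $\pi(\bS)$ where $\bS\sim P_{\bS}$. Since permuting the coordinates of $\bS$ leaves the magnetization $m_s$ unchanged, and since the achievable-rate analysis only ever sees $\bS$ through $m_s$ (recall \eqref{inputassmeas} and the type-counting argument leading to \eqref{apriorimag}), $\bar P_{\bS}$ has the same $m_a$, hence the same $\calI_1$. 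This already gives $\max_{\mathscr{S}}\calI_1\le\max_{\mathscr{S}_\Pi}\calI_1$, which together with $\mathscr{S}_\Pi\subseteq\mathscr{S}$ proves the claim.

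First I would make precise the sense in which $\calI_1$ ``depends only on $m_s$.'' The cleanest route is to observe that $\bH$ has i.i.d. entries and $\bA$ has i.i.d. diagonal entries, both invariant in distribution under simultaneous column/row permutations; concretely, for any fixed permutation matrix $\bPi$, the triple $(\bA\bH\bPi, \bH\bPi)$ has the same joint law as $(\bA\bH,\bH)$ (using that $\bH\bPi\stackrel{d}{=}\bH$ and $\bA$ is independent of $\bH$ with exchangeable diagonal — here one uses that the realization of $\bA$ determining $k$ is, up to relabeling, captured by $q$). Consequently $I(\bY;\bX\mid\bA,\bH)$ is unchanged if we replace $\bX$ by $\bPi\bX$, i.e. if we replace $P_{\bX}$ by its $\bPi$-permuted version. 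Averaging over $\pi$ and invoking concavity of mutual information in the input distribution (for a fixed channel) gives $I(\bY;\bPi\bX\mid\bA,\bH)\ge$ the version for the mixture only after care — so the safe statement to use is equality, not inequality: each permuted input yields exactly the same $\calI_1$, so the symmetrized mixture, being a member of $\mathscr{S}_\Pi$ that induces the same per-$n$ mutual information as $P_{\bS}$ (via exchangeability of the channel, the mutual information of the mixture equals the common value), attains the same limit.

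I would then assemble the argument: given $P_{\bS}\in\mathscr{S}$ achieving $\calI_1$, its symmetrization $\bar P_{\bS}\in\mathscr{S}_\Pi$ satisfies $\frac{1}{n}I(\bY;\bX\mid\bA,\bH)$ evaluated under $\bar P_{\bS}$ equals that under $P_{\bS}$ for every $n$ (channel exchangeability plus the fact that the symmetrized input is a uniform mixture of channel-equivalent inputs), hence the same $\limsup$; therefore $\calI_1(\bar P_{\bS})=\calI_1(P_{\bS})$ and $\max_{\mathscr{S}_\Pi}\calI_1\ge\max_{\mathscr{S}}\calI_1$. The opposite inequality is immediate from $\mathscr{S}_\Pi\subseteq\mathscr{S}$, giving equality.

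The main obstacle is the step asserting that symmetrizing the input does not change the (finite-$n$) mutual information. Mutual information is concave in the input, so a naive mixing argument only gives $I$(mixture)$\ge$ average of $I$(components); to get that this does not \emph{strictly} increase the rate beyond $\max_{\mathscr{S}}$ is automatic (the mixture is still in $\mathscr{S}$), but to land inside $\mathscr{S}_\Pi$ \emph{with the same value} one must exploit that all the permuted components are genuinely channel-equivalent — i.e. that the channel's law is invariant under the permutation group — so that each component already achieves the same $I$, and hence so does their mixture (concavity then forces equality, since the average of equal numbers equals that number, and the mixture cannot exceed the maximum over $\mathscr{S}$). Making this channel-symmetry statement airtight — in particular handling the conditioning on the fixed realization of $\bA$ cleanly, perhaps by noting the argument is unaffected if one instead conditions on $k=\lfloor qn\rfloor$ and averages over which coordinates are active — is the one place where care is genuinely required; everything else is bookkeeping.
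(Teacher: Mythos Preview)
Your core argument—column-exchangeability of $\bH$ implies $I(\bY;\bX\mid\bA,\bH)$ is invariant under permuting the input law, then symmetrize and invoke concavity—is exactly the paper's approach (the paper phrases it via $H(\bY\mid\bA,\bH)$ and the explicit change of variables $\bH\mapsto\bH\bpi^T$, but that is cosmetic). Two small points: your opening appeal to ``$\bar P_{\bS}$ has the same $m_a$, hence the same $\calI_1$'' is not a valid shortcut, since the closed form \eqref{DasymMMSE} was derived only for laws of the form \eqref{inputassmeas}, not for arbitrary $P_{\bS}\in\mathscr{S}$—the channel-symmetry argument in your second paragraph is what actually carries the proof. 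And you do not need the mixture's mutual information to \emph{equal} the common component value; concavity already gives $\ge$, which is all that is required to conclude $\max_{\mathscr{S}_\Pi}\calI_1\ge\max_{\mathscr{S}}\calI_1$ (the paper uses precisely this inequality, see \eqref{concvv}), so the worry in your last paragraph about ``forcing equality'' is unnecessary.
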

\begin{proof}
The maximization of $I\p{\bY;\bX\vert\bA,\bH}$ over $\mathscr{S}$ boils down to the maximization of the conditional entropy $H\p{\bY\vert\bA,\bH}$, namely,
\begin{align}
\arg\max_{\mathscr{S}}\;I\p{\bY;\bX\vert\bA,\bH} &= \arg\max_{\mathscr{S}}\;H\p{\bY\vert\bA,\bH}\\
&= \arg\max_{\mathscr{S}}\;\bE\pp{\log \frac{1}{\pr\p{\bY\vert\bA,\bH}}}.
\end{align}
Recall that
\begin{align}
%\pr\p{\bY\vert\bA,\bH} = \sum_{\bst\in\ppp{0,1}^n}\pr\p{\bS}\pr\p{\bY\vert\bA,\bH,\bS}.
\pr\p{\bY\vert\bA,\bH} = \int_{\mathbb{R}^n}\mathrm{d}\bx\pr\p{\bx}\pr\p{\bY\vert\bA,\bH,\bx}.
\end{align}
Since the columns of $\bA\bH$ are i.i.d. and $\p{\bA,\bH}$ are known solely to the receiver, it is evident that the conditional entropy $H\p{\bY\vert\bA,\bH}$ is invariant to permutations of $\bS$ in $\pr\p{\bS}$. To see this, let $\pr_\pi\p{\bS}$ denote some permuted version of $\pr\p{\bS}$, namely, $\pr_\pi\p{\bS} = \pr\p{\bpi\bS}$ where $\bpi$ is a permutation matrix corresponding to some permutation. Accordingly, let $\pr_\pi\p{\bX}$ be the probability distribution of $\bX$ induced by the permuted distribution $\pr_\pi\p{\bS}$. Finally, let $H_\pi\p{\bY\vert\bA,\bH}$ designate the conditional entropy of $\bY$ given $\p{\bA,\bH}$ where $\bX$ is distributed according to $\pr_\pi\p{\bX}$. Then,
\begin{align}
H_\pi\p{\bY\vert\bA,\bH} &= -\bE\ppp{\log\int_{\mathbb{R}^n}\mathrm{d}\bx\pr_\pi\p{\bx}\pr\p{\bY\vert\bA,\bH,\bx}}\\
&=-\bE\ppp{\log\int_{\mathbb{R}^n}\mathrm{d}\bx\pr_\pi\p{\bpi\bx}\pr\p{\bY\vert\bA,\bH,\bpi\bx}}\\
&=-\bE\ppp{\log\int_{\mathbb{R}^n}\mathrm{d}\bx\pr\p{\bx}\pr\p{\bY\vert\bA,\bH,\bpi\bx}}
\end{align} 
where in the second equality we changed the variable $\bx\mapsto\bpi\bx$ which permutes the vector $\bx$ according to the permutation used in $\pr_\pi\p{\bS}$. Now,
\begin{align}
&H_\pi\p{\bY\vert\bA,\bH} =-\bE\ppp{\log\int_{\mathbb{R}^n}\frac{1}{\p{2\pi}^{k/2}}\mathrm{d}\bx\pr\p{\bx}\exp\p{-\frac{1}{2}\norm{\bY-\bA\bH\bpi\bx}^2}}\\
& = -\int\mathrm{d}\pr\p{\by\vert\bA,\bH}\mathrm{d}\pr\p{\bA,\bH}\pp{\log\int_{\mathbb{R}^n}\frac{1}{\p{2\pi}^{k/2}}\mathrm{d}\bx\pr\p{\bx}\exp\p{-\frac{1}{2}\norm{\by-\bA\bH\bpi\bx}^2}}\\
& = -\int\mathrm{d}\pr\p{\by\vert\bA,\bH\bpi^T}\mathrm{d}\pr\p{\bA,\bH\bpi^T}\pp{\log\int_{\mathbb{R}^n}\frac{1}{\p{2\pi}^{k/2}}\mathrm{d}\bx\pr\p{\bx}\exp\p{-\frac{1}{2}\norm{\by-\bA(\bH\bpi^T)\bpi\bx}^2}}\\
& = -\int\mathrm{d}\pr\p{\by\vert\bA,\bH}\mathrm{d}\pr\p{\bA,\bH}\pp{\log\int_{\mathbb{R}^n}\frac{1}{\p{2\pi}^{k/2}}\mathrm{d}\bx\pr\p{\bx}\exp\p{-\frac{1}{2}\norm{\by-\bA\bH\bx}^2}}\\
& = H\p{\bY\vert\bA,\bH}
\end{align} 
where in the third equality we changed the variable $\bH\mapsto\bH\bpi^T$, and the forth equality follows from the facts that $\bH\bpi^T\bpi\bx = \bH\bx$ and that $\p{\bA,\bH}$ are i.i.d. and thus $\pr\p{\bA,\bH\bpi^T} = \pr\p{\bA,\bH}$. 
%We devote the following example to illustrate this idea.
%\emph{Example}: Let $n=2$ an consider the assignment $\pr\p{0,0} = 1/8,\;\pr\p{0,1} = 1/8,\;\pr\p{1,0} = 1/4$ and $\pr\p{1,1} = 1/2$. Then, the respective permuted distribution is $\pr_\pi\p{0,0} = 1/8,\;\pr_\pi\p{0,1} = 1/4,\;\pr_\pi\p{1,0} = 1/8$ and $\pr_\pi\p{1,1} = 1/2$. Thus,
%\begin{align}
%H\p{\bY\vert\bA,\bH} = -\bE\ppp{\log\sum_{\bst\in\ppp{0,1}^2}\pr\p{\bS}\pr\p{\bY\vert\bA,\bH,\bS}},
%\end{align} 
%and
%\begin{align}
%H_\pi\p{\bY\vert\bA,\bH} = -\bE\ppp{\log\sum_{\bst\in\ppp{0,1}^2}\pr_\pi\p{\bS}\pr\p{\bY\vert\bA,\bH,\bS}},
%\end{align} 

Continuing, let $\pr_*\in\mathscr{S}$ denote the probability distribution that maximize $I\p{\bY;\bX\vert\bA,\bH}$. Let $\Pi_*$ denote the set of probability distributions obtained from $\pr_*$ by all possible permutations of $\bS$, and thus each is achieving the maximal $I\p{\bY;\bX\vert\bA,\bH}$. Also, let
\begin{align}
\pr_{\text{inv}}\p{\bS} \define \frac{1}{\abs{\Pi_*}}\sum_{\pr\in\Pi_*}\pr\p{\bS}.
\label{permutationinv}
\end{align}
%To wit, $\pr_\pi$ is given by the convex combination of all the measures that belongs to $\Pi_*$, and thus it is dependent only on the magnetization of $\bS$. 
Note that $\pr_{\text{inv}}\p{\bS}\in\mathscr{S}_\Pi$, namely, $\pr_{\text{inv}}\p{\bS}$ is invariant to permutations. Finally, let $H\p{\bY\vert\bA,\bH}\vert_{\pr_{\text{inv}}}$ and $H\p{\bY\vert\bA,\bH}\vert_{\pr_*}$  designate the conditional entropies of $\bY$ given $\p{\bA,\bH}$ where $\bS$ is distributed according to $\pr_{\text{inv}}$ and $\pr_*$, respectively. Thus, from the concavity of $H\p{\bY\vert\bA,\bH}$ w.r.t. $\pr\p{\cdot\vert\bA,\bH}$, we have that
\begin{align}
H\p{\bY\vert\bA,\bH}\vert_{\pr_{\text{inv}}} &\define -\bE\ppp{\log\sum_{\bst\in\ppp{0,1}^n}\pr_\pi\p{\bS}\pr\p{\bY\vert\bA,\bH,\bS}}\\
&\geq -\frac{1}{\abs{\Pi_*}}\sum_{\pr\in\Pi_*}\bE\ppp{\log\sum_{\bst\in\ppp{0,1}^n}\pr\p{\bS}\pr\p{\bY\vert\bA,\bH,\bS}}\\
& = H\p{\bY\vert\bA,\bH}\vert_{\pr_*}
\label{concvv}
\end{align}
where \eqref{concvv} follows from the fact that the conditional entropy is the same for all members of $\Pi_*$ as was mentioned previously. 
\end{proof}

It is tempting to tie Theorems \ref{th:app2} and \ref{th:app3} to infer that the optimal distribution of $\bS$ over $\mathscr{S}$ is memoryless. However, there is still a little gap. Indeed, despite the fact that permutation invariant distributions must depend on the pattern only through the magnetization, not every such distribution can be expressed as the one in \eqref{inputassmeas}, due to the smoothness requirement of $f$. For example, in case of uniform distributions over types, the function $f$ is not continuous. Nonetheless, roughly speaking, it is evident that one can approximate arbitrarily closely such non-smooth behaviors by a respectively smooth function $f$. So, we conjecture that the maximum mutual information is indeed achieved by a memoryless source.  

Finally, we present in Fig. \ref{fig:a} the mutual information rate $\calI_1$ as a function of the sampling rate $q$ and the SNR for $p = 0.2$. It can be seen that increase of the rate or/and the SNR results in an increase of $\calI_1$, as one should expect. 
\begin{figure}[!t]
\begin{minipage}[b]{1.0\linewidth}
  \centering
  %\centerline{\includegraphics[width=17cm,height = 10cm]{Figures/Varable_Mis_Freq_Amplitude}}
	\centerline{\includegraphics[width=12cm,height = 9cm]{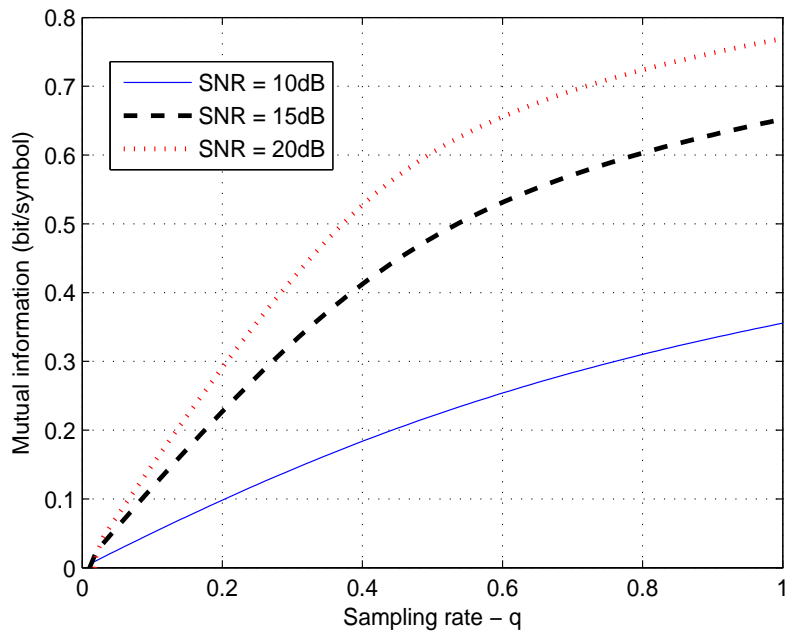}}
	
%  \vspace{2.0cm}
\end{minipage}
\caption{Mutual information rate $\calI_1$ as a function of $q$ and the SNR for $p = 0.2$.}
\label{fig:a}
\end{figure}

\subsection{Unknown sparsity pattern}\label{Uncontrolled1}
In this subsection, we consider the case where the sparsity pattern is unknown to all parties. The vector $\bU$ is treated as the information to be transmitted over the channel.  
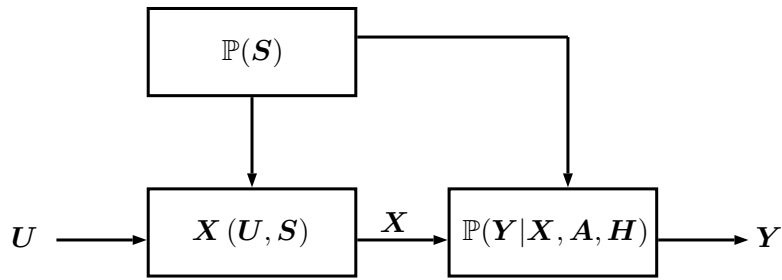
\begin{figure}[!t]
\centering
\begin{pspicture}(0,-1.8)(10.9,1.8)
\psframe[linecolor=black, linewidth=0.04, dimen=outer](5.2,1.8)(2.4,0.6)
\psframe[linecolor=black, linewidth=0.04, dimen=outer](5.2,-0.6)(2.4,-1.8)
\psline[linecolor=black, linewidth=0.04, arrowsize=0.05291666666666668cm 2.0,arrowlength=1.4,arrowinset=0.0]{->}(3.8,0.6)(3.8,-0.2)(3.8,-0.6)
\rput[bl](3.4,1.0){$\pr(\bS)$}
\rput[bl](3.0,-1.4){$\bX\p{\bU,\bS}$}
\psline[linecolor=black, linewidth=0.04, arrowsize=0.05291666666666667cm 2.0,arrowlength=1.4,arrowinset=0.0]{->}(1.2,-1.3)(2.4,-1.3)
\psline[linecolor=black, linewidth=0.04, arrowsize=0.05291666666666667cm 2.0,arrowlength=1.4,arrowinset=0.0]{->}(9.2,-1.3)(10.4,-1.3)
\rput[bl](0.6,-1.4){$\bU$}
\rput[bl](10.5,-1.4){$\bY$}
\psframe[linecolor=black, linewidth=0.04, dimen=outer](9.2,-0.6)(6.4,-1.8)
\psline[linecolor=black, linewidth=0.04, arrowsize=0.05291666666666667cm 2.0,arrowlength=1.4,arrowinset=0.0]{->}(5.2,-1.3)(6.4,-1.3)
\rput[bl](6.6,-1.4){$\pr(\bY\vert\bX,\bA,\bH)$}
\psline[linecolor=black, linewidth=0.04](5.2,1.4)(8.0,1.4)(8.0,1.4)
\psline[linecolor=black, linewidth=0.04, arrowsize=0.05291666666666667cm 2.0,arrowlength=1.4,arrowinset=0.0]{->}(8.0,1.4)(8.0,-0.6)
\rput[bl](5.5,-1.2){$\bX$}
\end{pspicture}
\caption{Gel'fand-Pinsker channel.}
\label{fig:1}
\end{figure}
In this setting, we have the following result.
\begin{theorem}[unknown sparsity pattern]\label{th:app4}
The channel $\pr\p{\cdot\vert\bX,\bA,\bH}$, defined in Section \ref{sec:model}, has an achievable rate given by
\begin{align}
R = \calI_1 - \calH_2\p{m_a}.
\label{app4result}
\end{align}
\end{theorem}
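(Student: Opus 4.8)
The plan is to recognize the setting as a Gel'fand--Pinsker channel, where the pattern sequence $\bS$ plays the role of the channel state, the desired message is carried by $\bU$, and the state $\bS$ is known to neither the encoder nor the decoder. The Gel'fand--Pinsker formula for a state-dependent channel with state available only causally/noncausally collapses, in the case where the state is \emph{unavailable} to both ends, to the mutual information $I(\bU;\bY\mid\bA,\bH)$ maximized over the input distribution of $\bU$ subject to the statistics fixed in Section \ref{sec:model}; but here I will instead obtain $R$ directly by a decomposition argument. First I would write, by the chain rule,
\begin{align}
I\p{\bS,\bU;\bY\vert\bA,\bH} = I\p{\bU;\bY\vert\bA,\bH} + I\p{\bS;\bY\vert\bU,\bA,\bH}.
\end{align}
The left-hand side, normalized by $n$ and passed to the limit, is exactly $\calI_1$, by Theorem \ref{th:app1} (since $\bX = \bX(\bU,\bS)$ and $I(\bX;\bY\vert\bA,\bH) = I(\bS,\bU;\bY\vert\bA,\bH)$). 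So an achievable rate for the $\bU$-message, when $\bS$ is \emph{not} a degree of freedom at the transmitter, is
\begin{align}
R = \limsup_{n\to\infty}\frac{I\p{\bU;\bY\vert\bA,\bH}}{n} = \calI_1 - \limsup_{n\to\infty}\frac{I\p{\bS;\bY\vert\bU,\bA,\bH}}{n}.
\end{align}

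The crux is then to evaluate $\lim_{n\to\infty}\frac{1}{n}I\p{\bS;\bY\vert\bU,\bA,\bH}$ and show it equals $\calH_2\p{m_a}$. Using $I(\bS;\bY\vert\bU,\bA,\bH) = H(\bS\vert\bU,\bA,\bH) - H(\bS\vert\bY,\bU,\bA,\bH) = H(\bS) - H(\bS\vert\bY,\bU,\bA,\bH)$ (since $\bS$ is independent of $\bU,\bA,\bH$), the first term is handled by \eqref{apriorimag}: the method of types gives $\frac1n H(\bS)\to \calH_2(m_a)+f(m_a)$ up to the correction from the normalization $C_n$; more precisely $\frac1n H(\bS) \to \calH_2(m_a)$ once one accounts correctly for the fact that $\pr(\bS)$ concentrates on the type class of $m_a$. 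For the conditional term, the idea is that given $\bU$ and the full sensing matrix $\bA\bH$, observing $\bY = \bA\bH\bX + \bW$ with $X_i = S_i U_i$ lets the receiver estimate $\bS$ essentially perfectly in the relevant regime — or, more carefully, the subtracted term $\frac1n H(\bS\vert\bY,\bU,\bA,\bH)$ vanishes, which should follow because the measurements are (asymptotically) sufficiently informative, or alternatively because any residual uncertainty about $\bS$ is already counted inside $\calI_1$. The honest accounting is: $\calI_1 = \frac1n I(\bS,\bU;\bY) = \frac1n[H(\bS) + I(\bU;\bY\vert\bS) - H(\bS\vert\bY,\bU)]$, and identifying $\frac1n I(\bU;\bY\vert\bS,\bA,\bH) \to \calI_2$ would let one cross-check consistency; the needed claim is that $\frac1n H(\bS\vert\bY,\bU,\bA,\bH)\to \calH_2(m_a) + f(m_a) - \text{(something)}$, and the arithmetic must be arranged so that $R = \calI_1 - \calH_2(m_a)$ drops out.

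The main obstacle I anticipate is precisely this last accounting step: making rigorous the claim that the penalty for not controlling $\bS$ is exactly $\calH_2(m_a)$ and not, say, $\calH_2(m_a)+f(m_a)$ or something involving the posterior uncertainty about $\bS$ given $\bY$. This requires either (i) a Gel'fand--Pinsker converse/achievability argument showing the state-information loss is exactly the state entropy rate $\frac1n H(\bS)$ when the decoder can asymptotically recover $\bS$ from $(\bY,\bU,\bA,\bH)$, or (ii) a direct second-moment/LLN argument that $\frac1n H(\bS\vert\bY,\bU,\bA,\bH)\to 0$ in the relevant SNR/sampling-rate regime, which is plausible because once $\bU$ is known the map $\bS\mapsto \bA\bH\,\mathrm{diag}(\bU)\bS$ is linear with a well-conditioned (tall or square) random matrix so that $\bS\in\{0,1\}^n$ is identifiable from noisy linear measurements. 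I would lean on (i): invoke the Gel'fand--Pinsker theorem for the channel with state $\bS\sim\pr(\bS)$, treating $(\bA,\bH)$ as part of the channel, obtaining the single-letter rate $\calI_1$ minus the rate $\frac1n H(\bS)\to\calH_2(m_a)$ that would be needed to describe the state; since the pattern entropy rate under \eqref{inputassmeas} is $\calH_2(m_a)$ (the $f(m_a)$ term being cancelled by $\log C_n$), we get \eqref{app4result}. The remaining routine work is just verifying that $-\frac1n\log C_n + \frac1n H(\bS\text{ restricted to its dominant type}) = \calH_2(m_a)$, which is immediate from \eqref{apriorimag0}--\eqref{apriorimag}.
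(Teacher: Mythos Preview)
Your chain-rule decomposition
\[
I(\bU;\bY\vert\bA,\bH)=I(\bU,\bS;\bY\vert\bA,\bH)-I(\bS;\bY\vert\bU,\bA,\bH)
\]
is exactly the one the paper uses, and identifying $I(\bU,\bS;\bY\vert\bA,\bH)=I(\bX;\bY\vert\bA,\bH)\to n\calI_1$ is correct. The gap is that you then set yourself the task of proving
\[
\lim_{n\to\infty}\frac{1}{n}I(\bS;\bY\vert\bU,\bA,\bH)=\calH_2(m_a),
\]
i.e.\ that $\tfrac1n H(\bS\vert\bY,\bU,\bA,\bH)\to 0$. This is neither what the theorem asserts nor what is needed. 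Theorem~\ref{th:app4} only claims that $R=\calI_1-\calH_2(m_a)$ is \emph{achievable}; for that it suffices to lower-bound $\tfrac1n I(\bU;\bY\vert\bA,\bH)$, and the paper does this in one line via the trivial bound $I(\bS;\bY\vert\bU,\bA,\bH)\le H(\bS)$, together with $\tfrac1n H(\bS)\to\calH_2(m_a)$ (the $f(m_a)$ term from $-\tfrac1n\log C_n$ cancels against $\bE f(m_s)\to f(m_a)$, exactly as you note at the end).

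Your stronger claim that the posterior pattern entropy vanishes is not obviously true in general: when $q<p$ or at low SNR, the measurements $\bY=\bA\bH\,\mathrm{diag}(\bU)\bS+\bW$ need not identify $\bS$ even with $\bU$ known, so $\tfrac1n H(\bS\vert\bY,\bU,\bA,\bH)$ can stay bounded away from zero. The ``obstacle'' you anticipate is therefore self-imposed; once you recognize that only the inequality $I(\bS;\bY\vert\bU,\bA,\bH)\le H(\bS)$ is required, the proof collapses to the paper's two-line argument. The Gel'fand--Pinsker framing is also a red herring here: in this subsection $\bS$ is unknown to the transmitter as well, so this is simply an ordinary channel with input $\bU$, and the relevant quantity is $I(\bU;\bY\vert\bA,\bH)$ directly.
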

\begin{proof}
This is a channel with input $\bU$ and output $\bY$, where the matrices $\bA$ and $\bH$ are known only to the receiver. Therefore,
\begin{align}
I\p{\bU;\bY\vert\bA,\bH,\bS}&\geq I\p{\bU;\bY\vert\bA,\bH} \\
&= I\p{\bU,\bS;\bY\vert\bA,\bH}-I\p{\bS;\bY\vert\bU,\bA,\bH}\\
&\geq I\p{\bX;\bY\vert\bA,\bH} - H(\bS),
\end{align}
and the result follow, after normalizing by $n$ and taking the limit $n\to\infty$.
\end{proof}

Yet another interesting setting is the case in which the transmitter cannot control the sparsity pattern that is given beforehand. This pattern, $\bS$, is considered to be \emph{channel state} available non-causally/causally to the transmitter solely. The vector $\bU$ is treated as the information to be transmitted over the channel. This framework falls within the well-known Gel'fand-Pinsker channel \cite{Gelfand2} and the Shannon settings \cite{ShannonCaus}, for non-causal and causal knowledge of $\bS$, respectively. This is illustrated in Fig. \ref{fig:1}. A possible motivation for this setting is when the transmitter, that produces the input $\bU$, knows the pattern of switched antennas/shut-off pattern (``green" wireless), but cannot control it. In the following, customary to the Gel'fand-Pinsker and the Shannon settings, the channel state is assumed an i.i.d. process such that $p \define \pr\p{S_i=1}$.

For the case where the side information is available at the transmitter only causally, the capacity expression has been found by Shannon in \cite{ShannonCaus}, and is given by
\begin{align}
\max_{\pr\p{\bvt},\but\p{\bvt,\bst}}\;I\p{\bV;\bY\vert\bA,\bH}
\label{CapShannon}
\end{align}
where $\bU\p{\bV,\bS}$ is a deterministic function of $\bV$ and $\bS$. Note that the auxiliary $\bV$ should be chosen independently of the state \cite{Keshet}, while the transmitted signal can depend on the state. Now, since the sparsity pattern is given, we can adapt the power of the transmitted signal accordingly, that is, we do not transmit at times when $S_i=0$. Accordingly, let us choose $\bV = \bU'$, where $\bU'$ is a Gaussian random vector with independent elements, each with zero mean and variance $p^{-1}\sigma^2$. The transmitted signal is $\bU = \bS\odot\bV$ (which maintains the average power constraint), where $\odot$ denotes the Hadamard product, and thus $\bX = \bS\odot\bU = \bS\odot\bV$, where we have used the fact that $\bS\odot\bS = \bS$. Therefore, \eqref{CapShannon} reads
\begin{align}
I\p{\bV;\bY\vert\bA,\bH} = I\p{\bU';\bY\vert\bA,\bH}.
\end{align}
Unfortunately, we were unable to derive a closed-from expression for the information rate corresponding to $I\p{\bU';\bY\vert\bA,\bH}$. Nonetheless, we note that
\begin{align}
I\p{\bU';\bY\vert\bA,\bH} &=  I\p{\bU',\bS;\bY\vert\bA,\bH}-I\p{\bS;\bY\vert\bU',\bA,\bH} \\
&= I\p{\bX;\bY\vert\bA,\bH}-I\p{\bS;\bY\vert\bA,\bH}\\
&\geq I\p{\bX;\bY\vert\bA,\bH} - H\p{\bS}.
\end{align}
Accordingly, the achievable rate is given by $\calI_{1,S}-\calH_2\p{p}$, where $\calI_{1,S}$ is given in \eqref{I1replica} with $\sigma^2$ replaced by $p^{-1}\sigma^2$, that is the overall SNR is scaled from $p\sigma^2$ to $\sigma^2$. Thus, the improvement due to the knowledge of $\bS$ at the transmitter side compared to Theorem \ref{th:app4} is evident. For the non-causal case, namely, the Gel'fand-Pinsker channel, we could not find a good choice for the auxiliary variable $\bV$. In \cite{GoldsmithDan}, the related case of fading (which may be binary) given as side information known
to the transmitter only was considered. 

Theorems \ref{th:app1} and \ref{th:app4} demonstrate how important it is to be able to control the sparsity pattern $\bS$. Indeed, it can be seen that the gap between these two achievable rates is exactly $\calH_2\p{p}$ which quantifies our uncertainty at the receiver regarding	the source support. This is illustrated in Fig. \ref{fig:b}, which shows the achievable rate as a function of $q$ and the SNR, for $p = 0.2$. It can be seen that there is a significant region of rates and SNR's for which the achievable rate is zero (within this region, the subtractive term in \eqref{app4result} dominates). This is attributed to the fact that the sparsity pattern is uncontrolled, and can be interpreted as the overhead required to the transmitter to adapt to the channel state. 
%\begin{figure}[!t]
%\begin{minipage}[b]{1.0\linewidth}
  %\centering
  %%\centerline{\includegraphics[width=17cm,height = 10cm]{Figures/Varable_Mis_Freq_Amplitude}}
	%\centerline{\includegraphics[width=12cm,height = 9cm]{Figures/max(I1-h2,0)}}
	%
%%  \vspace{2.0cm}
%\end{minipage}
%\caption{Achievable rate in the uncontrolled sparsity pattern case, as a function of $q$ and the SNR, for $p = 0.2$.}
%\label{fig:b}
%\end{figure}

\begin{figure}[!t]
    \subfloat[\label{fig:b1}]{%
      \includegraphics[width=0.48\textwidth]{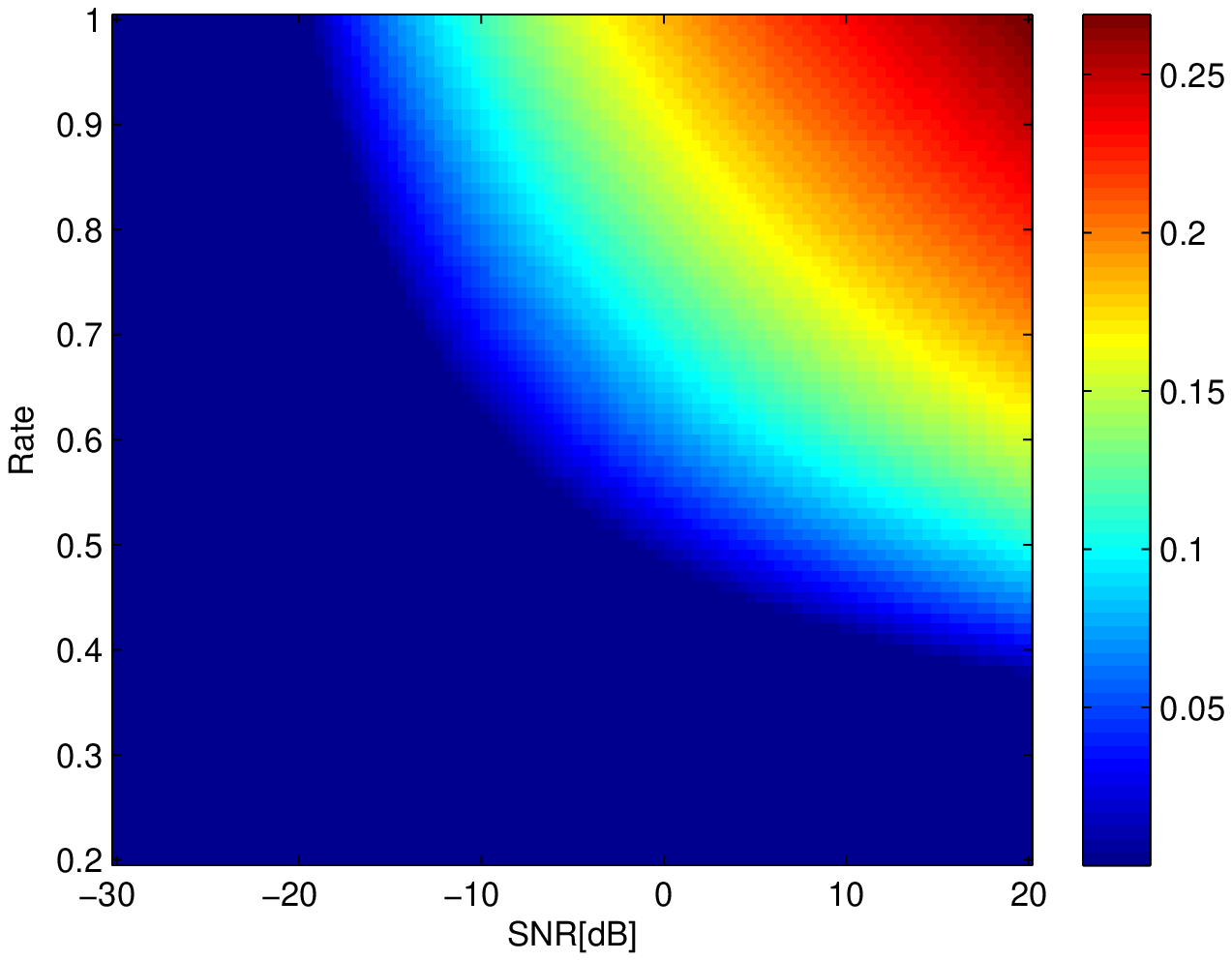}
    }
    \hfill
    \subfloat[\label{fig:b2}]{%
      \includegraphics[width=0.48\textwidth]{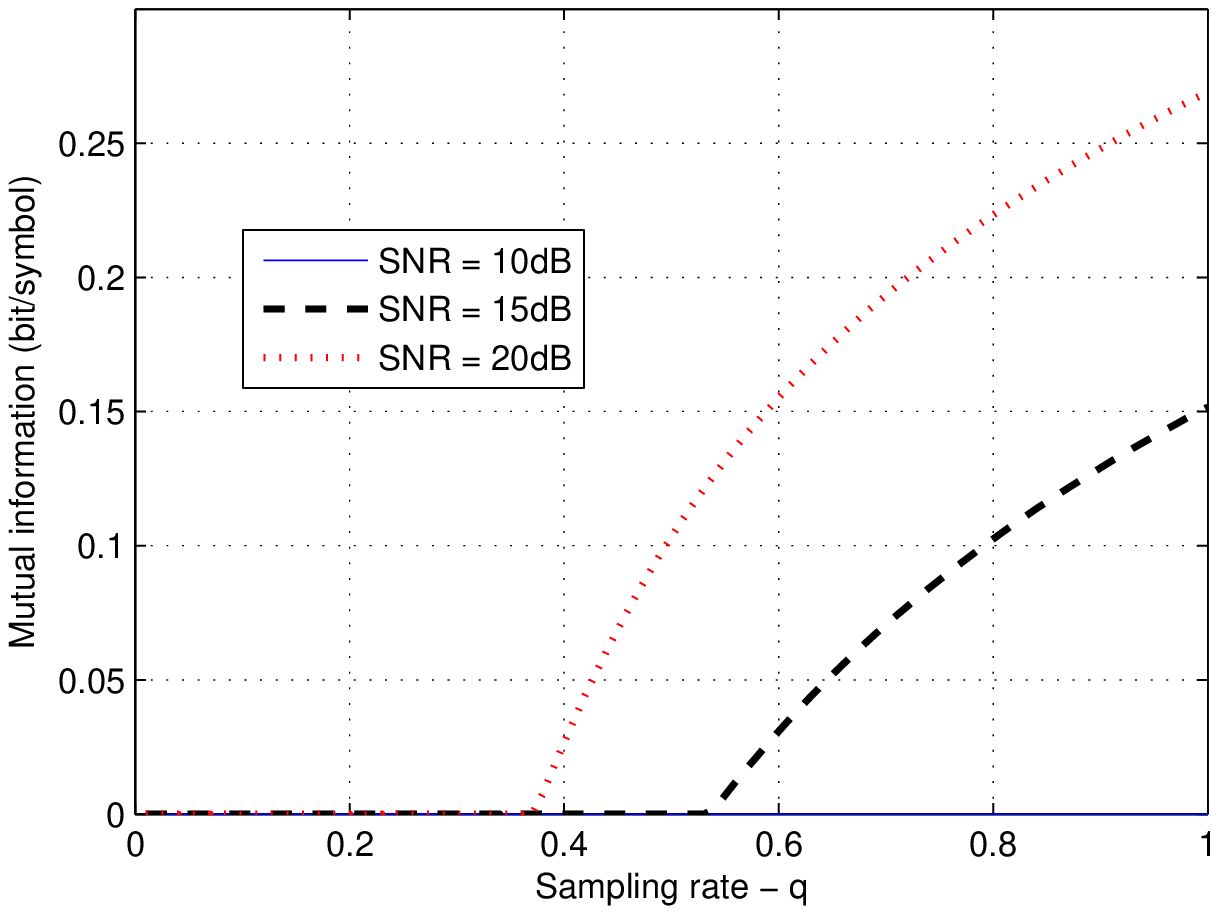}
    }
    \caption{Achievable rate in the uncontrolled sparsity pattern case, as a function of $q$ and the SNR, for $p = 0.2$.}
   \label{fig:b}
\end{figure}

\subsection{The sparsity pattern is carrying the information}\label{SparsityInfo}
In this subsection, we consider the case where the information is conveyed via $\bS$, while $\bU$ plays the role of a fading process, known to nobody. In this case, we have the following result.
\begin{theorem}[informative sparsity pattern]\label{th:app5}
Consider the case in which $\bS$ is carrying the information and $\bU$ is unknown both to the receiver and the transmitter. Then, the achievable rate is given by $R = \calI_1 - \calI_2$.
\end{theorem}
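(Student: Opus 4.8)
The plan is to view this coding problem as transmission of the pattern $\bS$ over a channel whose output is $\p{\bY,\bA,\bH}$, with the Gaussian vector $\bU$ acting as an i.i.d.\ fading unknown to all parties, and then to express the resulting mutual information rate in terms of $\calI_1$ and $\calI_2$ by a single application of the chain rule. First I would argue, exactly as in the proof of Theorem~\ref{th:app1}, that since $\p{\bA,\bH}$ are statistically independent of $\bS$ and are revealed to the receiver, they may be folded into the channel output, so that $I\p{\bY,\bA,\bH;\bS}=I\p{\bY;\bS\vert\bA,\bH}$; hence
\begin{align*}
R=\limsup_{n\to\infty}\frac1n\,I\p{\bY;\bS\vert\bA,\bH}
\end{align*}
is an achievable rate, attained by random coding with i.i.d.\ Bernoulli-$p$ codewords $\bS$ (with $\bU$, $\bA$, $\bH$ drawn independently of the message).

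Next I would evaluate this rate. Writing the chain rule for $I\p{\bS,\bU;\bY\vert\bA,\bH}$ gives
\begin{align*}
I\p{\bS,\bU;\bY\vert\bA,\bH}=I\p{\bS;\bY\vert\bA,\bH}+I\p{\bU;\bY\vert\bS,\bA,\bH}.
\end{align*}
The key step is to note that the left-hand side equals $I\p{\bX;\bY\vert\bA,\bH}$: since $X_i=S_iU_i$, the vector $\bX$ is a deterministic function of $\p{\bS,\bU}$, so $I\p{\bX;\bY\vert\bS,\bU,\bA,\bH}=0$; and since $\bY=\bA\bH\bX+\bW$ with $\bW$ independent of everything, given $\p{\bX,\bA,\bH}$ the output $\bY$ is independent of $\p{\bS,\bU}$, so $I\p{\bS,\bU;\bY\vert\bX,\bA,\bH}=0$. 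Expanding $I\p{\bS,\bU,\bX;\bY\vert\bA,\bH}$ in the two orders then yields $I\p{\bS,\bU;\bY\vert\bA,\bH}=I\p{\bX;\bY\vert\bA,\bH}$ --- even though $\p{\bS,\bU}\mapsto\bX$ is not injective (the off-support entries of $\bU$ are not recoverable, but they also do not affect $\bY$). Therefore
\begin{align*}
I\p{\bS;\bY\vert\bA,\bH}=I\p{\bX;\bY\vert\bA,\bH}-I\p{\bU;\bY\vert\bS,\bA,\bH}.
\end{align*}

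Finally I would pass to the limit: dividing by $n$ and letting $n\to\infty$, the first term tends to $\calI_1$ by Theorem~\ref{th:1} (taken with $f=f_L$, i.e.\ the i.i.d.\ Bernoulli-$p$ pattern, so $m_a=p$), and the second tends to $\calI_2$ of \eqref{calI2} by \cite[Theorem~2]{Tulino}, since conditioned on $\bS$ the restriction of $\bX$ to the (typical, size $\approx pn$) support of $\bS$ is an i.i.d.\ Gaussian vector and the remaining coordinates of $\bU$ are irrelevant; in both cases the $\limsup$ is an ordinary limit, as asserted in those results. Combining, $R=\calI_1-\calI_2$, which is the claim.

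I expect the only genuinely delicate points to be (i) the identity $I\p{\bS,\bU;\bY\vert\bA,\bH}=I\p{\bX;\bY\vert\bA,\bH}$, which must be obtained from the two Markov relations above rather than from invertibility of $\p{\bS,\bU}\mapsto\bX$, and (ii) verifying that the conditioning on $\bS$ appearing in $I\p{\bU;\bY\vert\bS,\bA,\bH}$ is precisely the one defining $\calI_2$ in \eqref{calII2}, i.e.\ that an i.i.d.\ Bernoulli-$p$ pattern has dominating magnetization $m_a=p$ so that \eqref{calI2} applies verbatim. The achievability direction itself is routine and identical to that used in Theorems~\ref{th:app1} and~\ref{th:app4}.
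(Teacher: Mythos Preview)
Your proposal is correct and follows essentially the same route as the paper: the paper's proof applies the chain rule to write $I\p{\bS;\bY\vert\bA,\bH}=I\p{\bU,\bS;\bY\vert\bA,\bH}-I\p{\bU;\bY\vert\bA,\bH,\bS}$, identifies the first term with $I\p{\bX;\bY\vert\bA,\bH}$, and then normalizes and passes to the limit. Your version is simply more explicit about the two Markov relations justifying $I\p{\bS,\bU;\bY\vert\bA,\bH}=I\p{\bX;\bY\vert\bA,\bH}$ and about the matching of the conditioned term to $\calI_2$, points the paper leaves tacit.
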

\begin{proof}
Evidently, under the theorem settings, what matters is the mutual information $I\p{\bS;\bY\vert\bA,\bH}$ which readily can be expressed as
\begin{align}
I\p{\bS;\bY\vert\bA,\bH} &= I\p{\bY;\bU,\bS\vert\bA,\bH} - I\p{\bY;\bU\vert\bA,\bH,\bS}\\
&= I\p{\bY;\bU,\bS\vert\bA,\bH} - I\p{\bY;\bU\vert\bA,\bH,\bS}\\
& = I\p{\bY;\bX\vert\bA,\bH} - I\p{\bY;\bU\vert\bA,\bH,\bS},
\label{informationCarSpar}
\end{align}
and thus Theorem \ref{th:app5} follows, after normalizing by $n$ and taking the limit $n\to\infty$.
\end{proof}

Note that similarly to Subsection \ref{subsec:1}, an optimization over the input distribution can be considered. Nonetheless, by using the same arguments it can be shown that there is no gain by using sources with memory. In the following, we consider the high SNR regime. It is not difficult to show that for large $\sigma^2$, the behavior of $\calI_2$ is as follows \cite[Eq. (34)]{Tulino}
\begin{align}
\calI_2 = \min\ppp{q,p}\log\p{1+4\min\ppp{q,p}\sigma^2} + \calO\p{1}
%\begin{cases}
%\min\ppp{q,p}\log\p{1+\abs{q-p}\sigma^2} + \calO\p{1}, &\text{if}\ p\neq q\\
%q\log\p{1+4p\sigma^2} + \calO\p{1}, &\text{if}\ p= q
%\end{cases}.
\end{align}
Note that the prelog constant (a.k.a. the degree of freedom) in the above term of $\calI_2$ is just the asymptotic almost-sure rank of the matrix $\bA\bH\bS$, as one should expect. Similarly, the prelog of $\calI_1$ is also $\min\ppp{q,p}$. 
%\begin{align}
%\calI_1&\leq\lim_{n\to\infty}\frac{1}{n}\bE\ppp{\log\det\p{\bI+p\sigma^2\bA\bH\bH^T\bA^T}}\\
%& = R\cdot\bE_T\ppp{\log\p{1+p\sigma^2T}}
%\label{Pasturlaw}
%\end{align}
%where the expectation is taken w.r.t. the random variable $T$ that is distributing according to the Mar\v{c}enko-Pastur law \cite{Tul2}. 
Thus, if we let 
\begin{align}
\calI \define \lim_{n\to\infty}\frac{I\p{\bS;\bY\vert\bA,\bH}}{n},
\end{align}
then following the last observations regarding the prelogs of $\calI_1$ and $\calI_2$, it can be seen that the information rate $\calI$ converges in the high SNR regime to a finite value that is independent of $\sigma^2$. This is not surprising due to the obvious fact that $\calI\leq \calH_2\p{p}$. Fig. \ref{fig:c} shows the achievable rate for $p=0.2$. It is evident that due to the fading induced by $\bU$, there is a significant decrease in the achievable rate.  
\begin{figure}[!t]
\begin{minipage}[b]{1.0\linewidth}
  \centering
  %\centerline{\includegraphics[width=17cm,height = 10cm]{Figures/Varable_Mis_Freq_Amplitude}}
	\centerline{\includegraphics[width=12cm,height = 9cm]{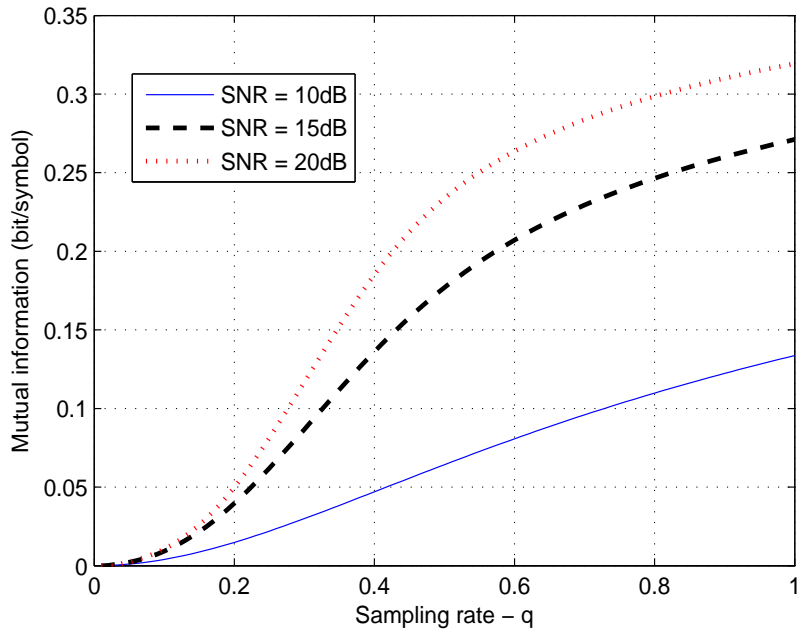}}
	
%  \vspace{2.0cm}
\end{minipage}
\caption{Achievable rate when the sparsity pattern is carrying the information, as a function of $q$ and the SNR, for $p = 0.2$.}
\label{fig:c}
\end{figure}

\section{The Wiretap Channel}\label{sec:wire}

In the wiretap channel \cite{Wyner}, symbols that are transmitted through a main channel to a legitimate receiver are observed by an eavesdropper across a wiretap channel. The goal of coding for wiretap channels is to facilitate error-free decoding across the main channel, while ensuring that the information transfer rate across the wiretap channel would be as small as possible. A desirable property here is \emph{weak secrecy}, which means that the normalized mutual information between the source and the wiretap channel output will tend to zero.

In our problem, we consider the case in which the legitimate user receives
\begin{align}
\bY_1 = \bA_1\bH_1\bX + \bW_1,
\end{align}
while the eavesdropper receives
\begin{align}
\bY_2 = \bA_2\bH_2\bX + \bW_2.
\end{align}
We assume that the statistics of $\bH_1$ and $\bH_2$ are the same, namely, both are random matrices with i.i.d. elements having variance $1/n$. So is the case for the Gaussian noises $\bW_1$ and $\bW_2$. The difference is, however, between the matrices $\bA_1$ and $\bA_2$, where for $\bA_1$ we define $q_1 \define \pr\p{\bA_{i,i}^{\p{1}}=1}$, for $\bA_2$ we define $q_2 \define \pr\p{\bA_{i,i}^{\p{2}}=1}$, and it is assumed that $q_1\geq q_2$. The motivation could be processing limitations, that is the legitimate receiver has stronger processors, and hence can process more outputs/measurements, going via different jamming patterns, as well as cloud processing (that is the legitimate receiver gets controlled access to more outputs, than the non-legitimate one which has to collect these by chance). 

In a fashion similar to the previous section, we consider here two different cases: Controlled or uncontrolled sparsity pattern (by the transmitter), and unavailable a-priori to both the legitimate and the eavesdropper users. Another configuration that can be considered is when the sparsity pattern $\bS$ is available to both the legitimate user and the eavesdropper, which was already studied in \cite{Liang}.

\subsection{Controlled sparsity pattern}
In this subsection, we consider the case where $\bS$ is controlled by the transmitter, but, is unavailable a-priori to both the legitimate user and the eavesdropper. The \emph{secrecy capacity} is the highest achievable rate that allows perfect weak secrecy, or, in other words, maximal equivocation for the wiretapper. Accordingly, as we deal with degraded channels, our setting is just a special case of \cite{Wyner11}, and the secrecy rate is given by
\begin{align}
\lim_{n\to\infty}\frac{1}{n}\pp{I\p{\bY_1;\bX\vert\bA_1,\bH_1} - I\p{\bY_2;\bX\vert\bA_2,\bH_2}}
\end{align}
which involves only $\calI_1$ terms. Thus, we have the following result.
\begin{theorem}[controlled sparsity pattern]\label{th:app7}
Assume that $\bS$ is controlled by the transmitter, but is available a-priori to neither the legitimate user nor the eavesdropper. Then, the achievable secrecy rate is given by $R = \calI_{1,L} - \calI_{1,E}$, where $\calI_{1,L}$ and $\calI_{1,E}$ are the information rates of the legitimate user and the eavesdropper, given in \eqref{I1replica}, with $q$ replaced by $q_1$ and $q_2$, respectively.
\end{theorem}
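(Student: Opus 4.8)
The plan is to recognize that this is essentially a direct corollary of the classical degraded wiretap channel result of Wyner (here cited as \cite{Wyner11}) combined with Theorem \ref{th:app1}, so the proof is a short reduction rather than a new computation. First I would observe that since $q_1 \geq q_2$ and the two eavesdropper/legitimate channels share the same statistics for $\bH$ and $\bW$, the pair $(\bY_2,\bA_2,\bH_2)$ is a stochastically degraded version of $(\bY_1,\bA_1,\bH_1)$: one can construct $\bA_2$ from $\bA_1$ by independently zeroing out each surviving diagonal entry with probability $1-q_2/q_1$, and then re-randomizing the corresponding rows of $\bH$, so that conditioned on $\bX$ the eavesdropper's observation is obtained from the legitimate one through a memoryless (row-wise) channel independent of $\bX$. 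This puts us exactly in the degraded setting of \cite{Wyner11}.

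Next I would invoke the secrecy-capacity formula for degraded wiretap channels, which for a fixed input distribution gives the achievable equivocation rate $\tfrac1n\big[I(\bX;\bY_1\mid\bA_1,\bH_1) - I(\bX;\bY_2\mid\bA_2,\bH_2)\big]$, and note that because $\bS$ (hence $\bX$) is controlled by the transmitter, we may simply use the source statistics of Section \ref{sec:model} as the input distribution, exactly as in the proof of Theorem \ref{th:app1}. Taking $n\to\infty$, the first term converges to $\calI_1$ evaluated with sampling rate $q_1$ — call it $\calI_{1,L}$ — by Theorem \ref{th:1} (or Claim \ref{th:tul}), and the second term converges to $\calI_1$ evaluated with sampling rate $q_2$, namely $\calI_{1,E}$; both limits exist as genuine limits, not merely limits superior, as asserted in Theorem \ref{th:1}. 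Hence the difference $\calI_{1,L} - \calI_{1,E}$ is an achievable secrecy rate, which is the claim.

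I would also spell out briefly why the limit of the difference equals the difference of the limits: each term individually converges by Theorem \ref{th:1}, so no subtlety arises, and the degradedness guarantees non-negativity of the resulting rate since $q\mapsto\calI_1(q)$ is monotone (the extra measurements can only help), which is consistent with $q_1\geq q_2$. The only point requiring a little care — and the part I would flag as the main obstacle — is justifying the degradedness claim rigorously: one must argue that independently thinning the diagonal of $\bA_1$ and resampling the affected rows of the i.i.d. Gaussian matrix $\bH$ produces a channel whose output has exactly the law of $(\bY_2,\bA_2,\bH_2)$ given $\bX$, and that this auxiliary randomization is independent of $\bX$. This is a routine but slightly fiddly coupling argument about i.i.d. rows and the product structure of $\bA\bH$, and once it is in place the rest of the proof is immediate from \cite{Wyner11} and Theorem \ref{th:1}.
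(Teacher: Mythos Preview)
Your proposal is correct and follows essentially the same approach as the paper: the paper simply asserts that ``as we deal with degraded channels, our setting is just a special case of \cite{Wyner11}'' and writes the secrecy rate as $\lim_{n\to\infty}\tfrac1n[I(\bY_1;\bX\mid\bA_1,\bH_1)-I(\bY_2;\bX\mid\bA_2,\bH_2)]$, then identifies the two terms as $\calI_{1,L}$ and $\calI_{1,E}$. Your write-up is actually more careful than the paper's, since you spell out the row-thinning/resampling coupling that justifies stochastic degradedness, which the paper leaves implicit.
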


Note that similarly to the discussion in Subsection \ref{subsec:1}, one can consider an optimization of the above achievable rate over the class of sources defined in \eqref{inputassmeas}, namely, exploiting the fact that $\bS$ does not have to be Bernoulli. However, by repeating the same steps as in Theorem \ref{th:app2}, it can be shown that there is no gain by using any other source pattern other than the Bernoulli one. 

\begin{theorem}[memoryless pattern is optimal over $\mathscr{P}_s$]\label{th:app10}
Let $\mathscr{F}$ be defined as in \eqref{calFdef}, and let $\mathscr{P}_s$ be the set of probability measures in the form of \eqref{inputassmeas}. Then, under the asymptotic average sparsity constraint, namely,
\begin{align}
\lim_{n\to\infty}\frac{1}{n}\bE\ppp{\sum_{i=1}^nS_i} = p,
\label{theConstr2}
\end{align}
the following holds
\begin{align}
\max_{\mathscr{P}_s}\;\ppp{\calI_{1,L} - \calI_{1,E}} = \max_{\mathscr{F}}\;\ppp{\calI_{1,L} - \calI_{1,E}} = \left.\ppp{\calI_{1,L} - \calI_{1,E}}\right|_{f = f_L}.
\end{align}
In words, memoryless patterns give the maximum achievable rate over $\mathscr{P}_s$.
\end{theorem}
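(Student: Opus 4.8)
The plan is to mirror the proof of Theorem~\ref{th:app2}, applied separately to each of the two mutual-information rates that make up the secrecy expression. First, exactly as in that proof, the asymptotic average sparsity constraint \eqref{theConstr2} pins the dominant a-priori magnetization. Under the pattern law \eqref{inputassmeas} the empirical magnetization $m_s=\frac1n\sum_i S_i$ concentrates (method of types / Laplace) on the maximizer $m_a$ of $\calH_2(m)+f(m)$ on $[0,1]$, so $\frac1n\bE\ppp{\sum_{i=1}^n S_i}\to m_a$ and \eqref{theConstr2} forces $m_a=p$ for every admissible $f\in\mathscr{F}$. The affine $f_L(m)=am+b$ with $a=\ln\frac{p}{1-p}$ is admissible and is exactly the i.i.d.\ (Bernoulli-$p$) pattern; it then remains to show that no other admissible $f$ does better for $\calI_{1,L}-\calI_{1,E}$.

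Next, both rates are given by Theorem~\ref{th:1} with the same $\sigma^2$, the same pinned $m_a=p$, and the same $f$, differing only through the sampling rate --- $q_1$ for the legitimate link, $q_2$ for the eavesdropper. Writing $\calI_1(f;q)$ for the functional in \eqref{DasymMMSE}, the secrecy rate is $\calI_1(f;q_1)-\calI_1(f;q_2)$, a functional of $f$ alone. The core step is to rerun the Appendix~\ref{app:2} computation while noting that it is uniform in $q$: the sampling rate enters the auxiliary functions $b,g,\bar{I},V,L,t$ and the law of $Q$ only as an inert parameter, and the fixed-point system \eqref{magnetddd} together with the selection rule \eqref{criticalPoint} retains exactly the same form for $q_1$ and for $q_2$. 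Hence the conclusion of Appendix~\ref{app:2} --- that, with $m_a$ pinned, the $f$-dependence of $\calI_1(f;q)$ beyond $m_a$ washes out at the selected saddle, $f_L$ being the extremal source --- holds verbatim for both sampling rates, with the \emph{same} extremal $f_L$. Subtracting the two instances at $f=f_L$ then gives $\calI_{1,L}(f)-\calI_{1,E}(f)=\left.\{\calI_{1,L}-\calI_{1,E}\}\right|_{f=f_L}$ for all admissible $f$, and the chain $\max_{\mathscr{P}_s}\{\calI_{1,L}-\calI_{1,E}\}=\max_{\mathscr{F}}\{\calI_{1,L}-\calI_{1,E}\}=\left.\{\calI_{1,L}-\calI_{1,E}\}\right|_{f=f_L}$ follows, the first equality being the bijection between admissible $f$'s and measures of the form \eqref{inputassmeas}.

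I expect the main obstacle to lie in the $q$-uniformity of the fixed-point \emph{selection} of Theorem~\ref{th:1}: one must check that for \emph{both} sampling rates the dominant solution of \eqref{magnetddd} is of the same type, so that the combination $f(m_a)-t(m_\circ)-h(\gamma_\circ,m_\circ)$ genuinely leaves behind only a function of $m_a$. A related subtlety is that Appendix~\ref{app:2} a priori yields only the one-sided bound $\calI_1(f;q)\le\calI_1(f_L;q)$ for each $q$ in isolation, which does not by itself control the \emph{difference}; closing this requires the two rates to be extremized \emph{simultaneously} at $f=f_L$ (which the collapse of the $f$-dependence above supplies), and, failing a direct verification, the degraded ordering $q_1\ge q_2$ already invoked for the secrecy-capacity formula lets one rewrite $\calI_{1,L}-\calI_{1,E}$ as a single effective-channel quantity and reduce to the exact situation of Theorem~\ref{th:app2}.
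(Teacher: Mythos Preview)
Your plan has a genuine gap, and it is precisely the one you flag yourself without actually closing. Applying Appendix~\ref{app:2} to each term \emph{separately} yields at most $\calI_{1,L}(f)\le\calI_{1,L}(f_L)$ and $\calI_{1,E}(f)\le\calI_{1,E}(f_L)$, and these two one-sided bounds say nothing about the difference: to make $\calI_{1,L}-\calI_{1,E}$ large one would like $\calI_{1,E}$ to be \emph{small}, not large. Your attempted fix --- that the ``$f$-dependence washes out at the selected saddle'' so that $\calI_{1}(f;q)=\calI_{1}(f_L;q)$ for \emph{all} admissible $f$ --- misreads Appendix~\ref{app:2}. That appendix shows the value of $\kappa$ is the same at every \emph{stationary} choice of $\{\alpha_l\}$ (those for which $m_a=m_\circ$), and then invokes concavity to identify this stationary value with the maximum; it does \emph{not} show that $\calI_1(f;q)$ is constant over the full constraint set $\{f:\ m_a=p\}$. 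Hence you cannot subtract the two ``instances'' and obtain equality for all admissible $f$. Your fallback via degradedness is not fleshed out and is not used in the paper either.

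The paper proceeds differently: it optimizes the \emph{difference} directly. Writing
\[
\kappa(\alpha_1,\ldots,\alpha_M)\;=\;-\sum_k\alpha_k\frac{m_{\circ,L}^k}{k}-\tilde t_L(m_{\circ,L})-h_L(\gamma_{\circ,L},m_{\circ,L})
\;+\;\sum_k\alpha_k\frac{m_{\circ,E}^k}{k}+\tilde t_E(m_{\circ,E})+h_E(\gamma_{\circ,E},m_{\circ,E}),
\]
and repeating the differentiation of Appendix~\ref{app:2} (the $m_a$-terms from the two rates cancel), one gets $\partial\kappa/\partial\alpha_l=(m_{\circ,E}^l-m_{\circ,L}^l)/l$, so the stationary condition is $m_{\circ,L}=m_{\circ,E}$ --- \emph{not} $m_a=m_\circ$. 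At this stationary point the value of $\kappa$ depends on $f$ only through $f'(m_{\circ,L})$, which the equation $m_{\circ,L}=m_{\circ,E}$ itself fixes; hence all stationary $\{\alpha_l\}$ give the same $\kappa$, the memoryless choice is among them, and concavity finishes. The point is that the cancellation and the resulting stationarity condition are specific to the difference; you cannot obtain them by treating the two rates independently.
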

\begin{proof}
See Appendix \ref{app:3}.
\end{proof}
Again, this result is expected due to the symmetry of the assumed model, and the fact that $\bA$ and $\bH$ are available only at the receivers side. Had these matrices been known also to the transmitter, then by controlling the sparsity pattern better secrecy is expected. Finally, similarly to the discussion in Subsection \ref{SparsityInfo}, in the high SNR regime, it is evident that for $q_1\geq q_2\geq p$ the achievable secrecy rate is converges in the high SNR regime to a ﬁnite value that is independent of the SNR. However, if $q_1\geq p>q_2$, then the secrecy rate grows without bound with $\sigma^2$ with prelog constant given by $\p{p-q_2}$. 

Fig. \ref{fig:d} shows the secrecy rate as a function of $q_1$ and the SNR for $p = 0.2$ and $q_2 = 0.3$. It can be seen that when $q_1 = q_2$ the secrecy rate vanishes, as one should expect. Also, for any $q_1>0.3$, increasing the SNR resulting in an increasing of the secrecy rate, and similarly stronger legitimate receivers can achieve higher secrecy rate.

\begin{figure}[!t]
\begin{minipage}[b]{1.0\linewidth}
  \centering
  %\centerline{\includegraphics[width=17cm,height = 10cm]{Figures/Varable_Mis_Freq_Amplitude}}
	\centerline{\includegraphics[width=12cm,height = 9cm]{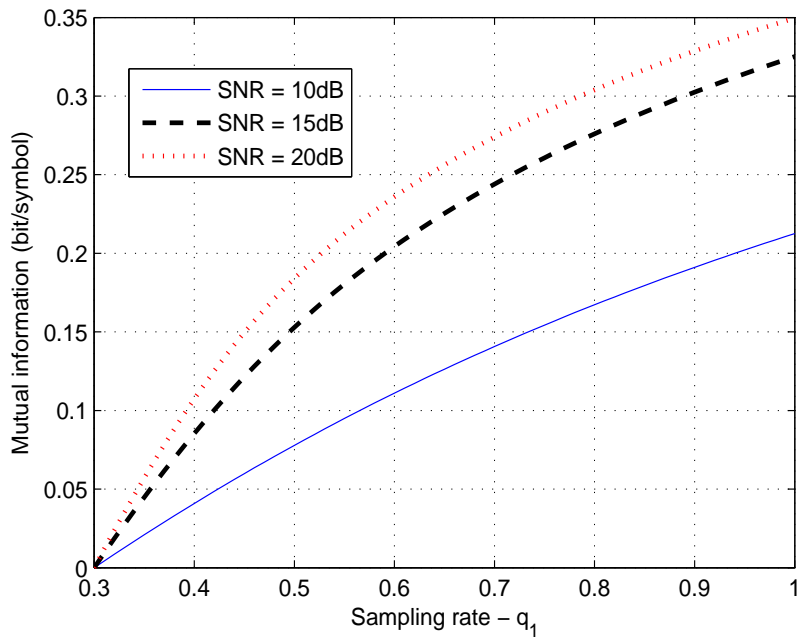}}
	
%  \vspace{2.0cm}
\end{minipage}
\caption{Secrecy rate when the sparsity pattern is controlled, as a function of $q_1$ and the SNR, for $p = 0.2$ and $q_2 = 0.3$.}
\label{fig:d}
\end{figure}

\subsection{Unavailable sparsity pattern}
In this subsection, we consider the case where the sparsity pattern is known to nobody, and the vector $\bU$ is treated as the information to be transmitted over the channel. As before, since we deal with degraded channels, our setting is just a special case of \cite{Wyner11}, and the secrecy rate is now given by
\begin{align}
\lim_{n\to\infty}\frac{1}{n}\pp{I\p{\bY_1;\bU\vert\bA_1,\bH_1} - I\p{\bY_2;\bU\vert\bA_2,\bH_2}}
\label{initRegion}
\end{align}
Thus, we have the following result.
\begin{theorem}[unavailable sparsity pattern]\label{th:app7a}
Assume that $\bS$ is known to nobody. Then, an achievable secrecy rate is given by 
\begin{align}
\calI_{1,L}-\calI_{2,E}-\calH_2\p{p}
\label{UnCapReg}
\end{align}
\end{theorem}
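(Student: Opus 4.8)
The plan is to take as the point of departure the single-letter achievable secrecy rate \eqref{initRegion} — which, being a special case of \cite{Wyner11}, holds with the channel input $\bU$ drawn i.i.d.\ $\calN\p{0,\sigma^2}$ so that the transmit-power constraint is met and $\bX=\bS\odot\bU$ carries the statistics of Section \ref{sec:model} — and then to bound its two terms separately, exactly as in the proof of Theorem \ref{th:app4}. For the legitimate term, since $\bX=\bS\odot\bU$ is a deterministic function of $\p{\bU,\bS}$ and $\bY_1$ depends on $\p{\bU,\bS}$ only through $\bX$, the chain rule gives
\begin{align}
I\p{\bU;\bY_1\vert\bA_1,\bH_1}&=I\p{\bU,\bS;\bY_1\vert\bA_1,\bH_1}-I\p{\bS;\bY_1\vert\bU,\bA_1,\bH_1}\nonumber\\
&\geq I\p{\bX;\bY_1\vert\bA_1,\bH_1}-H\p{\bS},\nonumber
\end{align}
the inequality being because $I\p{\bU,\bS;\bY_1\vert\bA_1,\bH_1}=I\p{\bX;\bY_1\vert\bA_1,\bH_1}$ (the channel sees only $\bX$) while $I\p{\bS;\bY_1\vert\bU,\bA_1,\bH_1}\leq H\p{\bS}$. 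Dividing by $n$ and letting $n\to\infty$, the first term converges to $\calI_{1,L}$ — this is \eqref{calII1} with $q$ replaced by $q_1$, which by Theorem \ref{th:1} is an honest limit — while $\tfrac1n H\p{\bS}\to\calH_2\p{p}$ since the pattern is i.i.d.\ $\text{Bernoulli-}p$ (so that $m_a=p$).

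For the eavesdropper term I would use $\bU\perp\bS$: data processing and the chain rule yield
\begin{align}
I\p{\bU;\bY_2\vert\bA_2,\bH_2}&\leq I\p{\bU;\bY_2,\bS\vert\bA_2,\bH_2}\nonumber\\
&=I\p{\bU;\bS\vert\bA_2,\bH_2}+I\p{\bU;\bY_2\vert\bS,\bA_2,\bH_2}=I\p{\bU;\bY_2\vert\bA_2,\bH_2,\bS},\nonumber
\end{align}
and, after normalizing by $n$, the right-hand side tends to $\calI_{2,E}$, i.e.\ $\calI_2$ of \eqref{calI2} with $q$ replaced by $q_2$ (again an honest limit, by \cite[Theorem~2]{Tulino}). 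Subtracting the two bounds, the limiting value of the bracket in \eqref{initRegion} is at least $\calI_{1,L}-\calH_2\p{p}-\calI_{2,E}$, which is precisely \eqref{UnCapReg}, and the theorem follows.

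The step I expect to be the main obstacle is not this bookkeeping but making precise that \eqref{initRegion} really is an achievable secrecy rate in the present setup. The broadcast situation is \emph{not} physically degraded in the input $\bU$, because $\bY_1$ and $\bY_2$ are coupled through the common, unrevealed state $\bS$; what does hold is that, conditioned on $\bX$, the eavesdropper's observation is stochastically degraded relative to the legitimate one whenever $q_1\geq q_2$ (its informative coordinates may be taken to be a sub-collection of the legitimate user's, the remaining ones re-simulated as pure unit-variance Gaussian noise), together with the Markov chain $\bU\to\bX\to\p{\bY_1,\bY_2}$ in which $\bU\to\bX$ is the noisy map $\bX=\bS\odot\bU$. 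With these two facts the Csisz\'ar--K\"orner/Wyner argument underlying \cite{Wyner11} applies with auxiliary variable equal to $\bU$, which is exactly the single-letterization producing \eqref{initRegion}; one must finally check that the finite-$n$ (in)equalities above survive the passage to the limit, which is where the fact that the pertinent $\limsup$'s are in fact ordinary limits (Theorem \ref{th:1} and \cite[Theorem~2]{Tulino}) enters.
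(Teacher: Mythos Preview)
Your proof is correct and follows essentially the same route as the paper's: start from \eqref{initRegion}, expand via the chain rule using that $\bY_i$ depends on $\p{\bU,\bS}$ only through $\bX$, and exploit $\bU\perp\bS$ together with $I\p{\bS;\bY_1\vert\bU,\bA_1,\bH_1}\le H\p{\bS}$. The only organizational difference is that you bound the legitimate and eavesdropper terms separately, whereas the paper first rewrites both as $I\p{\bX;\bY_i\vert\cdot}-I\p{\bS;\bY_i\vert\bU,\cdot}$, then uses the inequality $I\p{\bS;\bY_2\vert\bU,\bA_2,\bH_2}\ge I\p{\bS;\bY_2\vert\bA_2,\bH_2}$ (proved by exactly the data-processing step you use with the roles of $\bU$ and $\bS$ swapped) and finally invokes \eqref{informationCarSpar} to collapse $-I\p{\bX;\bY_2\vert\cdot}+I\p{\bS;\bY_2\vert\cdot}$ into $-I\p{\bU;\bY_2\vert\bS,\cdot}$; the resulting bound is identical. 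Your treatment of the degradedness issue --- noting that the channel is not physically degraded in $\bU$ but that the Markov chain $\bU\to\bX\to\p{\bY_1,\bY_2}$ together with stochastic degradedness of $\bY_2$ from $\bY_1$ given $\bX$ (for $q_1\ge q_2$) suffices for the Csisz\'ar--K\"orner/Wyner argument with auxiliary $\bU$ --- is in fact more careful than the paper, which simply asserts that ``we deal with degraded channels'' and cites \cite{Wyner11}.
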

\begin{proof}
Using \eqref{initRegion}, we note that
\begin{align}
I\p{\bY_1;\bU\vert\bA_1,\bH_1} - I\p{\bY_2;\bU\vert\bA_2,\bH_2} &\stackrel{(a)}{=} I\p{\bX;\bY_1\vert\bA_1,\bH_1} - I\p{\bS;\bY_1\vert\bU,\bA_1,\bH_1} \nonumber\\
& \ \ \ -I\p{\bX;\bY_2\vert\bA_2,\bH_2}+I\p{\bS;\bY_2\vert\bU,\bA_2,\bH_2}\\
& \stackrel{(b)}{\geq}  I\p{\bX;\bY_1\vert\bA_1,\bH_1} - H\p{\bS} \nonumber\\
&\ \ \ - I\p{\bX;\bY_2\vert\bA_2,\bH_2}+I\p{\bS;\bY_2\vert\bA_2,\bH_2}\\
&\stackrel{(c)}{\geq}I\p{\bX;\bY_1\vert\bA_1,\bH_1} - H\p{\bS} - I\p{\bU;\bY_2\vert\bA_2,\bH_2,\bS}\label{last}
\end{align}
where $(a)$ follows from the chain rule of the mutual information, $(b)$ follows from the fact that $I\p{\bS;\bY_2\vert\bU,\bA_2,\bH_2}\geq I\p{\bS;\bY_2\vert\bA_2,\bH_2}$, which in turn is due to
\begin{align}
I\p{\bS;\bY_2\vert\bA_2,\bH_2}&\leq I\p{\bS;\bY_2,\bU\vert\bA_2,\bH_2}\\
& = I\p{\bS;\bU\vert\bA_2,\bH_2} + I\p{\bS;\bY_2\vert\bU\bA_2,\bH_2}\\
& = I\p{\bS;\bY_2\vert\bU,\bA_2,\bH_2}
\end{align}
where the first passage is due to the data processing inequality. Finally, $(b)$ follows from \eqref{informationCarSpar}. Therefore, \eqref{UnCapReg} readily follows from \eqref{last}.
\end{proof}

Fig. \ref{fig:d2} shows the secrecy rate as a function of $q_1$ for $p = 0.2$, various values of the SNR, and $q_2 = 0.1$ and $q_2=0.2$. The results illustrate, again, the importance of controlling the sparsity pattern. 

\begin{figure}[!t]
\begin{minipage}[b]{1.0\linewidth}
  \centering
  %\centerline{\includegraphics[width=17cm,height = 10cm]{Figures/Varable_Mis_Freq_Amplitude}}
	\centerline{\includegraphics[width=12cm,height = 9cm]{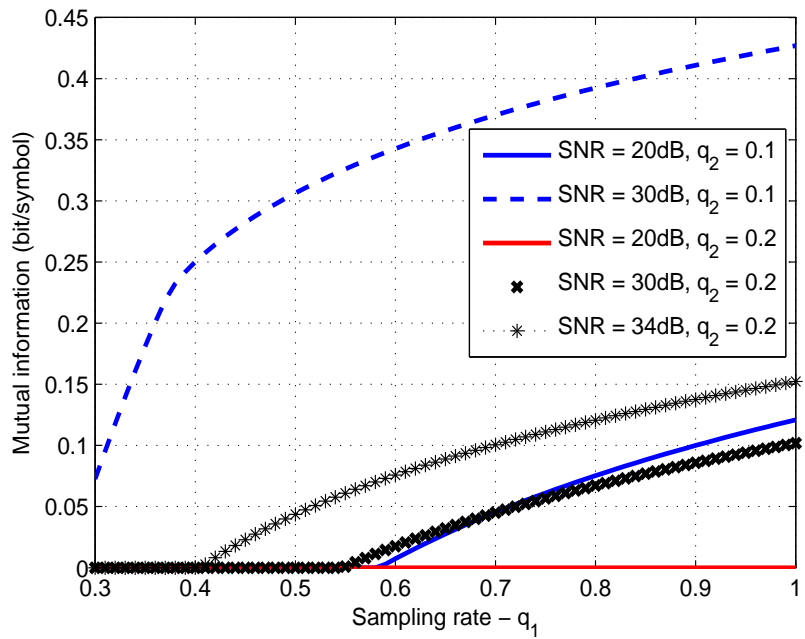}}
	
%  \vspace{2.0cm}
\end{minipage}
\caption{Secrecy rate when the sparsity pattern is unavailable, as a function of $q_1$ and the SNR, for $p = 0.2$ and $q_2 = 0.3$.}
\label{fig:d2}
\end{figure}

\subsection{Uncontrolled sparsity pattern}
Finally, we consider the case in which $\bS$ is non-causally available to the transmitter, but cannot be controlled, that is, $\bS$ plays the role of a state as in Subsection \ref{Uncontrolled1}. The problem of secrecy capacity here, is not fully solved, but an insightful achievable region was found in \cite{ChenVinck}. This achievable rate is given by
\begin{align}
\lim_{n\to\infty}\frac{1}{n}\pp{I\p{\bV;\bY_1\vert\bA_1,\bH_1} - \max\ppp{I\p{\bV;\bS},I\p{\bV;\bY_2\vert\bA_2,\bH_2}}}
\end{align}
where $\bV-\p{\bU,\bS}-\p{\bY_1,\bY_2}$. Note that, as before, $\bY_2$ can be represented as a degraded version of $\bY_1$. Evidently, this achievable rate is again composed of $\calI_1$ terms, as well as $I\p{\bV;\bS}$. Taking $\bV = \bS\bU$, we obtain the following result.
\begin{theorem}[uncontrolled sparsity pattern]\label{th:app9}
Assume that $\bS$ is a non-causal state information, that is unavailable a-priori to both the legitimate user and the eavesdropper. Then, the achievable secrecy rate is given by 
\begin{align}
R = \calI_{1,L}-\max\ppp{\calH_2\p{p},\calI_{1,E}}.
\end{align}
\end{theorem}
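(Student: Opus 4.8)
The plan is to instantiate the achievable region of \cite{ChenVinck} for a wiretap channel with non-causally known encoder state, displayed immediately before the theorem, with the auxiliary variable $\bV=\bS\odot\bU$, and then to evaluate the three mutual-information terms that appear in it. Recall that this region asserts achievability of
\begin{align}
\lim_{n\to\infty}\frac{1}{n}\pp{I\p{\bV;\bY_1\vert\bA_1,\bH_1}-\max\ppp{I\p{\bV;\bS},I\p{\bV;\bY_2\vert\bA_2,\bH_2}}}
\end{align}
for any $\bV$ with $\bV-\p{\bU,\bS}-\p{\bY_1,\bY_2}$ and any encoder producing the channel input from $\p{\bV,\bS}$; the conditioning on $\p{\bA_i,\bH_i}$ is harmless here because these matrices are independent of $\p{\bS,\bU,\bX}$ and are revealed to the corresponding receiver, so they may be appended to the channel output exactly as in \eqref{discsc}.

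With $\bV=\bS\odot\bU$, and since $X_i=S_iU_i$ already gives $\bX=\bS\odot\bU$, we have $\bV=\bX$ identically. Hence the Markov condition $\bV-\p{\bU,\bS}-\p{\bY_1,\bY_2}$ holds trivially ($\bV$ is a deterministic function of $\p{\bU,\bS}$, and $\p{\bY_1,\bY_2}$ depend on $\p{\bU,\bS}$ only through $\bX=\bV$), the Gel'fand-Pinsker encoding requirement is met by the identity map $\bX\p{\bV,\bS}=\bV$, and the average-power constraint of Section \ref{sec:model} is automatically respected. It then remains to compute the three terms. Since $\bV=\bX$, Theorem \ref{th:1} (equivalently \eqref{I1replica}) applied with $q=q_1$ and with $q=q_2$ shows that $\frac1n I\p{\bV;\bY_1\vert\bA_1,\bH_1}$ and $\frac1n I\p{\bV;\bY_2\vert\bA_2,\bH_2}$ converge, as genuine limits, to $\calI_{1,L}$ and $\calI_{1,E}$, respectively. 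For the state term, I would use that the amplitude law has no atom at the origin: since $U_i$ is zero-mean Gaussian with positive variance, $\pr\ppp{U_i=0}=0$, so $S_i=\Ind\ppp{X_i\neq0}$ almost surely, $\bS$ is a deterministic function of $\bX$, and therefore $I\p{\bV;\bS}=I\p{\bX;\bS}=H\p{\bS}$. As $\bS$ is the i.i.d.\ Bernoulli-$p$ state of this subsection, $\frac1n H\p{\bS}\to\calH_2\p{p}$. Substituting the three limits into the displayed expression yields $R=\calI_{1,L}-\max\ppp{\calH_2\p{p},\calI_{1,E}}$.

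I expect the only subtle points to be (i) the identity $I\p{\bX;\bS}=H\p{\bS}$, which hinges precisely on the zero amplitude having probability zero so that the support pattern of $\bX$ recovers $\bS$, and (ii) verifying that the receiver-side, source-independent matrices $\p{\bA_i,\bH_i}$ can be folded into the channel outputs without affecting the validity of the Chen--Vinck template; both of these are routine rather than deep. It should be emphasized that this choice of $\bV$ is not claimed to be optimal for the (still open) secrecy capacity of this channel, so only an achievable secrecy rate is asserted.
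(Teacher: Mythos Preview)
Your proposal is correct and follows exactly the approach indicated in the paper: the paper states the Chen--Vinck achievable expression and then simply says ``Taking $\bV=\bS\bU$, we obtain the following result,'' without further detail, and you have filled in precisely the three evaluations this entails. In particular, your justification that $I\p{\bV;\bS}=H\p{\bS}$ via $S_i=\Ind\ppp{X_i\neq0}$ a.s.\ (using the atomlessness of the Gaussian amplitude) is the right way to make the paper's implicit step explicit.
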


Theorems \ref{th:app7} and \ref{th:app9} demonstrate some gain that results from the ability to control the sparsity pattern control the sparsity pattern $\bS$. Indeed, it can be seen that for high SNR there is no difference between the two achievable secrecy rates. However, below some SNR level, when the sparsity pattern cannot be controlled, the binary entropy $\calH_2\p{p}$ dominates $\calI_{1,E}$, and the resulting secrecy rate is smaller than the secrecy rate in case of controlled sparsity pattern. 

Fig. \ref{fig:e} shows the achievable rate as a function of $q_1$ and the SNR, for $p = 0.2$ and $q_2 = 0.3$. It can be seen that the result is similar to Fig. \ref{fig:c}, that is
\begin{align}
\calI_{1,L}-\max\ppp{\calH_2\p{p},\calI_{1,E}} = \calI_{1,L} - \calH_2\p{p}.
\end{align}
Accordingly, this means that under the above specific choice of $p$ and $q_2$, the loss in the secrecy rate is attributed more to the fact that the sparsity pattern cannot be controlled, than due to the presence of a wiretapper. 
\begin{figure}[!t]
\begin{minipage}[b]{1.0\linewidth}
  \centering
  %\centerline{\includegraphics[width=17cm,height = 10cm]{Figures/Varable_Mis_Freq_Amplitude}}
	\centerline{\includegraphics[width=12cm,height = 9cm]{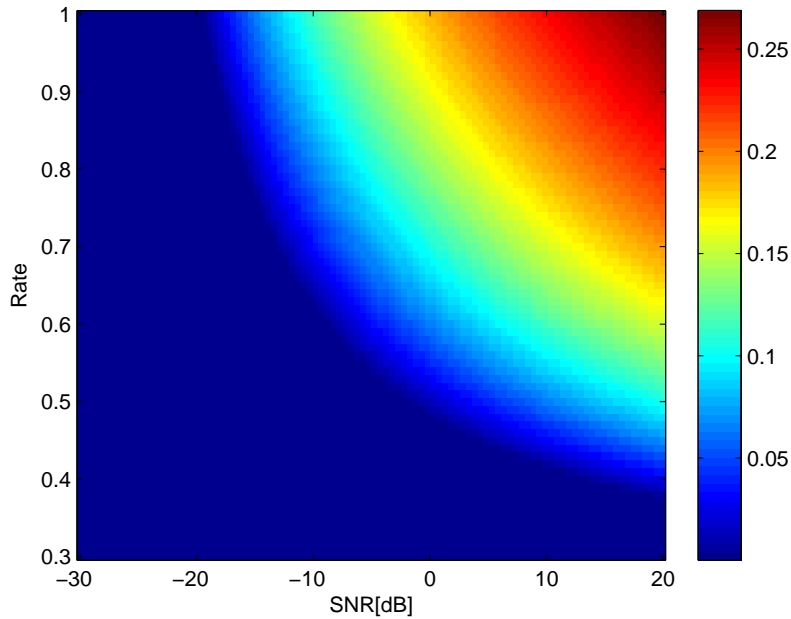}}
	
%  \vspace{2.0cm}
\end{minipage}
\caption{Secrecy rate in case of an uncontrolled sparsity pattern, as a function of $q_1$ and the SNR, for $p = 0.2$ and $q_2 = 0.3$.}
\label{fig:e}
\end{figure}
In order to illustrate the loss due to the wiretapper, we consider the following example. Figures \ref{fig:f} and \ref{fig:g} show, respectively, the achievable rate and $\calI_{1,L} - \calH_2\p{p}$, as a function of $q_1$ and the SNR, for $p = 0.2$ and $q_2 = 0.5$. In this case the eavesdropper has a strong processor, so it can process more measurements compared to the previous example. Accordingly, it is evident that in this case the wiretapper plays a role, and the loss in the secrecy rate is now more significant.  

\begin{figure}[!ht]
    \subfloat[\label{fig:f}]{%
      \includegraphics[width=0.47\textwidth]{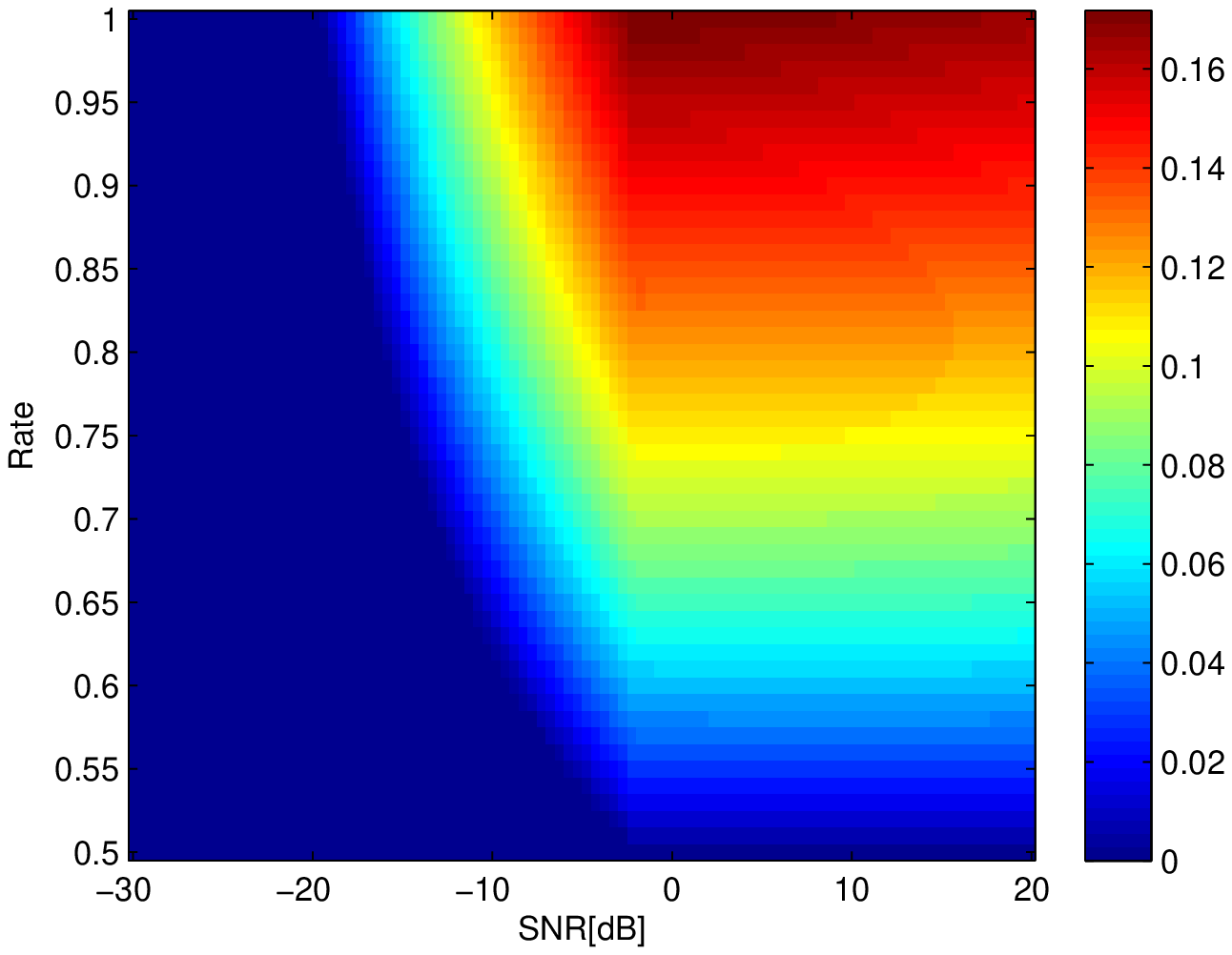}
    }
    \hfill
    \subfloat[\label{fig:g}]{%
      \includegraphics[width=0.47\textwidth]{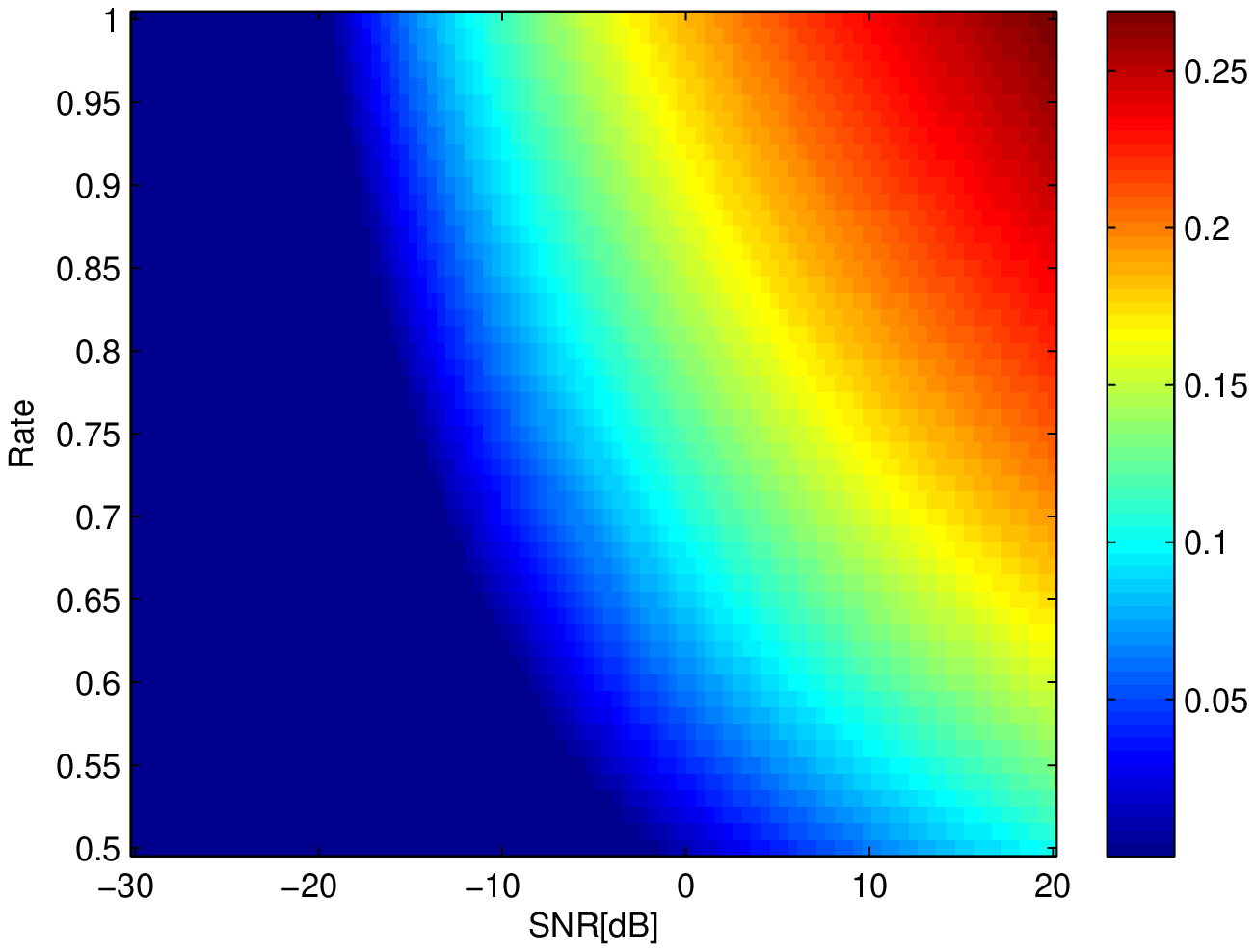}
    }
    \caption{(a) Secrecy rate and (b) $\calI_{1,L}-\calH_2\p{p}$ in case of an uncontrolled sparsity pattern as a function of $q_1$ and the SNR, for $p = 0.2$ and $q_2 = 0.5$.}
   \label{fig:dummy}
\end{figure}

\section{The Multiple Access Channel}\label{sec:MAC}

In this section, we consider the symmetric\footnote{The symmetry is in the sense that all the users transmit at equal power levels.} MAC settings \cite{cover}, in which several senders send information to a common receiver. In our case, we have the following setting: The sequence $\ppp{U_i}$ are now the signals corresponding to different non-cooperative remote users, and the constraint is that on the average, one cannot employ more than $pn$ transmit antennas. The pattern sequence is assumed to be i.i.d. Here, the $i$th user can control the signal $U_i$, as well as $S_i$ (adhering, of course, to the rule that $\pr\p{S_i=1}=p$). We have the following result.
\begin{theorem}[MAC]\label{th:app11}
Consider the MAC under the aforementioned assumptions, and let $\p{R_1,\ldots,R_n}$ denote the rates of the $n$ users. Then,
\begin{align}
\calR_\alpha \leq \p{1-\alpha}^{-1}\calI_{1,\alpha}
\end{align}
where $\calR_\alpha$ is the sum-rates of $n\p{1-\alpha}$ users (no matter which ones, due to symmetry), where $0\leq \alpha< 1$, and $\calI_{1,\alpha}$ equals to $\calI_1$ but with $p$ replaced by $\p{1-\alpha}p$. Particularity, the sum-rates (corresponding to $\alpha=0$) is given by $\calI_1$.
\end{theorem}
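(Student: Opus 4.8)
The plan is to view the $n(1-\alpha)$ active users as a multiple-access sub-system and to combine the standard MAC sum-rate converse with the observation that, once one conditions on (and subtracts off) the signals of the remaining $\alpha n$ users, what is left is an instance of the model of Section~\ref{sec:model} whose sparsity pattern has limiting magnetization $(1-\alpha)p$; Theorem~\ref{th:1} then identifies its total mutual information with $n\,\calI_{1,\alpha}+o(n)$.

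First I would use symmetry to simplify. Since all users transmit at equal power, the pattern is i.i.d.\ with $\pr\p{S_i=1}=p$ for every $i$, and $\bA,\bH$ have i.i.d.\ (hence exchangeable) entries, the quantity $\calR_\alpha$ is the same for every choice of $n(1-\alpha)$ users; fix this set to be $\calA=\ppp{1,\dots,m}$, $m\define n(1-\alpha)$, and write $\calA^c$ for the other $\alpha n$ users. Next I would invoke the usual MAC converse: Fano's inequality and the chain rule, single-letterized using that the uses of \eqref{GeneralModel} are conditionally i.i.d.\ given $\p{\bA,\bH}$, give
\begin{align}
\sum_{i\in\calA}R_i\;\le\;\max\; I\!\p{\bX_{\calA};\bY\,\big|\,\bX_{\calA^c},\bA,\bH},
\end{align}
the maximum being over admissible product input laws (each coordinate of the form $X_i=S_iU_i$ with $\pr\p{S_i=1}=p$ and $\bE\ppp{U_i^2}\le\sigma^2$). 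Writing $\bY=\bA\bH_{\calA}\bX_{\calA}+\bA\bH_{\calA^c}\bX_{\calA^c}+\bW$ and noting that $\bA\bH_{\calA^c}\bX_{\calA^c}$ is a known shift once the conditioning variables are fixed, and that mutual information is invariant under such shifts, this equals $I\!\p{\bX_{\calA};\bA\bH_{\calA}\bX_{\calA}+\bW\mid\bA,\bH}$, so the $\calA^c$-inputs drop out entirely.

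The crucial step is to evaluate this quantity. Embedding $\bX_{\calA}\mapsto\p{\bX_{\calA},\mathbf 0}\in\mathbb{R}^n$, it is exactly the input--output mutual information of the model \eqref{GeneralModel} with the \emph{same} $\bA,\bH$ (hence the same compression ratio $q$) and the \emph{same} $\sigma^2$, but with pattern $\bS'=\p{\bS_{\calA},\mathbf 0}$, whose magnetization $\frac1n\sum_{i\le m}S_i$ concentrates at $\frac mn p=(1-\alpha)p$. By the same symmetry and concavity reasoning used in Theorems~\ref{th:app2} and \ref{th:app3}, together with the Gaussianity of the optimal non-zero part, the maximizing product law is the i.i.d.\ Bernoulli-$p$-Gaussian one on the active coordinates, and Theorem~\ref{th:1} (equivalently Claim~\ref{th:tul}), whose expression depends on the pattern only through its dominant magnetization $m_a$, evaluated at $m_a=(1-\alpha)p$, yields the value $n\,\calI_{1,\alpha}+o(n)$. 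Dividing the resulting sum-rate bound by the number of active users $m=n(1-\alpha)$ --- the normalization of $\calR_\alpha$ in the statement --- gives $\calR_\alpha\le(1-\alpha)^{-1}\calI_{1,\alpha}$. For $\alpha=0$ this reduces to $\sum_{i=1}^nR_i\le\calI_1$, which is attained with equality by the scheme of Theorem~\ref{th:app1} (each user sending an i.i.d.\ Bernoulli-$p$-Gaussian codebook on its own coordinate, decoded jointly), accounting for the stated ``$\calI_1$'' for the full sum-rate.

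The Fano/chain-rule converse and the interference subtraction are routine, and the maximization over input laws is handled exactly as in the preceding theorems. The main obstacle is the identification of the reduced conditional mutual information with $n\,\calI_{1,\alpha}$: the reduced pattern $\bS'$ is \emph{not} i.i.d.\ Bernoulli-$(1-\alpha)p$ --- it consists of $m$ i.i.d.\ Bernoulli-$p$ coordinates padded with zeros --- so one has to check that zero-padding does not change the asymptotic rate, that the magnetization still concentrates at $(1-\alpha)p$, and that the random-matrix analysis of \cite{Wasim2} underlying Theorem~\ref{th:1} depends on the pattern only through this limiting (dominant) magnetization. I would also verify that the per-user rate normalization of $\calR_\alpha$ is the one producing the factor $(1-\alpha)^{-1}$ rather than $(1-\alpha)$.
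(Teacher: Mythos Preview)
Your proposal is correct and follows essentially the same route as the paper: invoke the MAC capacity region, condition on the $\alpha n$ inactive users so that their contribution becomes a known additive shift, identify the residual mutual information with the original model at magnetization $(1-\alpha)p$, and renormalize by $n(1-\alpha)$. The paper's own proof is considerably terser---it simply asserts that the conditional mutual information ``is equivalent as to examine $\calI_1$ but with $p\mapsto(1-\alpha)p$''---whereas you spell out the interference subtraction and, more importantly, flag the technical point that the zero-padded pattern $\bS'=(\bS_\calA,\mathbf 0)$ is not of the exact form \eqref{inputassmeas}; your proposed resolution (permutation invariance plus the fact that Theorem~\ref{th:1} depends on the pattern only through its dominant magnetization) is precisely what is needed, and is a step the paper silently takes for granted.
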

\begin{proof}
The case of $\alpha=0$ follows directly from the MAC capacity region \cite{cover}. For the second part, we wish to find the achievable rate of $n\p{1-\alpha}$ users, namely, in the MAC capacity region we condition on the signals produced by the other $n\alpha$ users, and the achievable is given by
\begin{align}
I\p{\bX_{\p{1-\alpha}};\bY\vert\bX_{\alpha},\bA,\bH}
\label{MACAchi}
\end{align}
where $\bX_{\alpha}$ (and similarly for $\bX_{\p{1-\alpha}}$) correspond to the $n\alpha$ users. This can be thought as
\begin{align}
\bY &= \bA\bH\bX + \bW\\
& = \bA\bH\bX_{\p{1-\alpha}} + \bA\bH\bX_{\alpha} +\bW,
\end{align}
and thus \eqref{MACAchi} is equivalent as to examine $\calI_1$ but with $p\mapsto\p{1-\alpha}p$. Finally, due to the fact that $\calI_1$ is normalized by $n$, we need to re-normalize the result by multiplying it by $(1-\alpha)^{-1}$.
\end{proof}

\section{Conclusions}
\label{sec:Conclusion}
In this paper, we examine the problem of sparse sampling of coded signals under several basic channel coding problems. In the first part, we present closed-form single-letter expressions for the input-output mutual information rates, assuming a compressed Gaussian linear channel model. These results are based on rigorous analytical derivations which agree with previously derived results of the replica method. In the second part, we present achievable rates in several channel coding problems, in the wiretap channel model, and in the multiple access channel (MAC). Specifically, for channel coding problem, we consider three cases that differ in the available knowledge of the transmitter and the receiver about the source, and particularity, regarding the sparsity pattern. The results quantify, for example, how important is it to be able to control the sparsity pattern. Also, we show that when this pattern can be controlled by the transmitted, then, a memoryless source maximizes the mutual information rate, given some sparsity average constraint. Then, we consider the wiretap channel model for which several cases were studied. The problems considered are timely and motivated by processing limitations, where the legitimate receiver has stronger processors, and hence can process more outputs/measurements, going via different jamming patterns, as well as cloud processing. Here, the results demonstrate, for example, our inherent limits in achieving some degree of secrecy as a function of the sampling rates of the legitimate user and the eavesdropper. Finally, in a fashion similar to the previous discussion, in case that the sparsity pattern can be controlled by the transmitter, we show that the secrecy rate cannot be increased by using sparsity patterns that are not memoryless. 

\appendices
\numberwithin{equation}{section}
\section{Proof Outline of Theorem \ref{th:1}}
\label{app:1}
In this appendix, we give a proof outline of Theorem \ref{th:1}. It should be emphasized that Theorem \ref{th:1} is a special case of the problem considered in \cite{Wasim2}, and here we emphasize the required modifications. The analysis consists of three main steps, which will be presented in the sequel, along with specific pointers to the proof in \cite{Wasim2}.

The first step in the analysis is to find a generic expression of the mutual information for fixed $k,n$. This is done by using a relationship between the mutual information and some partition function \cite{Neri2}. To this end, we define the following function,
\begin{align}
Z\p{\by,\bH,\bA} \define \int_{\mathbb{R}^n}\mu\p{\mathrm{d}\bx}\exp\pp{-\norm{\by-\bA\bH\bx}^2/2}.
\end{align}
According to our source model assumptions, the input distribution is given by
\begin{align}
\mu\p{\bx} = \sum_{\bst\in\ppp{0,1}^n}\pr\p{\bs}\prod_{i:\;s_i=0}\delta\p{x_i}\prod_{i:\;s_i=1}\frac{1}{\sqrt{2\pi\sigma^2}}e^{-\frac{1}{2\sigma^2}x_i^2}.
\end{align}
Now,
\begin{align}
I\p{\bY;\bX\vert\bA,\bH} &= \bE\ppp{\log\frac{\exp\p{-\norm{\bY-\bA\bH\bX}^2/2}}{Z\p{\by,\bH,\bA}}}\\
&=-\frac{1}{2}\bE\ppp{\norm{\bY-\bA\bH\bX}^2}-\bE\ppp{\log Z\p{\by,\bH,\bA}}\\
&=-\frac{n}{2}-\bE\ppp{\log Z\p{\bY,\bH,\bA}}.
\label{relationMIPART2}
\end{align}
Next, as shown in\footnote{In the notation of \cite{Wasim2}, $\bH$ and $\bH_{\bst}$ correspond to $\bA\bH$ and $\bA\bH_{\bst}$ in our notations.} \cite[Eqs. (57)-(64)]{Wasim2}
\begin{align}
Z\p{\by,\bA,\bH} &= \exp\p{-\frac{1}{2}\norm{\by}^2}\cdot \sum_{\bs\in\ppp{0,1}^n}\pr\p{\bs}\calG\p{\by,\bA,\bHtt}
\end{align}
where
\begin{align}
\calG\p{\by,\bA,\bHtt} \define \frac{\exp\ppp{\frac{1}{2}\by^T\bA\bHtt\calHs\bHtt^T\bA^T\by}}{\sqrt{\det\p{\sigma^2\bHtt^T\bA^T\bA\bHtt+\bI_{\bst}}}},
\label{ShnSte2}
\end{align}
where $\bHtt$ denotes the restriction of $\bH$ on the support $\calS = \ppp{i\in\mathbb{N}: S_i\neq0}$, and $\calHs\define \p{\bHtt^T\bA^T\bA\bHtt+\frac{1}{\sigma^2}\bI_{\bst}}^{-1}$. Thus,
\begin{align}
\frac{I\p{\bY;\bX\vert\bA,\bH}}{n} &= -\frac{1}{2}+\frac{1}{2}\pp{m_a\sigma^2q+1}-\frac{1}{n}\bE\ppp{\log \sum_{\bs\in\ppp{0,1}^n}\pr\p{\bs}\calG\p{\bY,\bA,\bHtt}}\nonumber\\
& = \frac{1}{2}\sigma^2m_aq-\frac{1}{n}\bE\ppp{\log \sum_{\bs\in\ppp{0,1}^n}\pr\p{\bs}\calG\p{\bY,\bA,\bHtt}},
\label{RelationNeri2}
\end{align}
and therefore, in view of \eqref{RelationNeri2}, we wish to calculate the limit
\begin{align}
\lim_{n\to\infty}\frac{1}{n}\bE\ppp{\log\mathscr{Z}_n\p{\bY,\bA,\bH}}\define\lim_{n\to\infty}\frac{1}{n}\bE\ppp{\log \sum_{\bs\in\ppp{0,1}^n}\pr\p{\bs}\calG\p{\by,\bA,\bHtt}}.
\label{qunawasim2}
\end{align}
This concludes the first step. Now, it can be seen from \eqref{ShnSte2} that \eqref{qunawasim2} contains terms that are recognized as an extended version of the Stieltjes and Shannon transforms \cite{Tul2} of the matrix $\bHtt^T\bA^T\bA\bHtt$. In the field of random matrix theory, there is a great interest in exploring the asymptotic behavior, and in particular finding the \emph{deterministic equivalent} of such transforms (see, for example, \cite{baisilbook,coulbook}). Evidently, under some conditions, it is well-known that these transforms asymptotically converge for a fairly wide family of matrices.

Following the last observation, in the second step, we show that these functions converge, with probability tending to one, as $n\to\infty$, to some random functions that are much easier to work with. Accordingly, the following lemma is essentially the core of our analysis; it provides approximations (which are asymptotically exact in the almost sure (a.s.) sense) of $\calG$ and \eqref{qunawasim2}. For simplicity of notations, we let $m_s \define n^{-1}\sum_{i=1}^ns_i$, and recall the auxiliary variables defined in \eqref{firstt}-\eqref{lastt}. The following lemma is proved in \cite[Appendix B, C]{Wasim2}.

\begin{lemma}[asymptotic equivalence]\label{lem:was1} Under the assumptions and definition presented earlier, the following relations hold in the almost sure (a.s.) sense:
\begin{align}
&\lim_{n\to\infty}\frac{1}{n}\ln\det\p{\sigma^2\bHtt^T\bA^T\bA\bHtt+\bItt} = m_s\bar{I}\p{m_s},\label{itm2}
\end{align}
and
\begin{align}
&\lim_{n\to\infty}\frac{1}{n}\pp{\by^T\bA\bHtt\calHs\bHtt^T\bA^T\by-f_{n}} = 0,\label{itm3}
\end{align}
where
\begin{align}
f_{n}\define 2\cdot V\p{m_s}\frac{\norm{\by}^2}{n}+2\cdot L\p{m_s}\frac{\norm{\bHtt^T\bA^T\by}^2}{n}\label{101eq}.
\end{align}
Finally, for large $n$ and $k$, and for $\p{\by,\bA,\bH}$-typical sequences, the function $\mathscr{Z}_n\p{\by,\bA,\bH}$ is lower and upper bounded as follows
\begin{align}
\mathscr{Z}_{-}\p{\by,\bA,\bH}\leq \mathscr{Z}_n\p{\by,\bA,\bH}\leq \mathscr{Z}_{+}\p{\by,\bA,\bH},
\end{align}
where
\begin{align}
&\mathscr{Z}_{\pm}\p{\by,\bA,\bH}\define C_n\cdot\sum_{\bs\in\ppp{0,1}^n}\exp\left\{n\left(\tilde{t}\p{m_s}+L\p{m_s}\frac{1}{n}\sum_{i=1}^n\abs{\by^T\bh_i}^2s_i\pm\varphi\right)\right\},
\label{partplusminus}
\end{align}
in which $C_n$ is the normalization constant in $\pr\p{\bs}$ (see \eqref{inputassmeas}), and
\begin{align}
\tilde{t}\p{m} \define f\p{m}-\frac{m}{2}\bar{I}\p{m}+V\p{m}\frac{\norm{\by}^2}{n},
\end{align}
and the fluctuation term $\varphi$ is typically lower and upper bounded by a vanishing term that is uniform in $\bs$, namely, $\abs{\varphi}\leq\calO\p{1/n}$\footnote{Physically, over the typical set, this fluctuation will not affect the asymptotic behavior of any \emph{intensive} quantity, namely, a quantity that does not depend on $n$ (e.g., the dominant magnetization).}. 
\end{lemma}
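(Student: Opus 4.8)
The plan is to reduce all three claims to standard deterministic‑equivalent theorems for the Shannon and Stieltjes transforms of sample‑covariance‑type matrices, together with concentration of quadratic and bilinear forms, reproducing the arguments of \cite{Wasim2}. The starting observation is that, since $\bA$ is a $0$–$1$ diagonal projection with $k$ ones, the matrix $\bG\define\bA\bHtt$ is — after deletion of its $n-k$ identically‑zero rows — a $k\times|\calS|$ matrix with i.i.d.\ zero‑mean, variance‑$1/n$ entries, while $\bHtt^T\bA^T\bA\bHtt=\bG^T\bG$ and $\calHs=(\bG^T\bG+\sigma^{-2}\bI)^{-1}$. The relevant aspect ratio (rows over columns) is $q/m_s$, and the positive root of the scalar fixed‑point equation for the companion Stieltjes transform of $\bG^T\bG$ at $-\sigma^{-2}$ is precisely $b(m_s)$ in \eqref{firstt}, with $g(m_s)=1+\sigma^2m_sb(m_s)$. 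For \eqref{itm2} one writes $\frac1n\ln\det(\sigma^2\bG^T\bG+\bI)=\frac{|\calS|}{n}\cdot\frac1{|\calS|}\ln\det(\sigma^2\bG^T\bG+\bI)$, recognizes the second factor as a Shannon transform of $\bG^T\bG$, invokes its a.s.\ deterministic equivalent from \cite{baisilbook,coulbook} (as in \cite{Wasim2}) — a.s.\ rather than in mean via bounded‑difference/Lipschitz concentration of linear spectral statistics — and a direct computation in terms of $b,g$ matches the limit to $m_s\bar I(m_s)$ with $\bar I$ as in \eqref{firs2tt}.

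\textbf{The bilinear form \eqref{itm3}.} The subtlety here is that $\by$ is \emph{not} independent of $\bHtt$. Since $x_i=s_iu_i$, one has $\bH\bx=\bHtt\bx_{\calS}$, hence $\by=\bG\bx_{\calS}+\bw$ with $\bx_{\calS}$ (Gaussian, variance $\sigma^2$) and $\bw$ (Gaussian, variance $1$) both independent of $\bG$. Using the push‑through identity $\bG\calHs\bG^T=\bI-\sigma^{-2}(\bG\bG^T+\sigma^{-2}\bI)^{-1}$, I would write $\by^T\bG\calHs\bG^T\by=\norm{\by}^2-\sigma^{-2}\by^T(\bG\bG^T+\sigma^{-2}\bI)^{-1}\by$ and expand the resolvent quadratic form in $\bx_{\calS}$, $\bw$, and the cross term: each diagonal term concentrates (by the trace lemma and bounded‑difference inequalities) around $\sigma^2$ or $1$ times a trace of a resolvent of $\bG\bG^T$ or $\bG^T\bG$, which converges to an explicit function of $b(m_s),g(m_s)$ (again using \eqref{itm2}‑type deterministic equivalents); the cross term has mean zero and is $\calO(\sqrt n)$, hence negligible after division by $n$. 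The same estimates pin down the a.s.\ limits of $\frac1n\norm{\by}^2$ and $\frac1n\norm{\bHtt^T\bA^T\by}^2=\frac1n\norm{\bG^T\by}^2$. One then verifies the algebraic identity that $2V(m_s)\norm{\by}^2+2L(m_s)\norm{\bG^T\by}^2$ has the same $\frac1n$‑limit as $\by^T\bG\calHs\bG^T\by$ — this is where the scalar fixed‑point relation defining $b(m_s)$ gets used, and $V,L$ in \eqref{lastt} are exactly the coefficients this bookkeeping produces — which gives \eqref{itm3}.

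\textbf{The sandwich on $\mathscr Z_n$.} Substituting \eqref{itm2}–\eqref{itm3} into \eqref{ShnSte2}, for each support $\bs$ one obtains $\ln\calG(\by,\bA,\bHtt)=\tfrac12 f_n(\bs)-\tfrac{n m_s}{2}\bar I(m_s)+n\epsilon_n(\bs)$ with $\epsilon_n(\bs)\to0$, and rewriting $\tfrac12 f_n(\bs)$ through $\tilde t$ turns this into $\ln\calG=n\tilde t(m_s)-nf(m_s)+L(m_s)\sum_{i}\abs{\by^T\bh_i}^2 s_i+n\epsilon_n(\bs)$. The crucial point is that the random‑matrix error $\epsilon_n(\bs)$ depends on $\bs$ only through the empirical fraction $m_s$, which takes only $n+1$ values; a union bound over these $\calO(n)$ events (each with exponentially small failure probability from the concentration estimates above) yields $\varphi\define\sup_{\bs}\abs{\epsilon_n(\bs)}=\calO(1/n)$ a.s.\ on the $(\by,\bA,\bH)$‑typical set. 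Multiplying $\calG$ by $\pr(\bs)=C_ne^{nf(m_s)}$ (recall \eqref{inputassmeas}) and summing over $\bs\in\{0,1\}^n$ then reproduces exactly the bounds $\mathscr Z_-\le\mathscr Z_n\le\mathscr Z_+$ with $\mathscr Z_\pm$ as in \eqref{partplusminus}.

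\textbf{Main obstacle.} The crux is the bilinear‑form step \eqref{itm3}: the statistical coupling between $\by$ and the random matrix $\bHtt$ forces the decomposition $\by=\bG\bx_{\calS}+\bw$ and a careful term‑by‑term resolvent analysis, and one must then verify the non‑obvious algebraic identity equating the resulting deterministic expression to the particular combination $2V\norm{\by}^2+2L\norm{\bG^T\by}^2$. A secondary technical point is making every concentration estimate uniform in $\bs$ (via the reduction to the $n+1$ possible values of $m_s$) so that the fluctuation $\varphi$ is genuinely $\calO(1/n)$ simultaneously over all supports, which is what legitimizes the sandwich.
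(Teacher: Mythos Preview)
Your high-level plan --- deterministic equivalents for the Shannon and Stieltjes transforms together with concentration of quadratic forms, in the spirit of the Bai--Silverstein method --- is indeed what the paper invokes (it defers the full argument to \cite{Wasim2} and names only that method). But there is a real gap in the step that is supposed to produce the sandwich. You assert that the random-matrix error $\epsilon_n(\bs)$ ``depends on $\bs$ only through the empirical fraction $m_s$'' and then union-bound over the $n+1$ possible magnetizations. That is only true in \emph{distribution}: by column exchangeability of $\bH$, the law of $\epsilon_n(\bs)$ depends only on $|\bs|$, but its \emph{realization} depends on which columns are selected, so two supports with the same $m_s$ produce different errors. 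A union bound over $n+1$ magnetizations therefore does not control the $2^n$ terms that appear in $\mathscr{Z}_n$; you would need concentration at rate beating $2^{-n}$ for every fixed $\bs$, which standard Lipschitz/Gaussian inequalities do not deliver at $\calO(1/n)$ precision. The route in \cite{Wasim2} is not a union bound at all but a direct uniform estimate coming from the Bai--Silverstein machinery (rank-one resolvent perturbation identities and martingale-difference decompositions), whose error terms are expressed through dimensions and operator norms rather than through the particular column subset chosen.

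A related issue affects your treatment of \eqref{itm3}. Your decomposition $\by=\bG\,\bx_{\calS}+\bw$, with your $\bG=\bA\bHtt$, silently identifies the support $\bs$ carried by $\bHtt$ with the \emph{true} support that generated $\by$. In the partition function one sums over all test supports $\bs$, and for $\bs$ different from the true support the matrix $\bHtt$ shares only the columns indexed by their intersection; the remaining columns of $\bHtt$ are independent of $\by$, while columns in the true support but outside $\bs$ enter $\by$ yet not $\bHtt$. Your resolvent expansion in $\bx_{\calS}$ and $\bw$ therefore does not apply as written to the generic term of the sum, and in particular does not by itself justify that the limiting coefficients are $V(m_s)$ and $L(m_s)$ for \emph{arbitrary} $m_s$. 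The argument in \cite{Wasim2} instead conditions on a typical $\by$ and analyzes the quadratic form as a function of the columns in the \emph{test} support via rank-one perturbation, which simultaneously yields the deterministic form $f_n$ in \eqref{101eq} and the uniform fluctuation bound needed for the sandwich.
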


The proof of Lemma \ref{lem:was1} is obtained by invoking recent powerful methods from random matrix theory, such as, the Bai-Silverstein method \cite{SilversteinBai}. Equipped with Lemma \ref{lem:was1}, our next and last step is to assess the exponential order of $\mathscr{Z}_{\pm}\p{\by,\bA,\bH}$ using large deviations theory. The following analysis can be found in detail in \cite[Appendix C]{Wasim2}. For completeness, we provide the main ideas here as well. 

First, note that $\mathscr{Z}_{\pm}\p{\by,\bA,\bH}$ can be equivalently rewritten as
\begin{align}
\mathscr{Z}_{\pm}\p{\by,\bA,\bH}  = C_n\cdot\sum_{m_s}\exp\ppp{n\p{\tilde{t}\p{m_s}\pm\varphi}}\hat{\mathscr{Z}}\p{\by,\bA,\bH,m_s}
\label{partplusemin}
\end{align}
where the summation is over $m_s\in\pp{0/n,1/n,\ldots,n/n}$, and
\begin{align}
\hat{\mathscr{Z}}\p{\by,\bA,\bH,m_s}\define\sum_{\bst:\;m\p{\bst} = m_s}\exp\p{L\p{m_s}\sum_{i=1}^n\abs{\by^T\bh_i}^2s_i}
\end{align}
where with slight abuse of notations, the summation is performed over sequences $\bs$ with magnetization, $m\p{\bs} \define n^{-1}\sum_{i=1}^ns_i$, fixed to $m_s$. For the sake of brevity, we will omit the $\pm$ sign. In the following, we will find the asymptotic behavior of $\hat{\mathscr{Z}}\p{\by,\bA,\bH,m_s}$, and then the asymptotic behavior of $\mathscr{Z}_{\pm}\p{\by,\bA,\bH}$. For $\hat{\mathscr{Z}}\p{\by,\bA,\bH,m_s}$, we will need to count the number of sequences $\ppp{\bs}$, having a given magnetization $m_s$, and also admit some linear constraint. Accordingly, consider the following set
\begin{align}
\mathcal{F}_\delta\p{\rho,m}\define\ppp{\bv\in\ppp{0,1}^n:\;\abs{\sum_{i=1}^nv_i-nm}\leq\delta, \;\abs{\sum_{i=1}^nv_iu_{i}-n\rho}\leq\delta}
\label{fcals}
\end{align}
where $\ppp{u_{i}}_{i=1}^n$ is a given sequence of real numbers. Thus, the above set contains binary sequences that admit two linear constraints. We will upper and lower bound the cardinality of $\mathcal{F}_\delta\p{\rho,m}$ for a given $\delta>0$, $m$, and $\rho$. Then, we will use the result in order to approximate $\hat{\mathscr{Z}}\p{\by,\bA,\bH,m_s}$. Using methods that are customary to statistical mechanics, we have the following result which is proved in \cite[Appendix C, eqs. (C.15)-(C.32)]{Wasim2}.
\begin{lemma}
For large $n$ and any $\tau>0$ the cardinality of $\mathcal{F}_\delta\p{\rho,m}$ is upper and lower bounded as follows
\begin{align}
\p{1-\tau}\calV_{-\delta}\leq\abs{\mathcal{F}_\delta\p{\rho,m}}\leq \calV_{\delta}
\end{align}
where
\begin{align}
\log\calV_{\pm\delta}\define \frac{1}{2}\p{\alpha^\circ\sum_{i=1}^nu_i-n\gamma^\circ} - \pp{\alpha^\circ\p{n\rho\mp\delta}-\gamma^\circ\p{nm\mp\delta}}+\sum_{i=1}^n\log\pp{2\cosh\p{\frac{\alpha^\circ u_i-\gamma^\circ}{2}}},
\label{Vdef}
\end{align}
in which $\alpha^\circ,\gamma^\circ$ are given by the solution of the following equations
\begin{align}
\rho = \frac{\delta}{n}+\frac{1}{2n}\sum_{i=1}^nu_{i}+\frac{1}{2n}\sum_{i=1}^n\tanh\p{\frac{\alpha^\circ u_{i}-\gamma^\circ}{2}}u_{i},
\label{saddleeq1}
\end{align}
and
\begin{align}
m = \frac{\delta}{n}+\frac{1}{2}+\frac{1}{2n}\sum_{i=1}^n\tanh\p{\frac{\alpha^\circ u_{i}-\gamma^\circ}{2}}.
\label{saddleeq2}
\end{align}
\end{lemma}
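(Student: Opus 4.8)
The plan is to prove the lemma as a large‑deviations (type‑counting) estimate: establish an exponential upper bound $\abs{\mathcal{F}_\delta\p{\rho,m}}\le\calV_\delta$ by an exponential Markov (Chernoff) inequality, and a matching lower bound $\abs{\mathcal{F}_\delta\p{\rho,m}}\ge\p{1-\tau}\calV_{-\delta}$ by a change‑of‑measure (tilting) argument combined with a concentration estimate. It is convenient first to note that, since $\sum_{v\in\ppp{0,1}}e^{\p{\alpha u-\gamma}v}=1+e^{\alpha u-\gamma}=2e^{\p{\alpha u-\gamma}/2}\cosh\p{\tfrac{\alpha u-\gamma}{2}}$, the quantity in \eqref{Vdef} can be rewritten as
\begin{align}
\log\calV_{\pm\delta}=\sum_{i=1}^n\log\p{1+e^{\alpha^\circ u_i-\gamma^\circ}}-\alpha^\circ\p{n\rho\mp\delta}+\gamma^\circ\p{nm\mp\delta},
\end{align}
i.e.\ a constrained log‑partition function of $n$ independent $\ppp{0,1}$‑valued spins with single‑site weight $e^{\alpha^\circ u_i-\gamma^\circ}$, minus the Legendre‑dual terms attached to the two linear constraints (up to the $\delta$‑slack).

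For the upper bound I would write $\abs{\mathcal{F}_\delta\p{\rho,m}}=\sum_{\bv\in\ppp{0,1}^n}\Ind\ppp{\abs{\sum_i v_iu_i-n\rho}\le\delta}\Ind\ppp{\abs{\sum_i v_i-nm}\le\delta}$ and, for multipliers $\alpha^\circ,\gamma^\circ$ of the appropriate sign, bound each indicator on its support by $\exp\p{\alpha^\circ\p{\sum_i v_iu_i-n\rho+\delta}}$ and $\exp\p{-\gamma^\circ\p{\sum_i v_i-nm+\delta}}$, respectively. The resulting sum factorizes over $i$ into $\prod_{i=1}^n\p{1+e^{\alpha^\circ u_i-\gamma^\circ}}$, giving $\abs{\mathcal{F}_\delta\p{\rho,m}}\le\calV_\delta$ for every admissible pair; choosing the pair minimizing the right‑hand side, the stationarity conditions $\partial_{\alpha^\circ}\log\calV_\delta=0$, $\partial_{\gamma^\circ}\log\calV_\delta=0$ are, after using $\tfrac{e^x}{1+e^x}=\tfrac12\p{1+\tanh\p{x/2}}$, exactly the saddle‑point equations \eqref{saddleeq1}--\eqref{saddleeq2}.

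For the lower bound, let $Q$ be the product measure on $\ppp{0,1}^n$ under which the $v_i$ are independent with $Q\p{v_i=1}=e^{\alpha^\circ u_i-\gamma^\circ}/\p{1+e^{\alpha^\circ u_i-\gamma^\circ}}$, where $\p{\alpha^\circ,\gamma^\circ}$ solves \eqref{saddleeq1}--\eqref{saddleeq2}; by construction the $Q$‑means of $\sum_i v_iu_i$ and of $\sum_i v_i$ equal $n\rho$ and $nm$ up to the $\delta$‑slack, so both lie inside the window defining $\mathcal{F}_\delta$ in \eqref{fcals}. Writing $\abs{\mathcal{F}_\delta\p{\rho,m}}=\bE_Q\pp{Q\p{\bV}^{-1}\Ind\ppp{\bV\in\mathcal{F}_\delta}}$ and using that on $\mathcal{F}_\delta$ one has $Q\p{\bv}^{-1}=\exp\p{\sum_i\log\p{1+e^{\alpha^\circ u_i-\gamma^\circ}}-\alpha^\circ\sum_i v_iu_i+\gamma^\circ\sum_i v_i}\ge\calV_{-\delta}$, we obtain $\abs{\mathcal{F}_\delta\p{\rho,m}}\ge\calV_{-\delta}\cdot Q\p{\mathcal{F}_\delta}$. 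It then remains to show $Q\p{\mathcal{F}_\delta}\ge1-\tau$ for all $n$ large, which follows from Chebyshev's inequality applied separately to $\sum_i v_i$ and $\sum_i v_iu_i$: their $Q$‑variances are $\calO\p{n}$ (using a uniform bound on $\max_i\abs{u_i}$, which for $u_i=\abs{\by^T\bh_i}^2$ holds on the typical set), so each sum stays inside its window with probability at least $1-\tau/2$.

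The main obstacle is the interplay between the two saddle‑point equations: one must show that for every feasible $\p{\rho,m}$ — i.e.\ $\p{\rho,m}$ in the interior of the set of values attainable by $\p{\tfrac1n\sum_i v_iu_i,\tfrac1n\sum_i v_i}$ — the system \eqref{saddleeq1}--\eqref{saddleeq2} admits a solution, and that the selected solution is the one giving the tightest bound. Strict convexity of $\p{\alpha,\gamma}\mapsto\sum_i\log\p{1+e^{\alpha u_i-\gamma}}$ makes the upper‑bound minimizer unique, but one still has to verify that the tilting in the lower bound can be centered inside the (narrow) window and that the Chebyshev error $\tau$ can be chosen uniformly over the class of sequences $\ppp{u_i}$ arising for typical $\p{\by,\bA,\bH}$; this uniformity, together with the careful bookkeeping of the signs of the $\delta$‑corrections so that the slack genuinely appears as $\mp\delta$ in $\calV_{\pm\delta}$, is where the technical care concentrates. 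The remaining work — checking that the stationarity conditions coincide with \eqref{saddleeq1}--\eqref{saddleeq2} and the elementary variance estimates — is routine.
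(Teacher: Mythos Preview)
Your approach is correct and is precisely the rigorous counterpart of the ``methods customary to statistical mechanics'' the paper invokes (the paper itself does not prove the lemma but defers to \cite{Wasim2}, where the count is carried out by a saddle-point evaluation of the constrained partition sum---i.e., exactly the Chernoff upper bound plus exponential-tilting lower bound that you outline). The caveats you flag---the sign-dependent $\pm\delta$ bookkeeping, existence/uniqueness of the saddle point via strict convexity of $(\alpha,\gamma)\mapsto\sum_i\log(1+e^{\alpha u_i-\gamma})$, and uniform Chebyshev concentration over typical $\{u_i\}$---are indeed where the work sits, but no idea is missing.
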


For the purpose of assessing the exponential behavior of $\hat{\mathscr{Z}}\p{\by,\bA,\bH,m_s}$, let us define $u_i = \abs{\by^T\bh_i}^2$. The main observation here is that $\hat{\mathscr{Z}}\p{\by,\bA,\bH,m_s}$ can be represented as 
\begin{align}
\hat{\mathscr{Z}}\p{\by,\bA,\bH,m_s} = 2^n\int_{\calD\subset\mathbb{R}}\exp\p{nL\p{m_s}\rho}\mathscr{C}_n\p{\mathrm{d}\rho}
\label{rhoesin}
\end{align}
where $\calD$ is the codomain\footnote{Note that we do not need to explicitly define $\calD$ simply due to the fact that the exponential term in \eqref{rhoesin} is concave (see \eqref{convexratemax}), and thus the dominating $\rho$ are the same over $\calD$ or over $\mathbb{R}$.} of $\rho$, and $\ppp{\mathscr{C}_n}$ is a sequence of probability measures that are proportional to the number of sequences $\bs$ with $\sum_{i=1}^ns_iu_{i}\approx n\rho$, and $\sum_{i=1}^ns_i\approx nm_s$. These probability measures satisfy the large deviations principle \cite{Dembo,Hollander}, with the following respective lower semi-continuous rate function
\begin{align}
I\p{\rho} = 
\begin{cases}
\log 2-n^{-1}\log \calV_{0}, &\text{if}\;\rho\in\calD\\
\infty, &\text{else}
\end{cases}
\end{align}
where $\calV_{0} \define \lim_{\delta\to0}\calV_\delta$ given in \eqref{Vdef}. Indeed, by definition, the probability measure $\mathscr{C}_n$ is the ratio between $\abs{\mathcal{F}_\delta\p{\rho,m_s}}$ and $2^n$ (the number of possible sequences). Thus, for any Borel set $\calB\subset\calD$, we have that $\lim_{n\to\infty}n^{-1}\log\mathscr{C}_n\p{\calB} = -I\p{\rho}$. Accordingly, due to it large deviations properties, applying Varadhan's theorem \cite{Dembo,Hollander} on \eqref{rhoesin}, one obtains
\begin{align}
\hat{\mathscr{Z}}\p{\by,\bA,\bH,m_s} &\to \exp\pp{n\p{\log 2+L\p{m_s}\rho^\circ-I\p{\rho^\circ}}}
\end{align}
where $\rho^\circ$ is given by (using the fact that the exponential term is convex)
\begin{align}
\rho^\circ& =\arg\max_{\rho\in\mathbb{R}}\ppp{\log 2+L\p{m_s}\rho-I\p{\rho}}\nonumber\\
&=\arg\max_{\rho\in\mathbb{R}}\ppp{L\p{m_s}\rho+n^{-1}\log\calV_0}.
\label{convexratemax}
\end{align}
The maximizer, $\rho^\circ$, is the solution of the following equation
\begin{align}
L\p{m_s}+\frac{1}{n}\frac{\partial}{\partial\rho}\log \calV_0 = 0.
\label{maxrho}
\end{align}
Now, it can be readily shown that (see, \cite[Appendix C, eqs. (C.40)-(C.42)]{Wasim2})
\begin{align}
\frac{1}{n}\frac{\partial}{\partial\rho}\log \calV_0 &= -\alpha^\circ.
\label{llmaxRho}
\end{align}
Thus, using \eqref{llmaxRho} and \eqref{maxrho}, we may conclude that $\alpha^\circ=L\p{m_s}$. Now,
\begin{align}
L\p{m_s}\rho^\circ+\left.n^{-1}\log \calV_0\right|_{\rho^\circ} &=m_s\gamma^\circ+\frac{1}{n}\sum_{i=1}^n\frac{L\p{m_s}u_{i}-\gamma^\circ}{2}+\frac{1}{n}\sum_{i=1}^n\log\pp{2\cosh\p{\frac{L\p{m_s} u_{i}-\gamma^\circ}{2}}}\nonumber\\
&\define \tilde{h}\p{\gamma^\circ,m_s}.
\label{hdefinre}
\end{align}
Therefore, 
\begin{align}
&\hat{\mathscr{Z}}\p{\by,\bA,\bH,m_s} \to\exp\p{n\tilde{h}\p{\gamma^\circ,m_s}}
\label{lassapp}
\end{align}
where $\gamma^\circ$ solves the following equation (see \eqref{saddleeq2})
\begin{align}
m_s = \frac{1}{2n}\sum_{i=1}^n\pp{1+\tanh\p{\frac{L\p{m_s}\abs{\by^T\bh_i}^2-\gamma^\circ}{2}}}.
\end{align}

Thus far, we approximated $\hat{\mathscr{Z}}\p{\by,\bA,\bH,m_s}$. Recalling \eqref{partplusemin}, the next step is to approximate $\mathscr{Z}_{\pm}\p{\by,\bA,\bH}$. Using \eqref{lassapp}, and applying once again Varadhan's theorem (or simply, the Laplace method \cite{NeriMono,Bruijn}) on \eqref{partplusemin}, one obtains that
\begin{align}
{\mathscr{Z}_\pm}\p{\by,\bA,\bH}  &= C_n\cdot\sum_{m_s}\exp\pp{n\p{\tilde{t}\p{m_s}\pm\varphi}}\hat{\mathscr{Z}}\p{\by,\bA,\bH,m_s}\\
&\exe C_n\cdot\exp\ppp{n\p{\tilde{h}\p{\gamma^\circ,m_s^\circ}+\tilde{t}\p{m_s^\circ}\pm\varphi}}
\label{asymZtildeti}
\end{align}
where the dominating $m_s^\circ$ is the saddle point, i.e., one of the solutions to the equation
\begin{align}
\frac{\partial}{\partial m}f\p{m}-\frac{1}{2}\bar{I}\p{m}-\frac{m}{2}\frac{\partial}{\partial m}\bar{I}\p{m}+\frac{\partial}{\partial m}V\p{m}\frac{\norm{\by}^2}{n}+\frac{\partial}{\partial m}\tilde{h}\p{\gamma^\circ,m}=0
\label{sattleh}
\end{align}
where we have used the fact that $\tilde{t}\p{m} = f\p{m}-m\bar{I}\p{m}/2+n^{-1}V\p{m}\norm{\by}^2$. Simple calculations reveal that the derivative of $h\p{\gamma^\circ,m}$ w.r.t. $m$ is given by
\begin{align}
\frac{\partial}{\partial m}\tilde{h}\p{\gamma^\circ,m}= \gamma^\circ+\frac{1}{2n}\sum_{i=1}^n\pp{1+\tanh\p{\frac{L\p{m}\abs{\by^T\bh_i}^2-\gamma^\circ}{2}}}\frac{\partial L\p{m}}{\partial m}\abs{\by^T\bh_i}^2.
\end{align}
Thus, substituting the last result in \eqref{sattleh}, we have that
\begin{align}
\gamma^\circ\p{m_s^\circ} =&-\frac{1}{2n}\sum_{i=1}^n\pp{1+\tanh\p{\frac{L\p{m_s^\circ}\abs{\by^T\bh_i}^2-\gamma^\circ}{2}}}\frac{\partial L\p{m_s^\circ}}{\partial m_s^\circ}\abs{\by^T\bh_i}^2-\frac{\partial}{\partial m_s^\circ}f\p{m_s^\circ}+\frac{1}{2}\bar{I}\p{m_s^\circ}\nonumber\\
&+\frac{m_s^\circ}{2}\frac{\partial}{\partial m_s^\circ}\bar{I}\p{m_s^\circ}-\frac{\partial}{\partial m_s^\circ}V\p{m_s^\circ}\frac{\norm{\by}^2}{n}.
\end{align}
So, hitherto, we obtained that the asymptotic behavior of $\tilde{Z}_{\pm}\p{\by,\bH,\bs}$ is given by \eqref{asymZtildeti}, and the various dominating terms are given by
\begin{subequations}
\begin{align}
&\gamma^\circ\p{m_s^\circ} =-\frac{1}{2n}\sum_{i=1}^n\pp{1+\tanh\p{\frac{L\p{m_s^\circ}\abs{\by^T\bh_i}^2-\gamma^\circ}{2}}}\frac{\partial L\p{m_s^\circ}}{\partial m_s^\circ}\abs{\by^T\bh_i}^2-\frac{\partial}{\partial m_s^\circ}f\p{m_s^\circ}+\frac{1}{2}\bar{I}\p{m_s^\circ}\nonumber\\
&\ \ \ \ \ \ \ \ \ \ +\frac{m_s^\circ}{2}\frac{\partial}{\partial m_s^\circ}\bar{I}\p{m_s^\circ}-\frac{\partial}{\partial m_s^\circ}V\p{m_s^\circ}\frac{\norm{\by}^2}{n},\\
&m_s^\circ = \frac{1}{2n}\sum_{i=1}^n\pp{1+\tanh\p{\frac{L\p{m_s^\circ}\abs{\by^T\bh_i}^2-\gamma^\circ}{2}}}.
\end{align}\label{saddlepoo}
\end{subequations}
Therefore, using \eqref{partplusemin} we obtain
\begin{align}
\lim_{n\to\infty}\frac{1}{n}\log \mathscr{Z}\p{\by,\bA,\bH} =\lim_{n\to\infty}\frac{1}{n}\log C_n +\lim_{n\to\infty}\pp{\tilde{h}\p{\gamma^\circ,m_s^\circ}+\tilde{t}\p{m_s^\circ}}.
\label{lastLimit}
\end{align}
The last thing that is left is to show a concentration property of the saddle point equations given in \eqref{saddlepoo}, and obtain instead the saddle point equations given in \eqref{magnetddd}, which will be also used to assess the limit in \eqref{lastLimit}. Accordingly, we finally obtain that 
\begin{align}
\lim_{n\to\infty}\frac{1}{n}\log \bE\ppp{\mathscr{Z}\p{\by,\bA,\bH}} = \lim_{n\to\infty}\frac{1}{n}\log C_n + h\p{\gamma^\circ,m_s^\circ}+t\p{m_s^\circ}.
\end{align}
This is done by using the theory of convergence of backwards martingale processes, and can be found in \cite[Appendix C, eqs. (C.73)-(C.97)]{Wasim2}. So, eventually, using the relation in \eqref{RelationNeri2}, we finally obtain that
\begin{align}
\lim_{n\to\infty}\frac{1}{n}I\p{\bY;\bX\vert\bA,\bH}&=\frac{1}{2}\sigma^2m_aq-\lim_{n\to\infty}\frac{1}{n}\log C_n-h\p{\gamma^\circ,m_s^\circ}-t\p{m^\circ}\\
&=\frac{1}{2}\sigma^2m_aq+\calH_2\p{m_a}+f\p{m_a}-h\p{\gamma^\circ,m_s^\circ}-t\p{m^\circ},
\end{align}
where in the last equality we have used \eqref{apriorimag} in order to calculate the limit $\lim_{n\to\infty}n^{-1}\log C_n$.
\section{Proof of Theorem \ref{th:app2}}
\label{app:2}
The first equality is obvious. First, by definition (see, \eqref{apriorimag}), $m_a$ is the solution of the following equation
\begin{align}
m_a = \frac{1}{2}\pp{1+\tanh\p{\frac{f'\p{m_a}}{2}}}.
\label{apriorimagnet}
\end{align}
Note that according to \eqref{theConstr}, $m_a = p$. Consider first a polynomial function
\begin{align}
f\p{x} = \sum_{k=1}^M\alpha_k\frac{x^k}{k}
\end{align}
for $x\in\pp{0,1}$, where $M>0$ is natural, and $\ppp{a_l}$ are parameters. Substituting $f$ in \eqref{DasymMMSE}, we see that maximizing $\calI_1$ amounts to maximizing the following function
\begin{align}
\kappa\p{\alpha_1,\ldots,\alpha_M} \define \sum_{k=1}^M\alpha_k\frac{m_a^k}{k} - \sum_{k=1}^M\alpha_k\frac{m_\circ^k}{k} - \tilde{t}\p{m_\circ}-h\p{\gamma_\circ,m_\circ}
\label{kappadef}
\end{align}
where
\begin{align}
\tilde{t}\p{m_\circ} \define t\p{m_\circ} - f\p{m_\circ}.
\label{ttildef}
\end{align}
Now, we take the partial derivative of $\kappa\p{\alpha_1,\ldots,\alpha_M}$ w.r.t. $\alpha_l$ for $1\leq l\leq M$, and readily obtain that
\begin{align}
\frac{\partial}{\partial \alpha_l}\kappa\p{\alpha_1,\ldots,\alpha_M} &= \frac{m_a^l}{l} - \frac{m_\circ^l}{l} - \sum_{k=1}^M\alpha_lm_\circ^{k-1}\frac{\partial m_\circ}{\partial \alpha_l} - \frac{\partial m_\circ}{\partial \alpha_l}\frac{\partial \tilde{t}\p{m_\circ}}{\partial m_\circ} - \frac{\partial h\p{\gamma_\circ,m_\circ}}{\partial \alpha_l}\\
& = \frac{m_a^l}{l} - \frac{m_\circ^l}{l} - \frac{\partial m_\circ}{\partial \alpha_l}\frac{\partial t\p{m_\circ}}{\partial m_\circ} - \frac{\partial h\p{\gamma_\circ,m_\circ}}{\partial \alpha_l}
\label{kappadiv}
\end{align}
where \eqref{kappadiv} follows from \eqref{ttildef}. Using \eqref{hhdef} we obtain
\begin{align}
\frac{\partial h\p{\gamma_\circ,m_\circ}}{\partial \alpha_l} &= \frac{\partial \gamma_\circ}{\partial \alpha_l}\p{m_\circ-\frac{1}{2}} + \gamma_\circ\frac{\partial m_\circ}{\partial \alpha_l} + \bE\ppp{\frac{1}{2}\frac{\partial L\p{m_\circ}}{\partial m_\circ}\frac{\partial m_\circ}{\partial \alpha_l}Q^2}\nonumber\\
&\ +\bE\ppp{\frac{1}{2}\tanh\p{\frac{L\p{m_\circ}Q^2-\gamma_\circ}{2}}\pp{\frac{\partial L\p{m_\circ}}{\partial m_\circ}\frac{\partial m_\circ}{\partial \alpha_l}Q^2-\frac{\partial\gamma_\circ}{\partial \alpha_l}}}\\
& = \gamma_\circ\frac{\partial m_\circ}{\partial \alpha_l}+\bE\ppp{K\p{Q,m_\circ,\gamma_\circ}\frac{\partial L\p{m_\circ}}{\partial m_\circ}\frac{\partial m_\circ}{\partial \alpha_l}Q^2}\label{lastResult}
\end{align}
where the last equality follows from \eqref{magnetDet} and the definition in \eqref{KtermFluc}. Thus, on substituting \eqref{lastResult} in \eqref{kappadiv}, one obtains
\begin{align}
\frac{\partial}{\partial \alpha_l}\kappa\p{\alpha_1,\ldots,\alpha_M} &= \frac{m_a^l}{l} - \frac{m_\circ^l}{l} - \frac{\partial m_\circ}{\partial \alpha_l}\frac{\partial t\p{m_\circ}}{\partial m_\circ} - \gamma_\circ\frac{\partial m_\circ}{\partial \alpha_l} -  \bE\ppp{K\p{Q,m_\circ,\gamma_\circ}\frac{\partial L\p{m_\circ}}{\partial m_\circ}\frac{\partial m_\circ}{\partial \alpha_l}Q^2}\nonumber\\
& = \frac{m_a^l}{l} - \frac{m_\circ^l}{l} - \frac{\partial m_\circ}{\partial \alpha_l}\pp{\gamma_\circ+\frac{\partial t\p{m_\circ}}{\partial m_\circ}+\bE\ppp{K\p{Q,m_\circ,\gamma_\circ}\frac{\partial L\p{m_\circ}}{\partial m_\circ}Q^2}}\nonumber\\
& = \frac{m_a^l}{l} - \frac{m_\circ^l}{l}
\label{kappadiv2}
\end{align}
where the last equality follows from \eqref{magnetDet1}. Setting the above derivatives (for $1\leq l\leq M$) to zero, we see that the stationary sequence of parameters $\ppp{\alpha_k}$ is determined by the solution of the equation
\begin{align}
m_a = m_\circ.
\label{stationarypoint}
\end{align}
To wit, this equation means that the optimal sequence is to be chosen such that the prior and the posterior magnetizations, namely, $m_a$ and $m_\circ$, respectively, be the same. Accordingly, using \eqref{kappadef} and \eqref{stationarypoint}, we obtain that 
\begin{align}
\left.\kappa\p{\alpha_1,\ldots,\alpha_M}\right|_{m_a = m_\circ}=- \tilde{t}\p{m_a}-h\p{\gamma_\circ,m_a},
\label{kappaAA}
\end{align}
which according to the definitions of $m_\circ$, $h\p{\gamma_\circ,m_a}$, and $\tilde{t}\p{m_a}$ given in \eqref{magnetddd}, \eqref{hhdef}, and \eqref{ttildef}, respectively, is a function of $f\p{\cdot}$ (or, equivalently of $\ppp{a_i}$) only through $f'\p{m_a}$. However, by \eqref{apriorimagnet}, we see that the average sparseness constraint fixes the value of $f'\p{m_a}$ to
\begin{align}
f'\p{m_a} = 2\cdot\arctan\p{2m_a-1}.
\label{divdic}
\end{align} 
Therefore, $\left.\kappa\p{\alpha_1,\ldots,\alpha_M}\right|_{m_a = m_\circ}$ given in \eqref{kappaAA} is essentially independent of the specific choice of $\ppp{a_l}$ that admit $m_a = m_\circ$. Now, in terms of $\ppp{\alpha_i}$, the solution to \eqref{stationarypoint} may not be unique. More importantly, there must be a solution corresponding to the memoryless source assumptions, as one can simply fix $\alpha_i=0$ for $2\leq i\leq M$, and then tune $\alpha_1$ such that \eqref{stationarypoint} holds true. Thus, due to the fact that $\calI_1$ is a concave functional w.r.t. $f\p{\cdot}$, we may conclude that this specific choice cannot decrease the maximal value of $\kappa\p{\cdot}$, and hence also that of $\calI_1$. Finally, using standard approximation arguments, since the above derivation is valid for any polynomial, one can approximate any function $f\p{\cdot}$ by using its Taylor series expansion, and obtain the same conclusion.

\section{Proof of Theorem \ref{th:app10}}
\label{app:3}
The first equality is obvious. The second equality is proved exactly in the same way as in the proof of Theorem \ref{th:app2}. Let us start with polynomial $f$ given by
\begin{align}
f\p{x} = \sum_{k=1}^M\alpha_k\frac{x^k}{k}
\end{align}
for $x\in\pp{0,1}$, where $M>0$ is natural, and $\ppp{a_l}$ are parameters. Then, substituting $f$ in \eqref{DasymMMSE}, we see that maximizing $\calI_{1,L} - \calI_{1,E}$ amounts to maximizing the following function (recall that $m_a$ is fixed under the average sparseness constraint)
\begin{align}
\kappa\p{\alpha_1,\ldots,\alpha_M} \define &- \sum_{k=1}^M\alpha_k\frac{m_{\circ,L}^k}{k} - \tilde{t}_L\p{m_{\circ,L}}-h_L\p{\gamma_{\circ,L},m_{\circ,L}} \nonumber\\
&+ \sum_{k=1}^M\alpha_k\frac{m_{\circ,E}^k}{k} - \tilde{t}_E\p{m_{\circ,E}}+h_E\p{\gamma_{\circ,E},m_{\circ,E}}
\label{kappadef2}
\end{align}
where the subscripts ``$L$" and ``$E$" are referring to the legitimate user and the eavesdropper, respectively. For example, $m_{\circ,L}$ and $m_{\circ,E}$ designate the posterior magnetizations of the legitimate and the eavesdropper users, respectively. Also, similarly to the notations used in the proof of Theorem \ref{th:app2}, we define
\begin{align}
\tilde{t}_L\p{m_{\circ,L}} \define t_L\p{m_{\circ_L}} - f\p{m_{\circ,L}},
\label{ttildef2}
\end{align}
and similarly for $\tilde{t}_E\p{m_{\circ,E}}$. Now, we take the partial derivative of $\kappa\p{\alpha_1,\ldots,\alpha_M}$ w.r.t. $\alpha_l$ for $1\leq l\leq M$, and similarly to \eqref{kappadiv}, we obtain that 
\begin{align}
\frac{\partial}{\partial \alpha_l}\kappa\p{\alpha_1,\ldots,\alpha_M} &= -\frac{m_{\circ,L}^l}{l} +\frac{m_{\circ,E}^l}{l}.
\label{kappadiv3}
\end{align}
Setting the above derivatives (for $1\leq l\leq M$) to zero, we see that the stationary sequence of parameters $\ppp{\alpha_k}$ is determined by the solution of the equation
\begin{align}
m_{\circ,L} = m_{\circ,E}.
\label{stationarypoint2}
\end{align}
To wit, this equation means that the optimal sequence is to be chosen such that the posterior magnetizations (of the legitimate user and the eavesdropper) be the same. Accordingly, using the last result and \eqref{kappadef}, we obtain that 
\begin{align}
\left.\kappa\p{\alpha_1,\ldots,\alpha_M}\right|_{m_{\circ,L} = m_{\circ,E}}= - \tilde{t}_L\p{m_{\circ,L}}-h_L\p{\gamma_{\circ,L},m_{\circ,L}} + \tilde{t}_E\p{m_{\circ,L}}+h_E\p{\gamma_{\circ,E},m_{\circ,L}},
\label{divdiff}
\end{align}
which according to the definitions of the various quantities in \eqref{divdiff} depends on $f$ (or, equivalently of $\ppp{a_i}$) only through its derivative $f'\p{m_{\circ,L}}$ (or, equivalently $f'\p{m_{\circ,E}}$). However, equation \eqref{stationarypoint2} essentially fixes the value of $f'\p{m_{\circ,L}}$, and thus $\left.\kappa\right|_{m_{\circ,L} = m_{\circ,E}}$ is independent of the specific choice of source parameters $\ppp{a_l}$ that admit $m_{\circ,L} = m_{\circ,E}$. Whence, using exactly the same arguments as in the proof of Theorem \ref{th:app2}, we conclude that the memoryless choice cannot decrease the maximal value of $\kappa\p{\cdot}$, and hence also that of $\calI_{1,L}-\calI_{1,E}$. 

\ifCLASSOPTIONcaptionsoff
  \newpage
\fi
\bibliographystyle{IEEEtran}
\bibliography{strings}
\end{document}